\newcommand{\openone}{\leavevmode\hbox{\small1\normalsize\kern-.33em1}}
\def\UrlSpecials{\do\~{\kern -.15em\lower .7ex\hbox{~}\kern .04em}} \catcode`~=13 
\newcommand{\nn}{\nonumber}
\newcommand{\calA}{\mathcal{A}}
\newcommand{\calB}{\mathcal{B}}
\newcommand{\calC}{\mathcal{C}}
\newcommand{\calD}{\mathcal{D}}
\newcommand{\calF}{\mathcal{F}}
\newcommand{\calL}{\mathcal{L}}
\newcommand{\calM}{\mathcal{M}}
\newcommand{\calP}{\mathcal{P}}
\newcommand{\calQ}{\mathcal{Q}}
\newcommand{\calR}{\mathcal{R}}
\newcommand{\calT}{\mathcal{T}}
\newcommand{\calU}{\mathcal{U}}
\newcommand{\calX}{\mathcal{X}}
\newcommand{\calY}{\mathcal{Y}}
\newcommand{\calZ}{\mathcal{Z}}
\newcommand{\bt}{\mathbf{t}}
\newcommand{\bT}{\mathbf{T}}
\newcommand{\bx}{\mathbf{x}}
\newcommand{\bX}{\mathbf{X}}
\newcommand{\by}{\mathbf{y}}
\newcommand{\bY}{\mathbf{Y}}
\newcommand{\bz}{\mathbf{z}}
\newcommand{\bZ}{\mathbf{Z}}
\newcommand{\rmc}{\mathrm{c}}
\newcommand{\rme}{\mathrm{e}}
\newcommand{\rmi}{\mathrm{i}}
\newcommand{\rmP}{\mathrm{P}}
\newcommand{\rmV}{\mathrm{V}}
\newcommand{\bbN}{\mathbb{N}}
\newcommand{\bbR}{\mathbb{R}}
\DeclareMathAlphabet{\mathbsf}{OT1}{cmss}{bx}{n}
\DeclareMathAlphabet{\mathssf}{OT1}{cmss}{m}{sl}
\DeclareSymbolFont{bsfletters}{OT1}{cmss}{bx}{n}  
\DeclareSymbolFont{ssfletters}{OT1}{cmss}{m}{n}
\DeclareMathSymbol{\bsfGamma}{0}{bsfletters}{'000}
\DeclareMathSymbol{\ssfGamma}{0}{ssfletters}{'000}
\DeclareMathSymbol{\bsfDelta}{0}{bsfletters}{'001}
\DeclareMathSymbol{\ssfDelta}{0}{ssfletters}{'001}
\DeclareMathSymbol{\bsfTheta}{0}{bsfletters}{'002}
\DeclareMathSymbol{\ssfTheta}{0}{ssfletters}{'002}
\DeclareMathSymbol{\bsfLambda}{0}{bsfletters}{'003}
\DeclareMathSymbol{\ssfLambda}{0}{ssfletters}{'003}
\DeclareMathSymbol{\bsfXi}{0}{bsfletters}{'004}
\DeclareMathSymbol{\ssfXi}{0}{ssfletters}{'004}
\DeclareMathSymbol{\bsfPi}{0}{bsfletters}{'005}
\DeclareMathSymbol{\ssfPi}{0}{ssfletters}{'005}
\DeclareMathSymbol{\bsfSigma}{0}{bsfletters}{'006}
\DeclareMathSymbol{\ssfSigma}{0}{ssfletters}{'006}
\DeclareMathSymbol{\bsfUpsilon}{0}{bsfletters}{'007}
\DeclareMathSymbol{\ssfUpsilon}{0}{ssfletters}{'007}
\DeclareMathSymbol{\bsfPhi}{0}{bsfletters}{'010}
\DeclareMathSymbol{\ssfPhi}{0}{ssfletters}{'010}
\DeclareMathSymbol{\bsfPsi}{0}{bsfletters}{'011}
\DeclareMathSymbol{\ssfPsi}{0}{ssfletters}{'011}
\DeclareMathSymbol{\bsfOmega}{0}{bsfletters}{'012}
\DeclareMathSymbol{\ssfOmega}{0}{ssfletters}{'012}
\newcommand{\tilC}{\tilde{C}}
\newcommand{\tilF}{\tilde{F}}
\newcommand{\hatW}{\hat{W}}
\newcommand{\hatx}{\hat{x}}
\newcommand{\hatX}{\hat{X}}
\newtheorem{theorem}{Theorem} 
\newtheorem{lemma}[theorem]{Lemma}
\newtheorem{definition}{Definition}
\newcommand{\blue}{\textcolor{blue}}
\begin{document}

\title{Exponential Strong Converse for Content Identification with Lossy Recovery}
\author{\IEEEauthorblockN{Lin Zhou,~\IEEEmembership{Student Member, IEEE}, Vincent Y.\ F.\ Tan,~\IEEEmembership{Senior Member, IEEE}, \\  Lei Yu, Mehul Motani,~\IEEEmembership{Fellow, IEEE}} \thanks{The authors (emails:  lzhou@u.nus.edu,~\{vtan, leiyu, motani\}@nus.edu.sg) are  with the Department of Electrical and Computer Engineering, National University of Singapore (NUS). V.~Y.~F.~Tan is also with the Department of Mathematics, NUS. } \thanks{This paper was presented in part at the 2017 International Symposium on Information Theory  (ISIT) in Aachen, Germany~\cite{ZhouTanMotaniContent}.}
}
\maketitle

\begin{abstract}
\blue{We revisit the high-dimensional content identification with lossy recovery problem (Tuncel and G\"und\"uz, 2014) and establish an exponential strong converse theorem.} As a corollary of the exponential strong converse theorem, we derive an upper bound on the joint identification-error and excess-distortion exponent for the problem. Our main results can be specialized to the biometrical identification problem~(Willems, 2003) and the content identification problem~(Tuncel, 2009) since these two problems are both special cases of the content identification with lossy recovery problem. We leverage the information spectrum method introduced by Oohama  and adapt the strong converse techniques therein to be applicable to the problem at hand.
\end{abstract}

\begin{IEEEkeywords}
Content identification, Lossy source coding, Biometrical identification, Exponential strong converse, Information spectrum method
\end{IEEEkeywords}

\vspace{-.05in}
\section{Introduction}

Have you ever wondered about the identity of a song after hearing only a short snippet? With limited information, it is sometimes difficult to identify the song or a distorted version of it, yet not impossible. In fact, there is an app called {\em Shazam} that does precisely this. There are three distinct steps in the process of identifying the song, namely, the enrollment phase, the identification phase and the lossy recovery phase (see Figure \ref{systemmodel}). In the enrollment phase, the database of songs is sought; in the identification phase, we would like to infer certain details about the song; and finally, in the recovery phase, we hope to recover (at least) a lossy version of the song. An information-theoretic model was put forth by Tuncel and G\"und\"uz~\cite{tuncel2014idenlossy} and they called this model the (high-dimensional) content identification problem with lossy recovery. This model is also applicable to other situations such as fingerprint identification~\cite{naini2014fingerprint} and video identification~\cite{yu2013videocid}. However, \cite{tuncel2014idenlossy} only established a weak converse.  In this paper, we revisit the content identification problem with lossy recovery and derive an exponential strong converse theorem.

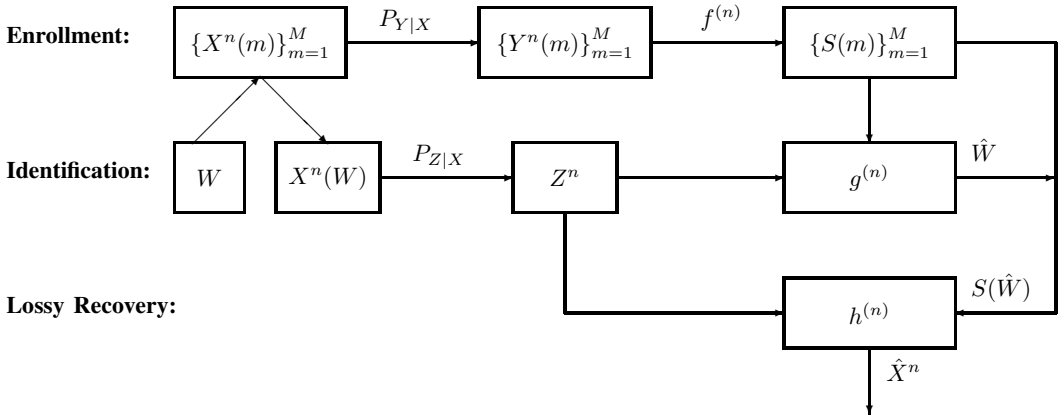
\begin{figure*}[t]
\centering
\setlength{\unitlength}{0.5cm}
\scalebox{0.9}{
\begin{picture}(30,13)
\linethickness{1pt}
\put(0,12){\makebox{\textbf{Enrollment:}}}
\put(5,11){\framebox(5,2){$\left\{X^n(m)\right\}_{m=1}^M$}}
\put(10,12){\vector(1,0){4}}
\put(11,12.5){\makebox{$P_{Y|X}$}}
\put(14,11){\framebox(5,2){$\left\{Y^n(m)\right\}_{m=1}^M$}}
\put(19,12){\vector(1,0){4}}
\put(20.5,12.5){\makebox{$f^{(n)}$}}
\put(23,11){\framebox(5,2){$\left\{S(m)\right\}_{m=1}^M$}}
\put(0,8){\makebox{\textbf{Identification:}}}
\put(5,7){\framebox(2,2){$W$}}
\put(5.5,9){\vector(1,1){2}}
\put(7.5,11){\vector(1,-1){2}}
\put(8,7){\framebox(3,2){$X^n(W)$}}
\put(11,8){\vector(1,0){4}}
\put(12,8.5){\makebox{$P_{Z|X}$}}
\put(15,7){\framebox(3,2){$Z^n$}}
\put(18,8){\vector(1,0){5}}
\put(23,7){\framebox(5,2){$g^{(n)}$}}
\put(25.5,11){\vector(0,-1){2}}
\put(28,8){\vector(1,0){3}}
\put(28.5,8.5){\makebox{$\hatW$}}
\put(0,4){\makebox{\textbf{Lossy Recovery:}}}
\put(16.5,7){\line(0,-1){3}}
\put(16.5,4){\vector(1,0){6.5}}
\put(23,3){\framebox(5,2){$h^{(n)}$}}
\put(28,12){\line(1,0){3}}
\put(31,12){\line(0,-1){8}}
\put(31,4){\vector(-1,0){3}}
\put(28.5,4.5){\makebox{$S(\hatW)$}}
\put(25.5,3){\vector(0,-1){2}}
\put(26,2){\makebox{$\hatX^n$}}
\end{picture}}
\caption{Content identification with lossy recovery~\cite{tuncel2014idenlossy}.}
\label{systemmodel}
\end{figure*}

\vspace{-.05in}
\subsection{Related Works}

The most related works are \cite{tuncel2009capacity} and \cite{tuncel2014idenlossy}. In \cite{tuncel2009capacity}, Tuncel characterizes the achievable rate region of the content identification problem. In \cite{tuncel2014idenlossy}, Tuncel and G\"und\"uz characterized the rate-distortion region for content identification problem with lossy recovery. Other (non-exhausting) works on the content identification problem are summarized as follows. Willems {\em et al.}~\cite{willems2003capacity} initiated the study of the content identification problem by characterizing the capacity of a biometrical identification problem. Dasarathy and Draper~\cite{dasarathy2014upper} derived upper and lower bounds on the error exponent of the content identification system where they assume the DMC $P_{Y|X}$ is a noiseless channel. Recently, Merhav~\cite{merhav2016reliability} refined the result in~\cite{dasarathy2014upper} by proposing a universal achievability scheme and showing that the scheme achieves the optimal exponent given by maximum likelihood decoding. Furthermore, Yachongka and Yagi~\cite{yagi2016} established the strong converse theorem for the biometrical identification problem. We remark that Yachongka and Yagi used Arimoto's strong converse technique~\cite{Arimoto}  which is different from the information spectrum method adopted in this paper. The main result of~\cite{yagi2016} is recovered as a by-product of our main result. Other works on content identification include  \cite{nani2012ci,nani2013adaboost,moulin2010ci,farhadzadeh2011ci,gunduz2009ci2database,farhadzadeh2014twostage,tuncel2012recognitionci}.

We also summarize the works by Oohama on using the information spectrum method to derive exponential strong converse theorems for several network information theory problems. In \cite{oohama2015wak,oohama2016new,oohama2016wynerziv}, Oohama derived exponential strong converses for \blue{the lossless source coding problem  with one-helper (i.e., the Wyner-Ahlswede-K\"orner (WAK) problem)~\cite{ahlswede1975,wyner1975}}, the asymmetric broadcast channel problem~\cite{korner1977abc}, and the Wyner-Ziv problem~\cite{wyner1976rate} respectively. Furthermore, Oohama's information spectrum method was also used recently by Yu and Tan~\cite{yu2017wyner} to derive an exponential strong converse theorem for Wyner's common information problem under the total variation distance measure~\cite{wyner1975ci}.

\vspace{-.05in}
\subsection{Main Contributions and Challenges}

For the content identification problem with lossy recovery, we first present a non-asymptotic converse bound. Invoking the non-asymptotic converse bound, we establish an upper bound on the probability of correct decoding in both the content identification index and the feature vector. By correct decoding of the feature vector, we mean that the reproduced feature vector is within certain distortion level under a distortion measure. Furthermore, we show that the probability of correct decoding decays exponentially fast to zero if the rate-distortion tuple falls outside the rate-distortion region by Tuncel and G\"und\"uz in \cite{tuncel2014idenlossy}. Hence, we establish an exponential strong converse theorem for the current problem. As a corollary, we derive an upper bound on the joint identification-error and excess-distortion exponent. Our results can be specialized to the biometrical identification problem~\cite{willems2003capacity} and the content identification problem~\cite{tuncel2009capacity}. In particular, for the biometrical identification problem, we derive the moderate deviations constant and the second-order coding rate.

In the rest of this subsection, we discuss the main challenges in establishing a strong converse theorem for the content identification problem with lossy recovery. First, we need to identify the correct form of the auxiliary random variables. As can be seen in the proofs in Section \ref{proofmainresult:idlossy}, the auxiliary random variables we choose are different from those in the weak converse proof~\cite{tuncel2014idenlossy}. If we choose the auxiliary variable as in the weak converse proof, we cannot establish the exponential strong converse result.

Second, the content identification problem with lossy recovery involves three phases: the enrollment phase, the identification phase and the lossy recovery phase. It is challenging to unify the analyses in different stages since the same auxiliary random variables are shared in all phases. Hence, we adopt ideas from \cite{oohama2016wynerziv,oohama2016new} which established strong converse theorems for the Wyner-Ziv problem and the degraded broadcast channel respectively. 

Third, in the identification phase, we need to use the whole random codebook and a noisy version of a certain feature vector to estimate the index of the feature vector (the identification index). This is very different from traditional channel coding and source coding problems. In source coding problems, we have only the codeword of a source sequence to decode while in channel coding problems, we have only the channel output for a particular message to decode. Hence, techniques like the image size characterization~\cite{csiszar2011information} and the perturbation approach~\cite{wei2009strong} are probably insufficient to establish a strong converse theorem for the current problem. As explained above, we adapt Oohama's strong converse techniques to deal with these challenges.

\vspace{-0.3cm}
\subsection{Organization of the Paper}
The rest of the paper is organized as follows. In Section \ref{sec:pformid}, we set up the notation, formulate the content identification problem with lossy recovery and recapitulate the existing results concerning the rate-distortion region. In Section \ref{sec:asymptotics}, we first present a non-asymptotic upper bound on the probability of correct decoding and then claim the exponential strong converse by studying the properties of the bound. As a corollary, we derive an upper bound on the joint identification-error and excess-distortion exponent. Our main results can be specialized to the content identification~\cite{tuncel2009capacity} and the biometrical identification~\cite{willems2003capacity} problems. 
Furthermore, for the biometrical identification problem, we derive the moderate deviations constant  and the second-order coding rate. The proof of our main result is presented in Section \ref{proofmainresult:idlossy}. Finally, we conclude the paper in Section \ref{sec:conclusion}. For seamless presentation of results, the proofs of all supporting lemmas are deferred to the appendices.

\section{Problem Formulation and Existing Results}
{\em Notation:}
Random variables and their realizations are in upper (e.g.,~$X$) and lower case (e.g.,\ $x$) respectively. Sets are denoted in calligraphic font (e.g.,\ $\mathcal{X}$). We use $\calX^{\mathrm{c}}$ to denote the complement of $\calX$ and $X^n:=(X_1,\ldots,X_n)$ is random vector of length $n$. We use $\mathbb{R}_+$ and $\bbN$ to denote the set of positive real numbers and integers respectively. Given a real number $a\in[0,1]$, we use $\bar{a}:=1-a$. Given two integers $a$ and $b$, we use $[a:b]$ to denote the set of all integers between $a$ and $b$. For quantities such as entropy and mutual information, we follow the notation in \cite{csiszar2011information}. The set of all probability distributions on $\calX$ is denoted as $\calP(\calX)$ and the set of all conditional probability distributions from $\calX$ to $\calY$ is denoted as $\calP(\calY|\calX)$.

\label{sec:pformid}
\subsection{Problem Formulation}
Let the random variables $(X,Y,Z,\hatX)$ take values in finite alphabets $\calX$, $\calY$, $\calZ$ and $\hat{\calX}$ respectively. Let $\calM:=\{1,\ldots,M\}$ and $\calL:=\{1,\ldots,L\}$. Let $d:\calX\times\hat{\calX}\to[0,\infty)$ be the distortion measure and let the distortion between $X^n$ and $\hatX^n$ be defined as $d(X^n,\hatX^n):=\frac{1}{n}\sum_{i=1}^n d(X_i,\hatX_i)$. Let the maximum distortion between $x\in\calX$ and $\hatx\in\hat{\calX}$ be $d^+$, i.e., $d^+:=\max_{x,\hatx}d(x,\hatx)$. Assume that each of the feature vectors $\{X^n(m)\}_{m\in\calM}$ is generated i.i.d.\ according to $P_X^n$. The content identification problem with lossy recovery is divided into three phases: the enrollment phase, the identification phase and the lossy recovery phase.  See Figure \ref{systemmodel}.

In the enrollment phase, for each $m\in\calM$, the noisy version $Y^n(m)$ of each feature vector $X^n(m)$ is observed, where $Y^n(m)$ is the output of passing $X^n(m)$ through a DMC with transition matrix $P_{Y|X}$ for $m\in\calM$, i.e.,
\begin{align}
P_{Y^n|X^n}(Y^n(m)|X^n(m))=\prod_{i=1}^n P_{Y|X}^n(Y_i(m)|X_i(m)).
\end{align}
 Subsequently, the observed noisy version of the feature vectors are compressed before stored in the database using a deterministic function
\begin{align}
f^{(n)}:\calY^n\to\calL:=\{1,\ldots,L\}.
\end{align} 
For convenience, let $S(m)=f(Y^n(m))$ for all $m\in\calM$.

In the identification phase, we are given an index $W$ which is uniformly generated  from the set $\calM$ and independent of $\{X^n(m),Y^n(m),S(m)\}_{m\in\calM}$. The index $W$  is unknown to the database users. Given $W$, database users observe $Z^n$, which is the output of passing the feature vector $X^n(W)$ through a DMC with transition matrix $P_{Z|X}$, i.e.,
\begin{align}
P_{Z|X}^n(Z^n(W)|X^n(W))=\prod_{i=1}^nP_{Z|X}(Z_i(W)|X_i(W)).
\end{align}
Note that $Z^n-X^n(W)-Y^n(W)$ forms a Markov chain. The user aims to identify $W$ using $Z^n$ and the compressed codebook $\{S(m)\}_{m\in\calM}$ using the following deterministic identification function:
\begin{align}
g^{(n)}:\calL^M\times\calZ^n\to\calM.
\end{align}
Let $\hatW:=g^{(n)}(S(1),\ldots,S(M),Z^n)$ be the estimate of the user. Given the deterministic decoding function $g^{(n)}$, we can define the following disjoint decoding regions 
\begin{align}
\nn&\calD(S(1),\ldots,S(M),W)\\*
&:=\{z^n:g^{(n)}(S(1),\ldots,S(M),z^n)=W\}.
\end{align}

Finally, in the lossy recovery phase, we need to reproduce the feature vector $X^n(W)$ in a lossy manner using $Z^n$ and $S(\hatW)$ with a deterministic function 
\begin{align}
h^{(n)}:\calL\times\calZ^n\to\hatX^n.
\end{align}
Let $\hatX^n=h(S(\hatW),Z^n)$ be the reproduced feature vector. Define the joint identification-error and excess-distortion probability as follows:
\begin{align}
\nn&\rmP_{\rm{e}}^{(n)}(f^{(n)},g^{(n)},h^{(n)},D)\\
&:=\Pr\big\{\hatW\neq W~\textrm{or}~d(X^n(W),\hatX^n)>D\big\}\\
\nn&=\prod_{m=1}^M\sum_{x^n(m), y^n(m), s(m)}   \Big(P_X^n(x^n(m))P_{Y|X}^n(y^n(m)|x^n(m))\\*
\nn&\qquad\qquad\qquad\qquad\times  1\Big\{s(m)=f^{(n)}(y^n(m))\Big\}\Big)\\*
\nn&\qquad\times \sum_{w=1}^M\frac{1}{M}\bigg(\sum_{\substack{(z^n,\hatx^n):~d(x^n(w),\hatx^n)>D\\\mathrm{or}~z^n\notin\calD(s(1),\ldots,s(M),w)}} P_{Z|X}^n(z^n|x^n(w))\\*
&\qquad\qquad\qquad\qquad\qquad\times 1\Big\{\hatx^n=h^{(n)}(s(w),z^n)\Big\}\bigg).\label{def:pejointerrorexcessdis}
\end{align}
Note that in \eqref{def:pejointerrorexcessdis} there are three sources of randomness: i) the randomness of feature vectors $x^n(m)\in\calX^n$ for each $m\in\calM$ in the enrollment phase; ii) the randomness of $w\in\calM$ in identification phase; iii) the randomness $y^n(m)\in\calY^n$ ($m\in\calM$) and $z^n(w)\in\calZ^n$  due to the two DMCs.

Throughout the paper, we will consider the source distribution being $P_X$ and the two DMCs with transition matrices $P_{Y|X}$ and $P_{Z|X}$. We will use $P_{XYZ}$ to denote $P_X\times P_{Y|X}\times P_{Z|X}$. In all the definitions, we will omit the dependence on distributions $P_X$, $P_{Y|X}$, and $P_{Z|X}$ for simplicity.

\subsection{Existing Results}
First, we define  the rate-distortion region.
\begin{definition}
\label{def:achratedistortion}
A rate-distortion triple $(R^{\rmi},R^{\rmc},D)$ is said to be $\varepsilon$-achievable if there exists a sequence encoding-decoding-reproduction functions $(f^{(n)},g^{(n)},h^{(n)})$ such that
\begin{align}
\liminf_{n\to\infty}\frac{1}{n}\log M&\geq R^{\rmi},\label{achdefRi}\\
\limsup_{n\to\infty}\frac{1}{n}\log L&\leq R^{\rmc},\label{achdefrc}\\
\limsup_{n\to\infty}\rmP_\rme^{(n)}(f^{(n)},g^{(n)},h^{(n)},D)&\leq \varepsilon.
\end{align}
The closure of all $\varepsilon$-achievable rate-distortion tuples is called the $\varepsilon$-rate-distortion region and denoted as $\calR(\varepsilon)$.
\end{definition}

Let
\begin{align}
\calR
&=\bigcap_{\varepsilon\in(0,1)}\calR(\varepsilon)\label{def:calr}.
\end{align}

In the following, we recall the rate-distortion region by Tuncel and G\"und\"uz~\cite[Theorem 1]{tuncel2014idenlossy}. We remark that their rate-distortion region appears to be identical with $\calR$ although it was derived under the average distortion criterion. 

Let $U$ be a random variable taking values in the alphabet $\calU$. Define a set of joint distributions on $\calX\! \times\! \calY\! \times\! \calZ\! \times\! \calU\! \times\! \hat{\calX}$ as
\begin{align}
\calP^*
\nn&:=\Big\{Q_{XYZU\hatX}:~|\calU|\leq |\calY|+2,~Z-X-Y-U,\\
\nn&\qquad\quad Q_X=P_X,~Q_{Y|X}=P_{Y|X},~Q_{Z|X}=P_{Z|X},\\
&\qquad\quad \hatX=\phi(U,Z)~\textrm{for~some}~\phi:\calU\times\calZ\to \hat{\calX}\Big\}\label{def:calPlossystar}.
\end{align}
Given $Q_{XYZU\hatX}$, let
\begin{align}
\!\!\!\!\calR(Q_{XYZU\hatX})
\nn=\Big\{(R^{\rmi},R^{\rmc},D):R^{\rmi}&\leq I(Q_U,Q_{Z|U})\\*
\nn\qquad R^{\rmc}-R^{\rmi}&\geq I(Q_{U|Z},Q_{Y|UZ}|Q_Z)\\*
\qquad D&\geq \mathbb{E}_{Q_{X\hatX}}\![d(X,\hatX)]\!\Big\},
\end{align}
and let
\begin{align}
\calR^*&:=\bigcup_{Q_{XYZU\hatX}\in\calP^*}\calR(Q_{XYZU\hatX})\label{def:calRstar}.
\end{align}
\begin{theorem}
\label{tuncelgidlossy}
The rate-distortion region $\calR$ satisfies
\begin{align}
\calR&=\calR^*.
\end{align}
\end{theorem}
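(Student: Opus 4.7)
The plan is to establish the two inclusions $\calR^*\subseteq\calR$ (achievability) and $\calR\subseteq\calR^*$ (converse) separately, since the theorem recapitulates a result from~\cite{tuncel2014idenlossy} but under the stronger joint identification-error and excess-distortion criterion adopted in our setup.

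For the forward part, I would use a random coding argument that combines list-decoding over the entire codebook with Wyner--Ziv-style binning. Fix a test channel $Q_{XYZU\hatX}\in\calP^*$ together with its reconstruction function $\phi$. Generate roughly $2^{n(I(Q_U,Q_{Y|U})+\eta)}$ i.i.d.\ auxiliary codewords $U^n(l)\sim Q_U^n$ and distribute them uniformly at random into $2^{nR^{\rmc}}$ bins. The enrollment encoder $f^{(n)}$ maps each $Y^n(m)$ to the bin index of a jointly $Q_{UY}$-typical $U^n(l_m)$; such an $l_m$ exists with high probability by the covering lemma. Given $Z^n$ and all bin indices $\{S(m)\}_{m=1}^M$, the identification decoder $g^{(n)}$ searches within each bin $S(m)$ for a $U^n$ jointly $Q_{UZ}$-typical with $Z^n$ and returns the unique $m$ for which this succeeds; the symbol-wise reconstruction $\hat{X}_i=\phi(U_i,Z_i)$ then implements $h^{(n)}$. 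Standard joint typicality analysis shows that the expected joint error probability vanishes whenever $R^{\rmi}<I(Q_U,Q_{Z|U})$, $R^{\rmc}-R^{\rmi}>I(Q_{U|Z},Q_{Y|UZ}|Q_Z)$, and $D>\mathbb{E}_{Q_{X\hatX}}[d(X,\hatX)]$. The cardinality bound $|\calU|\leq|\calY|+2$ is secured via the Fenchel--Eggleston--Carath\'eodory theorem.

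For the reverse inclusion $\calR\subseteq\calR^*$, the cleanest route is to invoke the exponential strong converse established later in Section~\ref{sec:asymptotics}: any rate-distortion triple lying strictly outside $\calR^*$ drives the joint identification-error and excess-distortion probability to one exponentially fast, so it cannot be $\varepsilon$-achievable for any $\varepsilon<1$ and is therefore excluded from $\calR=\bigcap_{\varepsilon\in(0,1)}\calR(\varepsilon)$. An alternative is the direct Fano-based weak converse of~\cite{tuncel2014idenlossy}, adapted to the excess-distortion criterion by first converting $\Pr\{d(X^n(W),\hatX^n)>D\}\to 0$ into a bound on $\mathbb{E}[d(X^n(W),\hatX^n)]$ using the finite maximum distortion $d^+$, and then applying single-letterization with an auxiliary of the form $U_i=(S(W),Z^{i-1},Y_{i+1}^n)$ together with a time-sharing variable.

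The principal obstacle lies not in the forward direction but in the strong converse invoked above. As highlighted in the introduction, the identification decoder consumes the \emph{entire} random codebook jointly with $Z^n$, so classical perturbation and image-size characterizations do not apply directly, and the auxiliary variable sufficient for the weak converse is too coarse to close the exponential gap. This difficulty is overcome by the information-spectrum method adapted from Oohama, with auxiliaries carefully chosen to jointly handle the enrollment, identification, and lossy recovery phases in a unified manner.
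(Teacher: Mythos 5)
The paper never actually proves Theorem~\ref{tuncelgidlossy}: it is stated in the ``Existing Results'' subsection and is quoted directly from \cite[Theorem~1]{tuncel2014idenlossy}, with the authors only remarking that the prior region ``appears to be identical with $\calR$ although it was derived under the average distortion criterion.'' There is therefore no ``paper's proof'' to match against; your proposal supplies a self-contained argument where the paper is content to cite. Your achievability sketch (covering $Y^n$ with $U^n$, random binning, joint-typicality search over all $M$ bins, then symbol-wise $\hat{X}_i=\phi(U_i,Z_i)$) is the Tuncel--G\"und\"uz construction and yields the constraints of $\calR(Q_{XYZU\hatX})$ as stated, so that direction is fine modulo the usual details.

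Two cautions on the converse. First, invoking Theorem~\ref{strongconverse} literally as stated risks an \emph{apparent} circularity: Theorem~\ref{strongconverse} is phrased ``if $(R^\rmi,R^\rmc,D)\notin\calR$ (recall Theorem~\ref{tuncelgidlossy})'', and the proof of Lemma~\ref{propFOmega}(i) passes through the set equality $\calR_{\mathrm{sh}}=\calR^*=\calR=\calR_{\mathrm{ran}}$ of \eqref{rsheqrran}. The circularity is resolvable because the analytical content of Lemma~\ref{propFOmega}(i) and Theorem~\ref{mainresult:idlossy} only ever touches the single-letter regions $\calR^*$, $\calR_{\mathrm{sh}}$, $\calR_{\mathrm{ran}}$ --- one can and should restate the strong converse as ``$(R^\rmi,R^\rmc,D)\notin\calR^*$ implies exponential decay of $\rmP_\rmc^{(n)}$,'' from which $\calR\subseteq\calR^*$ follows cleanly --- but you should flag this rather than invoke the theorem verbatim. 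Second, your alternative Fano route proposes the auxiliary $U_i=(S(W),Z^{i-1},Y_{i+1}^n)$; the paper explicitly records that the Tuncel--G\"und\"uz weak converse uses $U_i=(S(W),Z^{i-1},Z_{i+1}^n)$, and the paper's own strong converse uses $U_i=(S(W),Y^{i-1},Z_{i+1}^n)$ together with $V_i=(S(W),Z_{i+1}^n)$. Your mixed choice does not obviously produce the required Markov chains $Z-X-Y-U$ and $(X,Y)-(U,Z)-\hatX$ in single-letter form, so if you take the Fano route you should either use the Tuncel--G\"und\"uz auxiliary or verify the chain conditions carefully.
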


\section{Main Results: Exponential Strong Converse Theorem}
\label{sec:asymptotics}
\subsection{Preliminaries}
In this subsection, we present some definitions and a key lemma in order to be able to succinctly state the exponential strong converse theorem in Section \ref{sec:esconverse}.

\blue{Recall that $\bar{a}=1-a$ for $a\in[0,1]$.} Let 
\begin{align}
\calQ&:=\big\{Q_{XYZU\hatX}:|\calU|\leq |\calX||\calY||\calZ||\hat{\calX}|\big\}\label{def:calQlossy}.
\end{align}

\blue{
Given $(\alpha,\theta)\in\bbR_+^2$, $(\mu,\beta)\in[0,1]^2$ and a distribution $Q_{XYZU\hatX}\in \calQ$, define the following linear combination of the log-likelihood ratios
\begin{align}
\nn&\omega^{(\alpha,\mu,\beta)}_{Q_{XYZU\hatX}}(x,y,z,\hatx|u)\\*
\nn&:=\log\frac{Q_Y(y)}{P_Y(y)}+\log\frac{Q_{Z|YU}(z|y,u)}{P_{Z|Y}(x|y)}+\log\frac{Q_{X|YZU}(x|y,z,u)}{P_{X|YZ}(x|y,z)}\\*
\nn&\quad+\log\frac{Q_{XY|ZU\hatX}(\hatx|x,y,z,u)}{Q_{XY|ZU}(x,y|z,u)}+\alpha\bigg(\bar{\mu}\bar{\beta}\log\frac{Q_{YZ|U}(y,z|u)}{P_{YZ}(y,z)}\\*
&\quad+\bar{\mu}\beta\log\frac{Q_Z(z)}{Q_{Z|U}(z|u)}+\mu d(x,\hatx)\bigg).\label{def:omegaamb}  
\end{align}
 }

Also define the negative cumulant generating functions 
\begin{align}
\nn&\Omega^{(\alpha,\mu,\beta,\theta)}(Q_{XYZU\hatX})\\*
&:=\!-\log \mathbb{E}_{Q_{XYZU\hatX}}\!\Big[\!\exp\big(\!-\!\theta\omega^{(\alpha,\mu,\beta)}_{Q_{XYZU\hatX}}\!(X,Y,Z,\hatX|U)\big)\Big],
\label{def:omegaamblambdaQP}\\*
&\Omega^{(\alpha,\mu,\beta,\theta)}
:=\min_{Q_{XYZU\hatX}\in\calQ}\Omega^{(\alpha,\mu,\beta,\theta)}(Q_{XYZU\hatX}).
\label{def:omegaamblambdaP}
\end{align}
\blue{
Finally, define the large-deviation rate functions 
\begin{align}
\nn&F^{(\alpha,\mu,\beta,\theta)}(R^{\rmi},R^{\rmc},D)\\*
&:=\frac{\Omega^{(\alpha,\mu,\beta,\theta)}-\theta\alpha\Big(\bar{\mu}(\bar{\beta} R^{\rmc}-R^{\rmi})+\mu D\Big)}{1+5\theta+\theta\alpha\bar{\mu}(3-\beta)},\label{def:lossyFalphamubetalambda}\\*
\nn&F(R^{\rmi},R^{\rmc},D)\\*
&:=\sup_{(\alpha,\theta,\mu,\beta)\in\bbR_+^2\times[0,1]^2}F^{(\alpha,\mu,\beta,\theta)}(R^{\rmi},R^{\rmc},D)\label{def:lossyF}.
\end{align}
}

\begin{lemma}
\label{propFOmega}
The following   hold.
\begin{itemize}
\item [i)] If $(R^{\rmi},R^{\rmc},D)\notin\calR$, then
\begin{align}
F(R^{\rmi},R^{\rmc},D)>0;
\end{align}
\item[ii)] If $(R^{\rmi},R^{\rmc},D)\in\calR$, then
\begin{align}
F(R^{\rmi},R^{\rmc},D)=0.
\end{align}
\end{itemize}
\end{lemma}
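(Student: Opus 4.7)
My plan splits by direction. Since $\Omega^{(\alpha,\mu,\beta,0)}=-\log 1=0$ and the linear term in the numerator of $F^{(\alpha,\mu,\beta,\theta)}$ vanishes at $\theta=0$, the choice $\theta=0$ gives $F^{(\alpha,\mu,\beta,0)}=0$, so $F(R^{\rmi},R^{\rmc},D)\ge 0$ unconditionally. Thus (ii) reduces to showing $F\le 0$ on $\calR$ and (i) to exhibiting some $(\alpha,\mu,\beta,\theta)$ at which $F^{(\alpha,\mu,\beta,\theta)}>0$ off $\calR$. Both directions will be driven by a single-letter ``LP duality'' asserting that $\min_{Q\in\calQ}\bbE_Q[\omega^{(\alpha,\mu,\beta)}_{Q}]\le\alpha(\bar\mu(\bar\beta R^{\rmc}-R^{\rmi})+\mu D)$ for every $(\alpha,\mu,\beta)\in\bbR_+\times[0,1]^2$ if and only if $(R^{\rmi},R^{\rmc},D)\in\calR^*$.

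For the forward implication (which yields (ii)), I pick $Q^*\in\calP^*$ achieving the rate-distortion tuple by Theorem~\ref{tuncelgidlossy}, so $R^{\rmi}\le I_{Q^*}(U;Z)$, $R^{\rmc}-R^{\rmi}\ge I_{Q^*}(U;Y|Z)$, and $D\ge\bbE_{Q^*}[d(X,\hatX)]$. Because the marginals $Q_X=P_X$, $Q_{Y|X}=P_{Y|X}$, $Q_{Z|X}=P_{Z|X}$ hold, the Markov chain $Z-X-Y-U$ is satisfied, and $\hatX=\phi(U,Z)$ is deterministic, each of the four log-likelihood-ratio terms in $\omega^{(\alpha,\mu,\beta)}_{Q^*}$ has zero $Q^*$-expectation, leaving $\bbE_{Q^*}[\omega]=\alpha[\bar\mu\bar\beta I_{Q^*}(U;Y)-\bar\mu\beta I_{Q^*}(U;Z)+\mu\bbE_{Q^*}[d]]$ (using $I_{Q^*}(U;YZ)=I_{Q^*}(U;Y)$ since $U-Y-Z$). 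A convex-combination of the three region inequalities then dominates this by $\alpha(\bar\mu(\bar\beta R^{\rmc}-R^{\rmi})+\mu D)$. Combining with Jensen's inequality $-\log\bbE[e^{-\theta\omega}]\le\theta\bbE[\omega]$ and the min-bound $\Omega^{(\alpha,\mu,\beta,\theta)}\le\Omega^{(\alpha,\mu,\beta,\theta)}(Q^*)$ shows that the numerator of $F^{(\alpha,\mu,\beta,\theta)}$ is non-positive for every $(\alpha,\mu,\beta,\theta)$, finishing (ii).

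For the reverse implication (which yields (i)), convexity of $\calR^*$ (via time-sharing on $U$) together with the separating-hyperplane theorem produces, whenever $(R^{\rmi},R^{\rmc},D)\notin\calR^*$, a normal direction that, after a normalization step described below, takes the form $(-\bar\mu\alpha,\bar\mu\bar\beta\alpha,\mu\alpha)$ for some $(\alpha,\mu,\beta)\in\bbR_+\times[0,1]^2$. The resulting strict gap lower-bounds the derivative of the numerator of $F^{(\alpha,\mu,\beta,\theta)}$ in $\theta$ at $\theta=0$; since $\Omega^{(\alpha,\mu,\beta,\theta)}(Q)=\theta\bbE_Q[\omega]+O(\theta^2)$ uniformly on the finite alphabet, picking $\theta>0$ small enough gives $F^{(\alpha,\mu,\beta,\theta)}>0$ and hence $F>0$.

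The main obstacle I anticipate is the normalization step in the reverse direction. The Wyner--Ziv-type coupling $R^{\rmc}-R^{\rmi}\ge I(U;Y|Z)$ binds $R^{\rmi}$ and $R^{\rmc}$ through the \emph{same} auxiliary $U$, so the multipliers of an abstract separating hyperplane do not immediately land in the restricted parameter set $\bbR_+\times[0,1]^2$; a minimax exchange exploiting the fact that the log-likelihood-ratio penalties in $\omega$ push the minimizing $Q\in\calQ$ into $\calP^*$ will be required. The denominator $1+5\theta+\theta\alpha\bar\mu(3-\beta)$ stays strictly positive throughout and plays no role in the sign analysis; its role is only to calibrate $F$ against the non-asymptotic converse bound fed into Section~\ref{proofmainresult:idlossy}.
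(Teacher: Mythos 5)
Your part (ii) follows the paper's route essentially: pick $Q^*\in\calP^*$ witnessing membership in $\calR^*$, observe that the four log-likelihood-ratio terms in $\omega^{(\alpha,\mu,\beta)}_{Q^*}$ vanish $Q^*$-pointwise thanks to $Q^*_{XYZ}=P_{XYZ}$, the Markov chain $Z-X-Y-U$, and $\hatX=\phi(U,Z)$, bound the remaining $\alpha$-terms by the three region inequalities, and finish with Jensen ($\Omega^{(\alpha,\mu,\beta,\theta)}(Q^*)\le\theta\,\bbE_{Q^*}[\omega]$) plus $\calQ\supseteq\calP^*$. One small caution: when combining the three region inequalities you should track the residual $\bar{\mu}\bar{\beta}I_{Q^*}(U;Z)$ term carefully (using $\bar{\mu}\bar{\beta}I(U;Y|Z)-\bar{\mu}\beta I(U;Z)=\bar{\mu}\bar{\beta}I(U;YZ)-\bar{\mu}I(U;Z)$), as the paper's line that computes $\bbE_Q[\omega]$ writes $-\bar{\mu}I(U;Z)$, not $-\bar{\mu}\beta I(U;Z)$, and your ``convex combination dominates'' claim needs to be reconciled with exactly which coefficient multiplies $\log\frac{Q_Z}{Q_{Z|U}}$ in \eqref{def:omegaamb}.

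Part (i), however, is only a plan, and the gap you flag in your own last paragraph is precisely the substance of the paper's proof. Your strategy is: separating-hyperplane on the convex set $\calR^*$, normalize the normal vector to $(-\bar{\mu}\alpha,\bar{\mu}\bar{\beta}\alpha,\mu\alpha)$, differentiate the numerator of $F^{(\alpha,\mu,\beta,\theta)}$ at $\theta=0$, and take $\theta$ small. But $\frac{d}{d\theta}\Omega^{(\alpha,\mu,\beta,\theta)}\big|_{\theta=0}=\min_{Q\in\calQ}\bbE_Q[\omega^{(\alpha,\mu,\beta)}_Q]$, and $\calQ$ is unstructured (no Markov chain, no marginal constraints), so it is not at all clear that this minimum exceeds $\alpha(\bar{\mu}(\bar{\beta}R^{\rmc}-R^{\rmi})+\mu D)$; a $Q$ that inflates $I_Q(U;Z)$ while keeping the log-likelihood-ratio penalties small can drive $\bbE_Q[\omega]$ down below the separating-hyperplane threshold. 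The paper resolves this with Lemma \ref{propcalRshlossy}: a chain of H\"older inequalities that pushes the minimization over $\calQ$ down to the constrained class $\calP_{\rm{sh}}$ (yielding $F\ge\tilF$ with $\tilF$ defined via $\tilde{\Omega}^{(\lambda,\mu,\beta)}=\min_{P\in\calP_{\rm{sh}}}\tilde{\Omega}^{(\lambda,\mu,\beta)}(P)$), followed by a second-order Taylor expansion of $\tilde{\Omega}^{(\lambda,\mu,\beta)}$ with a uniform variance bound $\rho$, and the support-function characterization $\calR=\calR_{\rm{sh}}$ of the region, producing the explicit lower bound $\tilF(R^{\rmi},R^{\rmc},D)\ge\delta^2/(8\rho)>0$. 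Your ``minimax exchange exploiting the log-likelihood-ratio penalties'' would need to recreate that H\"older reduction; without it, the separating-hyperplane argument does not close. So for (i) you have correctly identified the shape of the argument (tilting plus small-$\theta$ Taylor expansion, as the paper also does), but the core technical lemma that makes the sign come out right is missing.
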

The proof of Lemma \ref{propFOmega} is similar to that of \cite[Property 4]{oohama2016wynerziv} and is given in Appendix \ref{proofpropFOmega}. We remark that Lemma \ref{propFOmega}, especially conclusion i), plays a central role in claiming the exponential strong converse theorem for the content identification problem with lossy recovery. As we will see shortly in Theorem \ref{mainresult:idlossy}, $F(R^{\rmi},R^{\rmc},D)$ in \eqref{def:lossyF} is a lower bound on the exponent of the probability of correct decoding.

\subsection{Exponential Strong Converse}
\label{sec:esconverse}
\begin{theorem}
\label{mainresult:idlossy}
For any encoding-decoding functions $(f^{(n)},g^{(n)})$ such that 
\begin{align}
\frac{1}{n}\log L&\leq R^{\rmc},~\frac{1}{n}\log M\geq R^{\rmi},
\end{align}
given any deterministic function $h^{(n)}$ and any distortion level $D$, we have
\begin{align}
\rmP_\rmc^{(n)}(f^{(n)},g^{(n)},h^{(n)},D)
&\leq 7\exp\big(-nF(R^{\rmi},R^{\rmc},D)\big)\label{expupppc}.
\end{align}
\end{theorem}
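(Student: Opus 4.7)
The plan is to adapt Oohama's information spectrum method, as developed for the Wyner--Ziv and asymmetric broadcast channel problems in \cite{oohama2016wynerziv,oohama2016new}, to the three-phase setting at hand. The starting point will be the non-asymptotic upper bound on $\rmP_\rmc^{(n)}(f^{(n)},g^{(n)},h^{(n)},D)$ (in terms of an information-density sum along a single sample path) that the excerpt mentions is presented at the beginning of Section \ref{sec:asymptotics}. By the uniformity of $W$ and the exchangeability of the codewords, I would first condition on $\{W=1\}$ and express the correct-decoding event as the intersection $\{Z^n \in \calD(S(1),\ldots,S(M),1)\} \cap \{d(X^n(1), h^{(n)}(S(1),Z^n)) \leq D\}$, reducing the analysis to the joint law of $(X^n(1), Y^n(1), S(1), Z^n)$ together with the other $M-1$ i.i.d.\ codewords.

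The next step is to choose an auxiliary random variable of Wyner--Ziv flavor, for instance $U_i := (S(W), Z^{i-1}, Y_{i+1}^n(W))$, which differs from the weak-converse choice in \cite{tuncel2014idenlossy}; this is the first technical challenge flagged in the introduction. Single-letterizing with a uniform time-sharing index $J$ identifies the empirical joint law of $(X_J, Y_J, Z_J, U_J, \hat{X}_J)$ with a candidate $Q_{XYZU\hatX} \in \calQ$ that automatically enforces the Markov chain $Z-X-Y-U$ appearing in $\calP^*$ in \eqref{def:calPlossystar}. The six log-likelihood pieces in $\omega^{(\alpha,\mu,\beta)}_{Q_{XYZU\hatX}}$ of \eqref{def:omegaamb} are tailored so that thresholds on their sum translate into the three defining inequalities of $\calR(Q_{XYZU\hatX})$: the $\alpha\bar\mu\beta \log(Q_Z/Q_{Z|U})$ piece encodes $I(U;Z) \geq R^{\rmi}$ for identification, the $\alpha\bar\mu\bar\beta \log(Q_{YZ|U}/P_{YZ})$ piece together with the Markov factors encodes $R^{\rmc}-R^{\rmi} \geq I(U;Y|Z)$ for compression, and the $\alpha\mu\,d(x,\hatx)$ piece encodes the excess-distortion event.

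The heart of the proof is then a two-part split of $\rmP_\rmc^{(n)}$: (a) the event that the empirical average of $\omega$ falls \emph{below} a threshold $\eta$, which is controlled by a change-of-measure inequality yielding a factor $e^{n\eta}$ against the rate/distortion budget $\bar\mu(\bar\beta R^{\rmc}-R^{\rmi}) + \mu D$; and (b) its complement, which is bounded by Markov's inequality with a tilting parameter $\theta>0$ to produce $\exp\!\bigl(n\Omega^{(\alpha,\mu,\beta,\theta)}(Q_{XYZU\hatX}) - \theta\eta\bigr)$. Minimizing over $Q \in \calQ$ yields $\Omega^{(\alpha,\mu,\beta,\theta)}$ from \eqref{def:omegaamblambdaP}. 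Choosing $\eta$ to equate the two exponents reproduces $F^{(\alpha,\mu,\beta,\theta)}(R^{\rmi},R^{\rmc},D)$ of \eqref{def:lossyFalphamubetalambda}, with the denominator $1+5\theta+\theta\alpha\bar\mu(3-\beta)$ tracking the six log-likelihood summands and the additional chain-rule slacks. Taking the supremum over $(\alpha,\theta,\mu,\beta) \in \bbR_+^2 \times [0,1]^2$ then yields $F(R^{\rmi},R^{\rmc},D)$, and the seven distinct additive contributions in the resulting union bound account for the constant $7$ in \eqref{expupppc}.

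The main obstacle I expect is matching Oohama's framework to all three phases simultaneously. In the weak converse one bounds each rate constraint separately via Fano-type inequalities, so the auxiliary can be tailored to each inequality in turn; in the strong converse, identification, compression, and distortion must all be controlled by a \emph{single} tilted log-likelihood $\omega$ and a single change of measure. This forces a specific ordering of past/future indices in $U_i$: any naive combination (in particular the one used in the weak converse) will fail to make all six telescoping terms in $\omega$ compatible with the Markov structure $Z-X-Y-U$ enforced by $\calP^*$, and hence will not yield the correct rate-distortion boundary. Verifying this compatibility---equivalently, delivering conclusion (i) of Lemma \ref{propFOmega} that $F>0$ strictly outside $\calR$---is the technical core of the argument.
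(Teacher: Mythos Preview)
Your high-level plan tracks the paper's: start from a non-asymptotic bound on $\rmP_\rmc^{(n)}$, apply a Cram\'er/Markov tilting, optimize over $(\alpha,\theta,\mu,\beta)$, and interpret the seven additive pieces as the source of the constant $7$. But there is a genuine gap in how you propose to single-letterize.

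You write that ``single-letterizing with a uniform time-sharing index $J$ identifies the empirical joint law of $(X_J,Y_J,Z_J,U_J,\hat X_J)$ with a candidate $Q_{XYZU\hatX}\in\calQ$'' and that Markov's inequality then produces $\exp\bigl(n\Omega^{(\alpha,\mu,\beta,\theta)}(Q)-\theta\eta\bigr)$. This step does \emph{not} go through. After the tilting you are left with a quantity of the form
\[
\log \mathbb{E}\Bigl[\exp\Bigl(-\lambda\sum_{i=1}^n \omega_i\Bigr)\Bigr],
\]
where each $\omega_i$ depends on $U_i$ (and, in the paper, on a second auxiliary $V_i$), and these $U_i$ overlap in $(S,Y^{i-1},Z_{i+1}^n)$. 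The summands are therefore neither i.i.d.\ nor exchangeable, and time-sharing does not convert the $n$-letter cumulant into $n$ copies of a single-letter one. The paper handles this with Oohama's \emph{recursive method} (Lemma~\ref{lbunderlineOmegalossy}): it writes $\exp(-\Omega^{(\alpha,\mu,\beta,\lambda)}(\{Q_i\}))$ as a product of one-step ratios $\Lambda_i$, chooses each $Q_i$ recursively to match a tilted conditional law, and then applies H\"older's inequality at each step to eliminate both the second auxiliary $V_i=(S,Z_{i+1}^n)$ and the conditional factor $P_{X_iY_i|Z_iU_i}/Q_{X_iY_i|Z_iU_i}$. This is precisely what produces the reparametrization $\theta=\lambda/(1-\lambda-\lambda\alpha\bar\mu\beta)$ and the denominator $1+5\theta+\theta\alpha\bar\mu(3-\beta)$ in \eqref{def:lossyFalphamubetalambda}; a time-sharing argument gives you no mechanism to generate that change of variables.

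Two smaller points are tied to this. First, the paper's auxiliary is $U_i=(S(W),Y^{i-1}(W),Z_{i+1}^n)$ together with $V_i=(S(W),Z_{i+1}^n)$; your single auxiliary $U_i=(S(W),Z^{i-1},Y_{i+1}^n(W))$ reverses the past/future roles and, more importantly, drops the second auxiliary entirely. The $V_i$ is what allows the identification-phase term $\log(P_{Z_i|V_i}/Q_{Z_i})$ to appear in $\calB_6(W)$ and then be absorbed via H\"older; without it you cannot combine the enrollment constraint ($\calA_5$) and the identification constraint ($\calA_6$) into the $\bar\beta R^{\rmc}-R^{\rmi}$ form. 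Second, your ``two-part split'' (threshold event vs.\ complement) is not how the $7$ arises: the paper defines seven sets $\calA_1(W),\ldots,\calA_7(W)$, bounds each of the six complements separately by $e^{-n\eta}$ via change-of-measure, and the intersection is the term to which Cram\'er is applied. The four ``extra'' sets $\calA_1$--$\calA_4$ are not slack; they are the log-likelihood corrections (the first four terms in \eqref{def:omegaamb}) that compensate for the gap between $\calP^*$ and $\calQ$, and they must be inserted at the non-asymptotic stage, not after single-letterization.
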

The proof of Theorem \ref{mainresult:idlossy} is given in Section \ref{proofmainresult:idlossy}. In the proof, we adapt the information spectrum method proposed by Oohama~\cite{oohama2016wynerziv,oohama2016new,oohama2015wak} to first establish a non-asymptotic upper bound on the probability of correct decoding. Invoking the upper bound (cf. Lemma \ref{fblidlossy2}) and applying Cram\'er's theorem on large deviations, we can further upper bound the probability of correct decoding. Subsequently, we proceed in a similar manner as \cite{oohama2016wynerziv,oohama2016new} to obtain the desired result. 

Second, we believe that both the image size characterization~\cite{csiszar2011information} and the perturbation approach~\cite{wei2009strong}   cannot lead to a strong converse theorem for the content identification problem with lossy recovery. The major difficulty lies in the fact that decoder needs to use the whole codebook $\calC=\{S(1),\ldots,S(M)\}$ and $Z^n$ to decode. Recall that $S(m)=f^{(n)}(Y^n(m))$ for $m\in\calM$. 

Invoking Lemma \ref{propFOmega} and Theorem \ref{mainresult:idlossy}, we conclude that the exponent in the right hand side of \eqref{expupppc} is strictly positive if the rate pairs are outside the rate-distortion region. Hence, we obtain the following exponential strong converse theorem.
\begin{theorem}
\label{strongconverse}
For any sequence of encoding-decoding-reproduction functions $(f^{(n)},g^{(n)},h^{(n)})$ such that 
\begin{align}
\limsup_{n\to\infty}\frac{1}{n}\log L&\leq R^{\rmc}\label{limsupl},~\liminf_{n\to\infty}\frac{1}{n}\log M\geq R^{\rmi},
\end{align}
given a distortion level $D$, we have that if $(R^{\rmi},R^{\rmc},D)\notin\calR$ (recall Theorem \ref{tuncelgidlossy}), then the probability of correct decoding vanishes to zero exponentially fast as $n$ goes to infinity.
\end{theorem}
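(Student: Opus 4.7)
The plan is to deduce the strong converse directly by combining the non-asymptotic bound of Theorem \ref{mainresult:idlossy} with the exponent positivity statement of Lemma \ref{propFOmega}(i). First I would translate the asymptotic rate hypotheses \eqref{limsupl} into non-asymptotic ones: for every $\delta>0$ there exists $N(\delta)$ such that $\frac{1}{n}\log L\leq R^{\rmc}+\delta$ and $\frac{1}{n}\log M\geq R^{\rmi}-\delta$ for all $n\geq N(\delta)$. Applied with these surrogate rates, Theorem \ref{mainresult:idlossy} yields
\begin{align*}
\rmP_\rmc^{(n)}(f^{(n)},g^{(n)},h^{(n)},D)\leq 7\exp\bigl(-n\,F(R^{\rmi}-\delta,\,R^{\rmc}+\delta,\,D)\bigr)
\end{align*}
for all sufficiently large $n$, where $\rmP_\rmc^{(n)} = 1-\rmP_\rme^{(n)}$ is the joint correct-identification and within-distortion probability.

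The heart of the argument is then to produce a single $\delta>0$ for which the exponent $F(R^{\rmi}-\delta,R^{\rmc}+\delta,D)$ is strictly positive. I would first note that $\calR$ is closed, because definition \eqref{def:calr} writes it as an intersection of the sets $\calR(\varepsilon)$, each of which is, by definition, the closure of achievable tuples. Hence, if $(R^{\rmi},R^{\rmc},D)\notin \calR$, its Euclidean distance to $\calR$ is some $r>0$. A triangle-inequality argument then shows that any $\delta\in(0,r/\sqrt{2})$ keeps the perturbed triple $(R^{\rmi}-\delta,R^{\rmc}+\delta,D)$ outside $\calR$, and Lemma \ref{propFOmega}(i) immediately yields $F(R^{\rmi}-\delta,R^{\rmc}+\delta,D)>0$. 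Fixing such a $\delta$, the previous display decays to $0$ exponentially in $n$, which is precisely the claimed conclusion.

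The only nontrivial point, and where I would spend most care, is the direction of the perturbation: increasing the compression-rate bound and decreasing the identification-rate bound both move the triple toward the achievable region, so one has to certify that the $\delta$-perturbation does not cross the boundary of $\calR$. This is exactly what closedness of $\calR$ buys, so the entire step reduces to a topological remark. Everything else—extracting non-asymptotic rate bounds from the $\liminf$/$\limsup$ hypotheses, invoking the already-established Theorem \ref{mainresult:idlossy}, and concluding exponential decay of $1-\rmP_\rme^{(n)}$—is routine. In particular, no additional information-spectrum, large-deviations, or typicality machinery beyond what already underlies Theorem \ref{mainresult:idlossy} and Lemma \ref{propFOmega} is required.
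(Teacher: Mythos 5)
Your proof is correct and follows the same essential route as the paper: combine the non-asymptotic bound of Theorem~\ref{mainresult:idlossy} with the positivity criterion of Lemma~\ref{propFOmega}(i). The paper dispenses with Theorem~\ref{strongconverse} in a single sentence and does not explicitly address the mismatch between the $\limsup$/$\liminf$ hypotheses of Theorem~\ref{strongconverse} and the for-every-$n$ hypotheses of Theorem~\ref{mainresult:idlossy}; your $\delta$-perturbation together with the observation that $\calR$ is closed (as an intersection of closures) and that moving $(R^{\rmi},R^{\rmc})$ to $(R^{\rmi}-\delta,R^{\rmc}+\delta)$ drifts toward $\calR$, is precisely the correct way to fill that gap, and the direction-of-perturbation concern you flag is the right one to worry about.
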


Invoking Theorem \ref{strongconverse}, we conclude that the $\varepsilon$-rate distortion region satisfies $\calR(\varepsilon)=\calR^*$ for all $\varepsilon\in [0,1)$. Adopting the one-shot technique introduced in \cite{watanabe2015}, we can also establish a non-asymptotic achievability bound. \blue{Applying the Berry-Esseen theorem to the achievability bound and analyzing the bound in Lemma \ref{mainresult:idlossy}, we can conclude that the backoff from the boundary of the first-order region at finite blocklengths is of the order $\Theta(n^{-1/2})$.}

\subsection{Upper Bound on the Joint Identification-Error and Excess-distortion Exponent}
\begin{definition}
\label{def:eexponent}
A non-negative number $E$ is said to be an $(R^{\rmi},R^{\rmc},D)$-achievable joint identification-error and excess-distortion exponent if there exists a sequence of encoding-decoding-reproduction functions $(f^{(n)},g^{(n)},h^{(n)})$ such that \eqref{limsupl} holds and 
\begin{align}
\liminf_{n\to\infty}-\frac{\log \rmP_\rme^{(n)}(f^{(n)},g^{(n)},h^{(n)},D)}{n}\geq E.
\end{align}
The supremum of all $(R^{\rmi},R^{\rmc},D)$-achievable error exponent is called the optimal error exponent and denoted as $E^*(R^{\rmi},R^{\rmc},D)$.
\end{definition}

Recall that $\calR$ (Definition \ref{def:achratedistortion}) is the rate-distortion region with respect to $P_X,P_{Y|X},P_{Z|X}$. For any $Q_X,Q_{Y|X},Q_{Z|X}$, let $\calR(Q_X,Q_{Y|X},Q_{Z|X})$ be the rate-distortion region with respect to $Q_X,Q_{Y|X},Q_{Z|X}$. Invoking Lemma \ref{propFOmega}, Theorem \ref{mainresult:idlossy} and applying Marton's change-of-measure technique~\cite{Marton74}, we derive an upper bound on $E^*(R^{\rmi},R^{\rmc},D)$.

\begin{theorem}
\label{upjeedexponent}
The the optimal joint identification-error and excess-distortion exponent function satisfies
\begin{align}
\nn&E^*(R^{\rmi},R^{\rmc},D)\\*
&\leq \inf_{\substack{Q_{XYZ}:Z-X-Y\\(R^{\rmi},R^{\rmc},D)\notin\calR(Q_X,Q_{Y|X},Q_{Z|X})}} D(Q_{XYZ}\|P_{XYZ})\label{uppexpoent}.
\end{align}
\end{theorem}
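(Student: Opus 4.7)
The plan is to apply Marton's change-of-measure technique, combining the non-asymptotic correct-decoding bound of Theorem \ref{mainresult:idlossy} with the positivity of $F$ outside $\calR$ furnished by Lemma \ref{propFOmega}. Fix any $Q_{XYZ}$ with $Z-X-Y$ and $(R^{\rmi},R^{\rmc},D)\notin\calR(Q_X,Q_{Y|X},Q_{Z|X})$; the goal is to show that $E^*(R^{\rmi},R^{\rmc},D)\leq D(Q_{XYZ}\|P_{XYZ})$ for every such $Q$, after which the claimed bound follows by taking the infimum over $Q$.

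First I would invoke Theorem \ref{mainresult:idlossy} in the ``$Q$-system'', where the same coding scheme $(f^{(n)},g^{(n)},h^{(n)})$ is run with the source and the two DMCs replaced by $Q_X,Q_{Y|X},Q_{Z|X}$; together with Lemma \ref{propFOmega}(i), this yields $\rmP_\rmc^{(n)}(Q)\leq 7 e^{-nF_Q}\to 0$, equivalently $\rmP_\rme^{(n)}(Q)\to 1$. Next I would relate the two error probabilities by Marton's inequality. Let $\bbQ^{(n)}$ be an alternative measure chosen so that the joint law of the queried triple $(X^n(W),Y^n(W),Z^n)$ is $Q_{XYZ}^n$; let $\mathcal{E}_n$ denote the joint error event, put $L_n:=\log(d\bbQ^{(n)}/d\bbP^{(n)})$, and, for $\delta>0$, set $\mathcal{T}_n:=\{L_n\leq n(D(Q_{XYZ}\|P_{XYZ})+\delta)\}$. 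The standard change-of-measure inequality then gives
\begin{align}
\rmP_\rme^{(n)}(P)
&\geq \bbP^{(n)}(\mathcal{E}_n\cap\mathcal{T}_n)\\
&\geq e^{-n(D(Q_{XYZ}\|P_{XYZ})+\delta)}\,\bbQ^{(n)}(\mathcal{E}_n\cap\mathcal{T}_n).
\end{align}
Since $\bbQ^{(n)}(\mathcal{T}_n)\to 1$ (by the weak law of large numbers, because $L_n$ is a single-letter sum -- see the next paragraph) and $\bbQ^{(n)}(\mathcal{E}_n)\to 1$ (by the strong converse in the $Q$-system), the right-hand side is $\Theta(e^{-n(D(Q_{XYZ}\|P_{XYZ})+\delta)})$, so $\liminf_n -\tfrac{1}{n}\log\rmP_\rme^{(n)}(P)\leq D(Q_{XYZ}\|P_{XYZ})+\delta$. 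Letting $\delta\downarrow 0$ and taking the infimum over $Q$ then yields \eqref{uppexpoent}.

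The main obstacle is the choice of $\bbQ^{(n)}$ so that $L_n$ is \emph{single-letter} $nD(Q_{XYZ}\|P_{XYZ})$ rather than something larger. Since $M=e^{\Theta(nR^{\rmi})}$, retilting all $M$ enrolled pairs from $P_{XY}^n$ to $Q_{XY}^n$ would inflate $\mathbb{E}_{\bbQ}[L_n]$ by a multiplicative factor of $M$, rendering the bound vacuous in the asymptotic regime. The correct choice is therefore to tilt only the queried triple $(X^n(W),Y^n(W),Z^n)$ to $Q_{XYZ}^n$, while leaving the remaining $M-1$ enrolled pairs at $P_{XY}^n$; the distractors then contribute nothing to $L_n$, so it concentrates at $nD(Q_{XYZ}\|P_{XYZ})$. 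The price of this choice is that one has to verify that the strong converse of Theorem \ref{mainresult:idlossy} continues to hold under the resulting hybrid law -- which is plausible because the information-spectrum quantities driving that converse depend only on the joint law of the queried triple and on the independence of the distractors from $Z^n$, both of which are preserved under the hybrid measure; making this last point rigorous (by re-running the proof of Theorem \ref{mainresult:idlossy} with the distractors in their hybrid law) is the key technical step.
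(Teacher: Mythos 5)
Your high-level structure — combining the strong converse of Theorem~\ref{mainresult:idlossy} (with positivity supplied by Lemma~\ref{propFOmega}) with Marton's change-of-measure inequality — is exactly the route the paper's brief remark indicates, and you are right that tilting the entire $M$-item database would incur a relative entropy of order $Mn$ and render the bound vacuous. The gap is in the final paragraph, where you assert that the strong converse of Theorem~\ref{mainresult:idlossy} ``plausibly'' carries over to the hybrid measure that tilts only the queried triple $(X^n(W),Y^n(W),Z^n)$ to $Q_{XYZ}^n$ while leaving the $M-1$ distractors at $P_{XY}^n$.

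This is not a routine verification that can be deferred; it fails in general. The step that breaks is the $\Delta_7$ estimate in the proof of Lemma~\ref{fblidlossy}, where the paper uses the fact that all $M$ labels $S(1),\dots,S(M)$ are \emph{identically distributed}: the factor $P(s^M)$ is pulled out of the sum over $w$, and disjointness gives $\sum_{s^M}P(s^M)\,Q\bigl(\bigcup_{w}\calD(s^M,w)\bigr)\leq 1$. Under your hybrid law, $S(W)$ has the $Q_Y$-induced marginal while the distractors have the $P_Y$-induced marginal, so the density of $S^M$ depends on which $w$ is tilted, the common factor cannot be pulled out, and the best available bound is $\Delta_7\leq e^{-n\eta}\tfrac{e^{nR^{\rmi}}}{M}\cdot M=e^{-n\eta}e^{nR^{\rmi}}$, which is useless. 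This is not merely a proof artifact: if $Q_Y\neq P_Y$, a decoder can first scan the $M$ stored labels for one that is $Q_S$-typical but $P_S$-atypical, output that index if found, and otherwise fall back to the $P$-optimal decision. Under $\bbP^{(n)}$ no label is $Q_S$-typical with high probability, so this decoder has the same error as the $P$-optimal one; under your hybrid measure it identifies $W$ correctly with probability tending to one. Hence $\tilde Q_e\not\to 1$ and the Marton step yields nothing for such $Q$. Your justification — that the information-spectrum quantities depend only on the joint law of the queried triple and the independence of the distractors from $Z^n$ — misses exactly this point: independence is preserved, but equidistribution of the labels, which the disjointness argument rests on, is not. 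To close the gap one must either restrict the infimum in \eqref{uppexpoent} to $Q_{XYZ}$ with $Q_Y=P_Y$ (for which the hybrid does leave the label marginals unchanged and the $\Delta_7$ bound survives verbatim, but this proves a weaker statement) or produce a tilting that keeps the $M$ label marginals exchangeable — and you have not indicated how to do either.
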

Our main results for content identification with lossy recovery (Theorems \ref{strongconverse} and \ref{upjeedexponent}) can be specialized to the biometrical identification problem~\cite{willems2003capacity}, the content identification problem~\cite{tuncel2009capacity} and the Wyner-Ziv problem~\cite{wyner1976rate} since all these problems are special cases of the content identification problem with lossy recovery as argued in \cite{tuncel2014idenlossy}.

\vspace{-.05in}
\subsection{Extensions for the Biometrical Identification Problem}

In this subsection, we present several extensions for the biometrical identification problem~\cite{willems2003capacity}. The capacity (the maximum rate) of the biometrical identification problem was characterized by Willems {\em et al.} in~\cite{willems2003capacity}. Furthermore, the exponential strong converse theorem for the biometrical identification problem has been established in \cite[Theorem 2]{yagi2016}.  

Compared to the content identification with lossy recovery problem, there is no compression (no $f^{(n)}$) and no lossy recovery phase (no $h^{(n)}$) in the biometrical identification problem. Thus, $S(m)=Y^n(m)$ for each $m\in\calM$. Hence, the error probability is
\begin{align}
\rmP_{\rme}^{(n)}(g^{(n)})&:=\Pr\{\hatW\neq W\}.
\end{align}

Let $C_{\mathrm{bio}}$ be the capacity of the biometrical identification problem. Then, it can be verified that
\begin{align}
C_{\mathrm{bio}}
&=\sup \{R^{\rmi}:(R^{\rmi},\log|\calY|,d^+)\in\calR\}\\
&=I(P_Y,P_{Z|Y}).
\end{align}

Define the exponents
\begin{align}
 \underline{E}_{\rm{bio}}(R^\rmi)&:=\sup_{\lambda>0}\frac{\lambda R^\rmi\! -\! \log\mathbb{E}\left[\exp\left(\lambda \log\frac{P_{Z|Y}(Z|Y)}{P_Z(Z)}\right)\right] }{{1+ \lambda}},\\
 \overline{E}_{\rm{bio}}(R^\rmi)&:=\sup_{\lambda>0} \lambda R^\rmi-\log\mathbb{E}\left[\exp\left(\lambda \log\frac{P_{Z|Y}(Z|Y)}{P_Z(Z)}\right)\right].
\end{align}

\begin{theorem}
\label{bioext1}
For any decoding function $g^{(n)}$, we have that 
\begin{align}
\rmP_\rmc^{(n)}(g^{(n)})\leq 2\exp\big(-n\underline{E}_{\rm{bio}}(R^\rmi)\big). 
\end{align}
Furthermore, there exists a decoding function $g^{(n)}$ such that 
\begin{align}
\rmP_\rmc^{(n)}(g^{(n)})\geq \frac{1}{2}\exp\big(-n\overline{E}_{\rm{bio}}(R^\rmi)\big). 
\end{align}
\end{theorem}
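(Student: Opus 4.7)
The biometrical identification problem is the special case of content identification with lossy recovery where no compression is applied ($S(m) = Y^n(m)$) and no reconstruction is required. The codebook $\{Y^n(m)\}_{m \in \calM}$ is an iid $P_Y^n$ ensemble, and the effective channel from feature to observation is $P_{Z|Y}(z|y) = \sum_x P_{X|Y}(x|y) P_{Z|X}(z|x)$ (marginalizing $X$ in the Markov chain $Z-X-Y$). The problem therefore reduces to random-coded channel coding over $P_{Z|Y}$ with codewords drawn iid from $P_Y^n$, and the two bounds in Theorem~\ref{bioext1} are, respectively, an information-spectrum converse and a threshold-decoder achievability on the exponent of the correct-decoding probability.

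For the upper bound, I would specialize the one-shot information-spectrum converse that underlies the proof of Theorem~\ref{mainresult:idlossy} (Lemma~\ref{fblidlossy2}) to the biometric setting; because there is no compression rate and no distortion to track, all auxiliaries collapse and the bound reduces to
\begin{align*}
\rmP_\rmc^{(n)}(g^{(n)}) \leq \Pr\!\Big(\sum_{i=1}^n \log\tfrac{P_{Z|Y}(Z_i|Y_i)}{P_Z(Z_i)} > \log M - n\gamma\Big) + e^{-n\gamma}
\end{align*}
for every $\gamma > 0$, with $(Y^n, Z^n) \sim P_{YZ}^n$. The Chernoff bound applied to the iid log-likelihood-ratio sum gives, for every $\lambda \geq 0$, an upper bound of $\exp(-n[\lambda(R^\rmi - \gamma) - E_0(\lambda)])$ on the first term, where $E_0(\lambda) := \log\mathbb{E}[\exp(\lambda\log\tfrac{P_{Z|Y}(Z|Y)}{P_Z(Z)})]$. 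Equating the two exponents via $\lambda(R^\rmi - \gamma) - E_0(\lambda) = \gamma$ produces $\gamma = \frac{\lambda R^\rmi - E_0(\lambda)}{1+\lambda}$; supping over $\lambda > 0$ yields $\gamma = \underline{E}_{\rm{bio}}(R^\rmi)$ and hence $\rmP_\rmc^{(n)}(g^{(n)}) \leq 2\exp(-n\underline{E}_{\rm{bio}}(R^\rmi))$.

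For the lower bound I would exhibit an explicit threshold decoder: output the unique $w$ (if any) for which $\sum_i \log\tfrac{P_{Z|Y}(z_i|y_i(w))}{P_Z(z_i)} > T$. Conditioning on $W = 1$ by symmetry, $(Y^n(1), Z^n) \sim P_{YZ}^n$ while each $Y^n(w)$ with $w \neq 1$ is independent of $Z^n$ with marginal $P_Y^n$. A Markov-inequality argument exploiting the identity $\mathbb{E}_{Y^n \sim P_Y^n}[P_{Z|Y}^n(z^n|Y^n)/P_Z^n(z^n)] = 1$ gives $\Pr(\sum_i \log\tfrac{P_{Z|Y}(Z_i|Y_i(w))}{P_Z(Z_i)} > T \mid Z^n) \leq e^{-T}$ uniformly in $z^n$. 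Choosing $T = \log M + \log 2$ makes the expected number of false positives at most $\tfrac{1}{2}$, so the product over $w \neq 1$ of ``no-false-positive'' events is at least $\tfrac{1}{2}$, yielding
\begin{align*}
\rmP_\rmc^{(n)}(g^{(n)}) \geq \tfrac{1}{2}\,\Pr_{P_{YZ}^n}\!\Big(\sum_i \log\tfrac{P_{Z|Y}(Z_i|Y_i)}{P_Z(Z_i)} > nR^\rmi + \log 2\Big).
\end{align*}

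The last step lower bounds this tail via a standard exponential tilting argument: introduce the tilted law $P^{(\lambda^*)}_{YZ}(y,z) \propto P_{YZ}(y,z)\exp(\lambda^* \log\tfrac{P_{Z|Y}(z|y)}{P_Z(z)})$ with $\lambda^*$ the maximizer in $\overline{E}_{\rm{bio}}(R^\rmi) = \lambda^* R^\rmi - E_0(\lambda^*)$, so that the log-likelihood-ratio has mean $E_0'(\lambda^*) = R^\rmi$ under the tilted law; Radon-Nikodym conversion combined with law-of-large-numbers concentration under the tilted law yields the claimed $\exp(-n\overline{E}_{\rm{bio}}(R^\rmi))$ scaling. \textbf{The main obstacle} is obtaining the clean prefactor $\tfrac{1}{2}$ in the lower bound rather than a $\Theta(1/\sqrt{n})$ polynomial factor: the naive Cram\'er-type lower bound pays a Bahadur-Rao-style $1/\sqrt{n}$ correction. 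I would address this by placing the threshold with a small $\Theta(\log n / n)$ rate slack (absorbed into a vanishing correction of $R^\rmi$) and using a sharp one-shot tilted-measure inequality in place of the asymptotic Cram\'er lower bound, so that the polynomial prefactor is absorbed into the exponent rather than the constant.
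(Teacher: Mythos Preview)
Your approach matches the paper's almost exactly. For the converse, the paper specializes Lemma~\ref{fblidlossy} to obtain precisely your information-spectrum bound, applies the Chernoff bound, and balances the two terms via $\eta=\frac{\lambda R^{\rmi}-E_0(\lambda)}{1+\lambda}$, just as you do. For the achievability, the paper does not build the threshold decoder by hand but instead invokes the one-shot technique of Yassaee \emph{et al.}~\cite{yassaee2013technique} to obtain
\[
\rmP_\rmc^{(n)}(g^{(n)})\;\ge\;\frac{1}{1+e^{-n\gamma}}\,\Pr\!\Big\{\tfrac{1}{n}\sum_i\log\tfrac{P_{Z|Y}(Z_i|Y_i)}{P_Z(Z_i)}\ge R^{\rmi}+\gamma\Big\},
\]
then sets $\gamma=0$ (giving the prefactor $\tfrac12$) and applies Cram\'er's theorem (Dembo--Zeitouni, Theorem~2.2.3) to the tail. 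Your explicit threshold construction with $T=\log M+\log 2$ reproduces exactly this bound with $\gamma=\tfrac{\log 2}{n}$, so the two routes are the same in substance.

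One remark in your favor: you correctly flag that a finite-$n$ Cram\'er lower bound carries a Bahadur--Rao $\Theta(1/\sqrt{n})$ prefactor, whereas the paper simply cites the asymptotic Cram\'er theorem and writes the clean $\tfrac12\exp(-n\overline{E}_{\rm bio}(R^{\rmi}))$ inequality. The paper's proof is thus slightly informal at that step; your proposed fix (absorbing a $\Theta(\log n/n)$ rate slack into the exponent) is a reasonable way to make the bound literally true for all $n$, though the paper does not carry this out.
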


It is easy to verify that $\underline{E}_{\rm{bio}}(R^\rmi)>0$ if $R^\rmi>C_{\rm{bio}}=I(P_Y,P_{Z|Y})$ and $\overline{E}_{\rm{bio}}=0$ if $R^\rmi\leq C_{\rm{bio}}$. Hence, the exponential strong converse theorem follows as a simple corollary. Although we cannot establish a  tight strong converse exponent, in the following, we present tight results on the moderate deviations constant. Let $\rmP_\rmc^*(n,R^\rmi)$ be the maximum probability of correct decoding when the number of items to be distinguished $M$  satisfies that $\log M\ge nR^\rmi$. Let
\begin{align}
\rmV:=\mathrm{Var}\left[\log\frac{P_{Z|Y}(Z|Y)}{P_Z(Z)}\right].
\end{align}
Throughout this section, we assume that $\rmV>0$.  Note that unlike the dispersion of a channel~\cite{hayashi2009information,polyanskiy2010finite}, $\rmV$ is the {\em unconditional information variance} instead of the optimized conditional information variance.

\begin{theorem}
\label{bioext2}
Consider any sequence of positive numbers $\{\xi_n\}_{n=1}^{\infty}$ such that $\xi_n\to 0$ and $\sqrt{n}\xi_n\to\infty$ as $n\to\infty$. When the rate $R^\rmi$ approaches capacity $C_{\rm{bio}}$ from above, the probability of correct decoding scales as
\begin{align}
\lim_{n\to\infty} -\frac{\log \rmP_\rmc^*(n,C_{\rm{bio}}+\xi_n)}{n\xi_n^2}=\frac{1}{2\rmV}\label{theorem:mdcsc}.
\end{align}
Similarly, when the rate $R^\rmi$ approaches capacity $C_{\rm{bio}}$ from below, the probability of correct decoding scales as
\begin{align}
\lim_{n\to\infty} -\frac{\log (1-\rmP_\rmc^*(n,C_{\rm{bio}}-\xi_n))}{n\xi_n^2}=\frac{1}{2\rmV}\label{theorem:mdc}.
\end{align}
\end{theorem}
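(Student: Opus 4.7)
The plan is to treat the two limits separately. For \eqref{theorem:mdcsc} (rates above capacity), I would invoke Theorem \ref{bioext1} directly to obtain the sandwich
\begin{align}
\tfrac{1}{2}\exp\bigl(-n\overline{E}_{\mathrm{bio}}(C_{\mathrm{bio}}+\xi_n)\bigr) \leq \rmP_\rmc^*(n,C_{\mathrm{bio}}+\xi_n) \leq 2\exp\bigl(-n\underline{E}_{\mathrm{bio}}(C_{\mathrm{bio}}+\xi_n)\bigr),
\end{align}
reducing the task to showing that both exponents equal $\xi_n^2/(2\rmV)(1+o(1))$. Setting $\Lambda(\lambda):=\log\mathbb{E}[\exp(\lambda\iota)]$ with $\iota:=\log(P_{Z|Y}(Z|Y)/P_Z(Z))$, one has $\Lambda(0)=0$, $\Lambda'(0)=C_{\mathrm{bio}}$ and $\Lambda''(0)=\rmV$, hence $\lambda(C_{\mathrm{bio}}+\xi_n)-\Lambda(\lambda)=\lambda\xi_n-\lambda^{2}\rmV/2+O(\lambda^{3})$ as $\lambda\to 0$. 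For $\overline{E}_{\mathrm{bio}}$ the maximizer is $\lambda^{*}=\xi_n/\rmV+O(\xi_n^{2})$ and the maximum value is $\xi_n^{2}/(2\rmV)+O(\xi_n^{3})$. The same expansion controls $\underline{E}_{\mathrm{bio}}$: since the extra factor $1/(1+\lambda^{*})$ tends to one in this regime, the two exponents coincide to leading order, and dividing by $n\xi_n^{2}$ produces \eqref{theorem:mdcsc}.

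For \eqref{theorem:mdc} (rates below capacity) both exponents in Theorem \ref{bioext1} vanish, so a separate argument is needed. I would exploit the observation that the biometrical identification setup is equivalent to channel coding over the effective channel $P_{Z|Y}$ with i.i.d.\ codewords $P_Y^{n}$, for which the information density $\iota^{n}(Y^{n}(W);Z^{n})=\sum_{i=1}^{n}\log(P_{Z|Y}(Z_i|Y_i(W))/P_Z(Z_i))$ is, under the joint law, a sum of $n$ i.i.d.\ random variables with mean $C_{\mathrm{bio}}$ and variance $\rmV$. The achievability direction uses the random-coding-union bound with the maximum information-density decoder,
\begin{align}
1-\rmP_\rmc^*(n,C_{\mathrm{bio}}-\xi_n) \leq \Pr\bigl\{\tfrac{1}{n}\iota^{n}(Y^{n}(W);Z^{n}) < C_{\mathrm{bio}}-\xi_n+\gamma_n/n\bigr\} + e^{-\gamma_n},
\end{align}
while the converse uses a Verd\'u--Han hypothesis-testing bound adapted to the random codebook ensemble,
\begin{align}
1-\rmP_\rmc^*(n,C_{\mathrm{bio}}-\xi_n) \geq \Pr\bigl\{\tfrac{1}{n}\iota^{n}(Y^{n}(W);Z^{n}) \leq C_{\mathrm{bio}}-\xi_n-\gamma_n/n\bigr\} - e^{-\gamma_n}.
\end{align}
Choosing $\gamma_n=n\xi_n^{3/2}$ ensures $\xi_n^{2}\ll\gamma_n/n\ll\xi_n$, so $e^{-\gamma_n}$ is of strictly lower exponential order and the threshold shift $\gamma_n/n$ is $o(\xi_n)$; applying Cram\'er's moderate-deviations theorem to the centred i.i.d.\ sums then yields \eqref{theorem:mdc}.

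The hard part will be the converse in the sub-capacity regime: Oohama's information-spectrum machinery underlying Theorem \ref{bioext1} is uninformative for $R^{\rmi}<C_{\mathrm{bio}}$ (both exponents there are zero), and Fano-type arguments only give polynomial lower bounds on $1-\rmP_\rmc^*$. Hence the tight exponentially-small lower bound must be supplied by the hypothesis-testing/meta-converse argument specialized to the random-codebook ensemble described above, appealing to tools outside the strong-converse framework developed earlier in the paper.
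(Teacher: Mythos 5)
Your proposal for \eqref{theorem:mdcsc} is correct but takes a route slightly different from the paper. You work with the optimized exponent functions $\underline{E}_{\rm{bio}}$ and $\overline{E}_{\rm{bio}}$ of Theorem \ref{bioext1} and Taylor-expand them around $\lambda=0$; the paper instead goes back one step, uses the underlying tail-probability bounds \eqref{bio1} and \eqref{bio2} with $\gamma=\zeta\xi_n$ and $\eta=\zeta\xi_n$, and applies the moderate-deviations theorem for i.i.d.\ sums directly to the events $\{\tfrac{1}{n}\sum_i\log\tfrac{P_{Z|Y}(Z_i|Y_i)}{P_Z(Z_i)}\gtrless C_{\rm bio}\pm(1\pm\zeta)\xi_n\}$, then lets $\zeta\to 0$. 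The two calculations are of course closely related (the moderate-deviations theorem is itself an optimized-tilt argument), but the paper's route avoids justifying that the supremum in $\underline{E}_{\rm bio}$ is attained at $\lambda^*=\Theta(\xi_n)$, which your proposal asserts without verification. That verification is routine but should be spelled out if you take your route.

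For \eqref{theorem:mdc} your achievability step matches the paper (it is precisely \eqref{bio11}, obtained from the one-shot random-coding bound \eqref{bio1}), but your converse step contains a misconception. You claim that ``Oohama's information-spectrum machinery underlying Theorem \ref{bioext1} is uninformative for $R^{\rmi}<C_{\rm bio}$'' and that one must import a Verd\'u--Han meta-converse specialized to a random-codebook ensemble. This is incorrect: the non-asymptotic bound \eqref{bio2},
\begin{align}
\rmP_\rmc^{(n)}(g^{(n)})\leq \Pr\Big\{\tfrac{1}{n}\sum_{i=1}^n\log\tfrac{P_{Z|Y}(Z_i|Y_i)}{P_Z(Z_i)}\geq R^\rmi-\eta\Big\}+e^{-n\eta},
\end{align}
holds for all $R^\rmi$, and taking complements gives $\rmP_\rme^{(n)}(g^{(n)})\geq \Pr\{\tfrac{1}{n}\sum_i\log\tfrac{P_{Z|Y}(Z_i|Y_i)}{P_Z(Z_i)}<R^\rmi-\eta\}-e^{-n\eta}$, which is exactly the exponentially small lower bound on the error probability that you declare must come from outside the framework. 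The only reason Theorem \ref{bioext1}'s exponents $\underline{E}_{\rm bio},\overline{E}_{\rm bio}$ vanish below capacity is that they bound $-\tfrac{1}{n}\log\rmP_\rmc^{(n)}$, not $-\tfrac{1}{n}\log\rmP_\rme^{(n)}$; the underlying information-spectrum inequality is just as informative in the sub-capacity regime. Moreover, the ``Verd\'u--Han hypothesis-testing bound adapted to the random codebook ensemble'' is not an off-the-shelf tool here --- the biometrical identification decoder observes the entire random codebook $\{Y^n(m)\}_m$, not just a single codeword, so a standard fixed-codebook meta-converse does not directly apply. Deriving such a bound would essentially reproduce Lemma \ref{fblidlossy} (restricted to $\calA_6$). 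So the missing ingredient you flagged as the ``hard part'' is in fact already in hand; you should use \eqref{bio22} together with \eqref{bio11}, pick $\eta,\gamma\asymp\zeta\xi_n$, invoke the moderate-deviations theorem, and send $\zeta\to 0$, as the paper does.
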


The result in \eqref{theorem:mdcsc} implies that even if the rate $R^\rmi$ approaches the capacity from above with speed $\xi_n$, the probability of correct decoding still vanishes to zero (subexponentially fast). Similarly, \eqref{theorem:mdc} implies that if the rate $R^\rmi$ approaches the capacity from below with speed $\xi_n$, then the error probability vanishes to zero (subexponentially fast).

We remark that the study of moderate deviations for DMCs was done by Altu\u g and Wagner~\cite{altugwagner2014} and also by Polyanskiy and Verd\'u~\cite{polyanskiy2010channel}. For certain classes of quantum channels, moderate deviations analysis (above and below capacity) was done by Chubb, Tan, and Tomamichel~\cite{chubb2017moderate}. For other works on moderate deviations, see \cite{tan2012moderate,tan2014moderate,zhou2016second,altug2013lossless,zhou2016moderate,zhou2017jscc,zhou2017refined,zhou2017non}.

In the following, we also present the tradeoff between the number of items to be distinguished and the error probability when it is non-vanishing. Let $M^*(n,\varepsilon)$ be the maximum number of items to be distinguished such that the error probability satisfies $\min_{g^{(n)}}\rmP_\rme^{(n)}(g^{(n)})\leq \varepsilon$. The second-order coding rate for the biometrical identification problem is defined as 
\vspace{-0.05in}
\begin{align}
L^*(\varepsilon):=\liminf_{n\to\infty}\frac{1}{\sqrt{n}} \big(\log M^*(n,\varepsilon)-nI(P_Y,P_{Z|Y}) \big).
\end{align}
\begin{theorem}
\label{bioext3}
For any $\varepsilon\in(0,1)$, the second-order coding rate for the biometrical identification problem satisfies 
\begin{align}
L^*(\varepsilon)=\sqrt{\rmV}\Phi^{-1}(\varepsilon).
\end{align}
\end{theorem}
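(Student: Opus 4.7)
The plan is to prove matching achievability and converse bounds on $L^*(\varepsilon)$, both yielding $\sqrt{\rmV}\,\Phi^{-1}(\varepsilon)$, by viewing the biometrical identification problem as channel coding over $P_{Z|Y}$ with a random i.i.d.\ codebook with input distribution $P_Y$, where the decoder has full access to the realized codebook $\calC=(Y^n(1),\ldots,Y^n(M))$. The governing quantity is the information density $\iota_n(y^n,z^n):=\sum_{i=1}^n\log\frac{P_{Z|Y}(z_i|y_i)}{P_Z(z_i)}$; under the joint law $P_Y^n\times P_{Z|Y}^n$ it has mean $nC_{\mathrm{bio}}$ and variance $n\rmV$, making the Berry-Esseen theorem the natural refinement tool.

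For the direct part I would analyze a thresholded maximum-information-density decoder that declares $\hat W=m$ iff $m$ is the unique index with $\iota_n(Y^n(m),Z^n)\geq\gamma$. A union bound together with the change-of-measure identity $P_Y^n(y^n)P_Z^n(z^n)=P_Y^n(y^n)P_{Z|Y}^n(z^n|y^n)\exp(-\iota_n(y^n,z^n))$ yields the one-shot estimate
\begin{align}
\rmP_\rme^{(n)}\leq \Pr[\iota_n(Y^n,Z^n)<\gamma]+(M-1)\exp(-\gamma),
\end{align}
valid for every $\gamma>0$. Taking $\gamma=\log M+\tfrac{1}{2}\log n$ forces the false-alarm term to be $O(1/\sqrt n)$, and Berry-Esseen converts the first term into $\Phi((\gamma-nC_{\mathrm{bio}})/\sqrt{n\rmV})+O(1/\sqrt n)$, which sits below $\varepsilon$ as soon as $\log M\geq nC_{\mathrm{bio}}+\sqrt{n\rmV}\,\Phi^{-1}(\varepsilon)-O(\log n)$. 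This gives $L^*(\varepsilon)\geq\sqrt{\rmV}\,\Phi^{-1}(\varepsilon)$.

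For the converse I would bypass Theorem \ref{mainresult:idlossy}, whose exponential envelope is too coarse to resolve the $\Theta(\sqrt n)$ term, and build a dedicated meta-converse. Introduce a fictitious measure $Q$ on $(\calC,W,Z^n)$ under which $\calC$ and $W$ retain their original laws but $Z^n\sim P_Z^n$ is independent of $(\calC,W)$. Under $Q$ the output $\hat W=g^{(n)}(\calC,Z^n)$ is independent of $W$, hence $\Pr_Q[\hat W=W]=1/M$; under $P$ the success probability is $1-\rmP_\rme^{(n)}$. Since the $M-1$ unselected codewords and $W$ itself are identically distributed under $P$ and $Q$, one has $\log(dP/dQ)=\iota_n(Y^n(W),Z^n)$, so the information-spectrum inequality $\alpha\leq \Pr_P[\log(dP/dQ)>\gamma]+\exp(\gamma)\beta$ with $\alpha=1-\rmP_\rme^{(n)}$ and $\beta=1/M$ gives
\begin{align}
1-\rmP_\rme^{(n)}\leq \Pr_P[\iota_n(Y^n,Z^n)>\gamma]+\frac{\exp(\gamma)}{M}
\end{align}
for every $\gamma>0$. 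Setting $\gamma=\log M-\tfrac{1}{2}\log n$ and invoking Berry-Esseen on the residual probability pins down $\log M\leq nC_{\mathrm{bio}}+\sqrt{n\rmV}\,\Phi^{-1}(\varepsilon)+O(\log n)$, whence $L^*(\varepsilon)\leq\sqrt{\rmV}\,\Phi^{-1}(\varepsilon)$.

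The main obstacle is the delicate step in the converse of isolating the log-likelihood ratio to $\iota_n$ evaluated at the single transmitted pair $(Y^n(W),Z^n)$: one must confirm that under both $P$ and $Q$ the codebook $\calC$ and the index $W$ share identical joint laws, so that only the conditional law of $Z^n$ contributes to $dP/dQ$ and the $M-1$ unselected codewords act purely as shared randomness even though the decoder consults them. Once this reduction is in hand the Berry-Esseen step is routine: the standing assumption $\rmV>0$ makes $\Phi^{-1}$ locally Lipschitz at every $\varepsilon\in(0,1)$, so the $O(1/\sqrt n)$ slack in the type-I and type-II error bounds is absorbed and the direct and converse terms match at leading order.
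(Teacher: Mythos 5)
Your proposal is correct and takes essentially the same approach as the paper: establish non-asymptotic information-spectrum bounds for $\rmP_\rme^{(n)}$ in terms of $\Pr[\iota_n(Y^n,Z^n)\lessgtr\text{threshold}]$ plus a vanishing slack, then apply the Berry-Esseen theorem. The paper obtains the converse bound \eqref{bio2} by specializing Lemma~\ref{fblidlossy} (whose bound on $\Delta_7$ is precisely the change-of-measure argument to the independent measure $Q$ and the disjointness of decoding regions that you construct directly) and the achievability bound \eqref{bio1} via the one-shot technique of \cite{yassaee2013technique} rather than your threshold decoder with a union bound, but both routes yield bounds of identical second-order precision.
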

Theorem \ref{bioext3} implies that if we allow a non-vanishing error probability, then the rate $R^\rmi(n,\varepsilon):=\frac{1}{n}\log M^*(n,\varepsilon)$ approaches capacity $C_{\rm{bio}}$ with   speed $L^*(\varepsilon)/\sqrt{n}$. 

We remark that the study of second-order asymptotics dates back to Strassen~\cite{strassen1962asymptotische} and was revisted by Hayashi~\cite{hayashi2009information} and by Polyanskiy, Poor and Verd\'u~\cite{polyanskiy2010finite}. Also see \cite{TanBook}.

The proofs of Theorems \ref{bioext1}, \ref{bioext2} and \ref{bioext3} are given in Appendix~\ref{proofbioext}.

\section{Proof of Theorem \ref{mainresult:idlossy}}
\label{proofmainresult:idlossy}
\subsection{Preliminaries}
\label{sec:premnonasymp}
In this subsection, we present some definitions. Given an encoding function $f^{(n)}$ and any $m\in\calM$, let
\begin{align}
P_{S|Y^n}(s(m)|y^n(m)):=1\{s(m)=f^{(n)}(y^n(m))\}.
\end{align}
Given a deterministic function $h^{(n)}$ and any $w\in\calM$, let
\begin{align}
P_{\hatX^n|SZ^n}(\hatx^n|s(w),z^n):=1\{\hatx^n=h^{(n)}(s(w),z^n)\}.
\end{align}

For simplicity, we use $\bx$ to denote $x^n$, $\bx^M$ to denote $(x^n(1),\ldots,x^n(M))$, $s^M$ to denote $s(1),\ldots,s(M)$. In a similar manner, we have $\by,\bz,\mathbf{\hatx},\by^M$ and the corresponding random vectors $\bX,\bX^M,\bY,\bY^M,S^M,\bZ,\mathbf{\hatX}$. For simplicity, let $\bt:=(\bx^M,\by^M,s^M,z^n,\hatx^n)$, $\bT:=(\bX^M,\bY^M,S^M,\bZ,\mathbf{\hatX})$ and $\mathbf{\calT}=(\calX^{Mn},\calY^{Mn},\calL^M,\calZ^n,\hat{\calX}^n)$. Then let $P_{W\bT}$ be the joint distribution of $(W,\bX^M,\bY^M,S^M,\bZ,\mathbf{\hatX})$, induced by $P_W,P_X^n,P_{Y|X}^n,P_{Z|X}^n,P_{Y|S^n},P_{\hatX^n|SZ^n}$, i.e.,
\begin{align}
P_{W\bT}(w,\bt)
\nn&=\frac{1}{M}\Bigg(\prod_{m=1}^MP_X^n(x^n(m))P_{Y|X}^n(Y^n(m)|x^n(m))\\*
\nn&\qquad\times P_{S|Y^n}(s(m)|y^n(m))\Bigg)P_{Z|X}^n(z^n|x^n(w))\\*
&\qquad\times P_{\hatX^n|SZ^n}(\hatx^n|s(w),z^n)
\label{def:pwt}.
\end{align}

Furthermore, in this section, whenever we use $\mathbb{E}$, we mean the expectation over $P_{W\bT}$ unless otherwise stated. Note that for each $w\in\calM$, the joint distribution of $(\bX(w),\bY(w),S(w),\bZ,\mathbf{\hatX})$ is the same. Thus we let the joint distribution be $P_{\bX\bY S\bZ\mathbf{\bX}}$. In the following, all the distributions $P$ are induced by $P_{\bX\bY S\bZ\mathbf{\bX}}$.

Let $Q_{Y^n},Q_{X^n|SY^nZ^n},Q_{X^nY^n|SZ^n\hatX^n},Q_{Y^nZ^n|S},Q_{Z^n}$ be arbitrary distributions. Given $w\in\calM$ and any $\eta>0$, define the sets 
$\calA_i(w)$ for $i\in[1:7]$ as in \eqref{def:cala1} to \eqref{def:cala7} on the top of next page.
\begin{figure*}
\begin{align}
\calA_1(w)&:=\Big\{\bt:\frac{1}{n}\log\frac{P_{Y^n}(y^n(w))}{Q_{Y^n}(y^n(w))}\geq -\eta\Big\},\label{def:cala1}\\
\calA_2(w)&:=\Big\{\bt:\frac{1}{n}\log\frac{P_{Z^n|Y^n}(z^n|y^n(w))}{Q_{Z^n|SY^n}(z^n|s(w),y^n(w))}\geq -\eta\Big\},\label{def:cala2}\\
\calA_3(w)&:=\Big\{\bt:\frac{1}{n}\log\frac{P_{X^n|Y^nZ^n}(x^n(w)|y^n(w),z^n)}{Q_{X^n|SY^nZ^n}(y^n(w)|s(w),x^n(w),z^n)}\geq -\eta\Big\},\label{def:cala3}\\
\calA_4(w)&:=\Big\{\bt:\frac{1}{n}\log\frac{P_{X^nY^n|SZ^n}(x^n(w),y^n(w)|s(w),z^n)}{Q_{X^nY^n|SZ^n\hatX^n}(x^n(w),y^n(w)|s(w),z^n,\hatx^n)}\geq -\eta\Big\},\label{def:cala4}\\
\calA_5(w)&:=\Big\{\bt: R^{\rmc}\geq \frac{1}{n}\log \frac{Q_{Y^nZ^n|S}(y^n(w)z^n|s(w))}{P_{Y^nZ^n}(y^n(w),z^n)}-\eta\Big\},\label{def:cala5}\\
\calA_6(w)&:=\Big\{\bt:R^{\rmi}\leq \frac{1}{n}\log\frac{P_{Z^n|S}(z^n|s(w))}{Q_{Z^n}(z^n)}+\eta\Big\}\label{def:cala6},\\
\calA_7(w)&:=\Big\{\bt:d(x^n(w),\hatx^n)\leq D\Big\}\label{def:cala7}.
\end{align}
\hrulefill
\end{figure*}
Choose $U_i(W)=(S(W),Y^{i-1}(W),Z_{i+1}^n)$. Then it can be verified that $Z_i-X_i(W)-Y_i(W)-U_i(W)$ and $(X_i(W),Y_i(W))-(U_i(W),Z_i)-\hatX_i$ form two Markov chains under the joint distribution $P_{W\bT}$ (recall \eqref{def:pwt}). Furthermore, let $V_i(W)=(S(W),Z_{i+1}^n)$.

For $i=1,\ldots,n$, let $Q_{X_iY_iZ_iU_i\hatX_i}$ be any generic distributions and let $Q_{Y_i},Q_{Z_i},Q_{X_i|Y_iZ_iU_i},Q_{X_iY_i|Z_iU_i\hatX_i},Q_{Y_iZ_i|U_i}$ be induced by $Q_{X_iY_iZ_iU_i\hatX_i}$. Paralleling \eqref{def:cala1} to \eqref{def:cala6}, given any $\eta>0$, we define sets $\calB_i(w)$ for $i\in[1:7]$ as in \eqref{def:calb1} to \eqref{def:calb7} on the top of next page.
\begin{figure*}
\begin{align}
\calB_1(w)&:=\Bigg\{\bt:
0\geq \frac{1}{n}\sum_{i=1}^n\log\frac{Q_{Y_i}(y_i(w))}{P_{Y_i}(y_i(w))}-\eta\Bigg\},\label{def:calb1}\\
\calB_2(w)&:=\Bigg\{\bt:
0\geq \frac{1}{n}\sum_{i=1}^n \frac{Q_{Z_i|Y_iU_i}(x_i(w)|z_i,u_i(w))}{P_{Z_i|Y_i}(z_i|y_i(w))}-\eta\Bigg\},\label{def:calb2}\\
\calB_3(w)&:=\Bigg\{\bt:
0\geq \frac{1}{n}\sum_{i=1}^n \log \frac{Q_{X_i|Y_iZ_iU_i}(x_i(w)|y_i(w),z_i,u_i(w))}{P_{X_i|Y_iZ_i}(x_i(w)|y_i(w),z_i)}-\eta\Bigg\},\label{def:calb3}\\
\calB_4(w)&:=\Bigg\{\bt:
0\geq \frac{1}{n}\sum_{i=1}^n \log \frac{Q_{X_iY_i|Z_iU_i\hatX_i}(x_i(w),y_i(w)|z_i,u_i(w),\hatx_i)}{P_{X_iY_i|Z_iU_i}(x_i(w),y_i(w)|z_i,u_i(w))}-\eta\Bigg\},\label{def:calb4}\\
\calB_5(w)&:=\Bigg\{\bt:
R^{\rmc}-R^{\rmi}\geq \frac{1}{n}\sum_{i=1}^n\log \Bigg(\frac{Q_{Y_iZ_i|U_i}(y_i(w),z_i|u_i(w))}{P_{Y_iZ_i}(y_i(w),z_i)}\frac{Q_{Z_i}(z_i)}{P_{Z|V_i}(z_i|v_i(w))}\Bigg)-3\eta\Bigg\},\label{def:calb5}\\
\calB_6(w)&:=\Bigg\{\bt:
R^{\rmi}\leq \frac{1}{n}\sum_{i=1}^n \log\frac{P_{Z_i|V_i}(z_i|v_i(w))}{Q_{Z}(z_i)}+\eta\Bigg\},\label{def:calb6}\\
\calB_7(w)&:=\Bigg\{\bt:
D\geq \frac{1}{n}\sum_{i=1}^n \log e^{d(x_i(w),\hatx_i)}\Bigg\}.\label{def:calb7}
\end{align}
\hrulefill
\end{figure*}

\subsection{Proof of Theorem \ref{mainresult:idlossy}}
Invoking \eqref{def:pejointerrorexcessdis} and \eqref{def:pwt}, we define the fprobability of correct decoding as
\begin{align}
\nn&\rmP_\rmc^{(n)}(f^{(n)},g^{(n)},h^{(n)},D)\\*
&:=1-\rmP_{\rm{e}}^{(n)}(f^{(n)},g^{(n)},h^{(n)},D)\label{def:pcjointerrorexcessdis}\\
&=\sum_{w=1}^M\sum_{\substack{t\in\calT:\bz\in\calD(s^M,w)\\d(\bx(w),\mathbf{\hatx})\leq D}} P_{W\bT}(w,\bt)\label{def:expressionpcidlossy}.
\end{align}

We first present a non-asymptotic upper bound on $\rmP_\rmc^{(n)}(f^{(n)},g^{(n)},h^{(n)},D)$.
\begin{lemma}
\label{fblidlossy}
For any encoding-decoding functions $(f^{(n)},g^{(n)})$ such that 
\begin{align}
\frac{1}{n}\log L&\leq R^{\rmc}\label{lleqrrmc},~
\frac{1}{n}\log M\geq R^{\rmi},
\end{align}
given any deterministic function $h^{(n)}$ and any distortion level $D$, we have
\begin{align}
\nn&\rmP_\rmc^{(n)}(f^{(n)},g^{(n)},h^{(n)},D)\\*
&\leq P_{W\bT}\Bigg\{\bigcap_{i=1}^7\calA_i(W)\Bigg\}+6e^{-n\eta}.\label{upp1pcnid}
\end{align}
\end{lemma}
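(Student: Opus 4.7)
The plan is to follow Oohama's non-asymptotic information-spectrum approach~\cite{oohama2016wynerziv,oohama2016new}. First I observe that the excess-distortion condition $d(\bx(W),\hat{\bx})\leq D$ inside \eqref{def:expressionpcidlossy} is precisely the event $\calA_7(W)$, and I insert the trivial identity $\mathbf{1}=\mathbf{1}\{\cap_{i=1}^{6}\calA_i(W)\}+\mathbf{1}\{\cup_{i=1}^{6}\calA_i^{\mathrm{c}}(W)\}$. The union bound then splits $\rmP_\rmc^{(n)}$ into the ``all good events'' term plus six complement contributions; I retain the correct-decoding event $\bZ\in\calD(S^M,W)$ only inside the term for $\calA_6^{\mathrm{c}}$, discarding it elsewhere. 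This yields
\begin{align*}
\rmP_\rmc^{(n)}&\leq P_{W\bT}\!\left\{\bigcap_{i=1}^{7}\calA_i(W)\right\}+\sum_{i=1}^{5}P_{W\bT}\{\calA_i^{\mathrm{c}}(W)\}\\
&\quad+P_{W\bT}\{\bZ\in\calD(S^M,W),\calA_6^{\mathrm{c}}(W)\},
\end{align*}
so it remains to prove that each of the six summands on the right is at most $e^{-n\eta}$.

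For $i\in[1:4]$ the bound is a one-line Markov argument: on $\calA_i^{\mathrm{c}}$ the relevant likelihood ratio $Q/P$ exceeds $e^{n\eta}$, so $P\{\calA_i^{\mathrm{c}}\}\leq e^{-n\eta}\,\mathbb{E}[Q/P]$, and a change of measure reduces $\mathbb{E}[Q/P]$ to a sum of $Q$ over its own alphabet, which equals~$1$. This works because the joint distribution of the conditioned variables respects the Markov chains $Z^n\!-\!Y^n\!-\!S$, $X^n\!-\!(Y^n,Z^n)\!-\!S$, and $(X^n,Y^n)\!-\!(S,Z^n)\!-\!\hatX^n$; each is forced by $P_{Z|X}$ being independent of $Y$, $S=f^{(n)}(Y^n)$, and $\hatX^n=h^{(n)}(S,Z^n)$. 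For $\calA_5^{\mathrm{c}}$ the compression constraint $L\leq e^{nR^{\rmc}}$ enters: choosing $Q_{Y^nZ^n|S}=P_{Y^nZ^n|S}$ and exploiting $P_{SY^nZ^n}(s,y,z)=P_{Y^nZ^n}(y,z)\mathbf{1}\{f^{(n)}(y)=s\}$ gives $\mathbb{E}[Q_{Y^nZ^n|S}/P_{Y^nZ^n}]=L$, so Markov yields $P\{\calA_5^{\mathrm{c}}\}\leq L\,e^{-n(R^{\rmc}+\eta)}\leq e^{-n\eta}$.

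The $\calA_6^{\mathrm{c}}$ term is where the identification-rate constraint $M\geq e^{nR^{\rmi}}$ marries the disjointness of the decoding regions $\{\calD(s^M,w)\}_{w=1}^{M}$. Since $W\sim\Unif[1{:}M]$ is independent of $S^M$, and conditionally on $(S^M,W=w)$ the law of $\bZ$ depends only on $s(w)$, I use $\calA_6^{\mathrm{c}}$ to replace $P_{Z^n|S}(z|s(W))$ by $Q_{Z^n}(z)\,e^{n(R^{\rmi}-\eta)}$ and then swap summation orders to obtain
\begin{align*}
&P_{W\bT}\{\bZ\in\calD(S^M,W),\calA_6^{\mathrm{c}}(W)\}\\
&\leq \frac{e^{n(R^{\rmi}-\eta)}}{M}\,\mathbb{E}_{S^M}\!\left[\sum_{w=1}^{M}\sum_{z\in\calD(S^M,w)}\!Q_{Z^n}(z)\right]\\
&\leq \frac{e^{n(R^{\rmi}-\eta)}}{M}\leq e^{-n\eta},
\end{align*}
where the disjointness $\sum_{w}\mathbf{1}\{z\in\calD(S^M,w)\}\leq 1$ collapses the inner double sum to $\sum_z Q_{Z^n}(z)=1$.

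I expect the $\calA_6^{\mathrm{c}}$ step to be the main subtlety. The computation is short, but it hinges on an architectural decision in Step~1: the correct-decoding event $\bZ\in\calD(S^M,W)$ must be \emph{retained} when splitting off $\calA_6^{\mathrm{c}}$; dropping it, as is done for the other $\calA_i^{\mathrm{c}}$, would forfeit the $M^{-1}$ factor that cancels the $e^{nR^{\rmi}}$ blowup inherited from $\calA_6^{\mathrm{c}}$. Once this structural point is recognized, the remaining bounds are routine Markov-inequality bookkeeping supported by the Markov chains that the encoder, channel, and decoder impose on $P_{W\bT}$.
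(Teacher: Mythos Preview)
Your proposal is correct and follows essentially the same route as the paper: the same indicator decomposition, the same union bound into six complement terms, the same change-of-measure/Markov argument for $\calA_1^{\rmc}$--$\calA_4^{\rmc}$, and the same retention of the correct-decoding event for $\calA_6^{\rmc}$ so that disjointness of the $\calD(s^M,w)$ absorbs the factor $e^{nR^{\rmi}}$.

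One small point on the $\calA_5^{\rmc}$ step: you write ``choosing $Q_{Y^nZ^n|S}=P_{Y^nZ^n|S}$'', but the lemma is stated for \emph{arbitrary} $Q_{Y^nZ^n|S}$, and the paper proves it in that generality. Your own computation already works without specializing: from $P_{SY^nZ^n}(s,y,z)=P_{Y^nZ^n}(y,z)\mathbf{1}\{f^{(n)}(y)=s\}$ one gets, for any conditional distribution $Q_{Y^nZ^n|S}$,
\[
\mathbb{E}_P\!\left[\frac{Q_{Y^nZ^n|S}(Y^n,Z^n|S)}{P_{Y^nZ^n}(Y^n,Z^n)}\right]
=\sum_{s,y,z}\mathbf{1}\{f^{(n)}(y)=s\}\,Q_{Y^nZ^n|S}(y,z|s)
\leq \sum_{s}1=L,
\]
which is exactly what the paper does (via the bound $P(s|y)\leq 1$). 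So drop the specialization and your argument matches the paper's in scope as well as structure.
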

The proof of Lemma \ref{fblidlossy} is given in Appendix \ref{prooffblidlossy}. 

A few other remarks are in order. First, in the proof of Lemma \ref{fblidlossy}, we define seven sets for each $w\in\calM$ in \eqref{def:cala1} to~\eqref{def:cala7}. Equipped with these definitions, we obtain the upper bound in~\eqref{upp1pcnid} where the probability of correct decoding of $W$ depends on $S(W),X^n(W),Y^n(W),Z^n,\hatX^n$.

Second, in \eqref{upp1pcnid}, $\calA_5(w)$ corresponds to the enrollment phase, $\calA_6(w)$ corresponds to the identification phase and $\calA_7(w)$ corresponds to the lossy recovery phase. Furthermore, $\calA_1(w)$ to $\calA_4(w)$ are the auxiliary sets whose roles will be clear in subsequent analyses.

Third, the definitions of $\calQ$ (cf. \eqref{def:calQlossy}) and $\{\calA_i(w)\}_{i=1}^7$ are crucial. Note that $\calA_1(w)$ to $\calA_4(w)$ appear in Lemma \ref{fblidlossy}. They appear due to the different Markov conditions in the definitions of $\calP^*$ in \eqref{def:calPlossystar} and $\calQ$ in \eqref{def:calQlossy}. This is also closely related with the proof of Lemma \ref{propFOmega} in Appendix \ref{proofpropFOmega}. Hence, there is a subtle interplay between  Lemmas \ref{fblidlossy} and \ref{propFOmega}. This tension also appears in \cite{oohama2016wynerziv}. However, we need to adapt \cite{oohama2016wynerziv} to the content identification problem with lossy recovery carefully since these two problems are significantly different. 

Invoking Lemma \ref{fblidlossy} and choosing the distributions $Q_{Y^n}$, $Q_{X^n|SY^nZ^n}$, $Q_{X^nY^n|SZ^n\hatX^n}$, $Q_{Y^nZ^n|S}$, $Q_{Z^n}$ appropriately, we obtain the following lemma.
\begin{lemma}
Given the conditions in Lemma \ref{fblidlossy}, we have
\label{fblidlossy2}
\begin{align}
\nn&\rmP_{\rmc}^{(n)}(f^{(n)},g^{(n)},h^{(n)},D)\\*
&\leq P_{W\bT}\Bigg\{\bigcap_{i=1}^7 \calB_i(W)\Bigg\}+6e^{-n\eta}.\label{upp2pcnidlossy}
\end{align}
\end{lemma}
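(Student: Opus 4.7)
The plan is to apply Lemma~\ref{fblidlossy} with judicious product-form choices of the arbitrary distributions $Q_{Y^n}$, $Q_{Z^n}$, $Q_{Z^n|SY^n}$, $Q_{X^n|SY^nZ^n}$, $Q_{X^nY^n|SZ^n\hatX^n}$, and $Q_{Y^nZ^n|S}$, built letter-by-letter from an underlying choice $\{Q_{X_iY_iZ_iU_i\hatX_i}\in\calQ\}_{i=1}^n$, and to exploit the memoryless structure of $P_X$, $P_{Y|X}$, $P_{Z|X}$ together with the specific forms $U_i(W):=(S(W),Y^{i-1}(W),Z_{i+1}^n)$ and $V_i(W):=(S(W),Z_{i+1}^n)$. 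With these choices, each single-letter log-likelihood condition defining $\calA_i(w)$ will reduce to the per-letter sum condition defining $\calB_i(w)$.

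First, I would take $Q_{Y^n}(y^n):=\prod_{i=1}^n Q_{Y_i}(y_i)$ and $Q_{Z^n}(z^n):=\prod_{i=1}^n Q_{Z_i}(z_i)$. Since $P_{Y^n}(y^n)=\prod_i P_Y(y_i)$ is i.i.d., $\calA_1(w)$ single-letterizes directly to $\calB_1(w)$. For $\calA_6(w)$, the chain-rule expansion $\log P_{Z^n|S}(z^n|s)=\sum_i\log P_{Z_i|S,Z_{i+1}^n}(z_i|v_i)=\sum_i\log P_{Z_i|V_i}(z_i|v_i)$, combined with the product form of $Q_{Z^n}$, yields $\calA_6(w)=\calB_6(w)$.

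Next, I would set
\begin{align*}
Q_{Z^n|SY^n}(z^n|s,y^n)&:=\prod_{i=1}^n Q_{Z_i|Y_iU_i}(z_i|y_i,u_i),\\
Q_{X^n|SY^nZ^n}(x^n|s,y^n,z^n)&:=\prod_{i=1}^n Q_{X_i|Y_iZ_iU_i}(x_i|y_i,z_i,u_i),\\
Q_{X^nY^n|SZ^n\hatX^n}(\cdot)&:=\prod_{i=1}^n Q_{X_iY_i|Z_iU_i\hatX_i}(x_i,y_i|z_i,u_i,\hatx_i),
\end{align*}
with $u_i=(s,y^{i-1},z_{i+1}^n)$ in each line. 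Memorylessness of $P_{Z|Y}$ and of $P_{X|YZ}$, combined with the chain-rule factorization $P_{X^nY^n|SZ^n}(x^n,y^n|s,z^n)=\prod_i P_{X_iY_i|Z_iU_i}(x_i,y_i|z_i,u_i)$ (which follows because, under the joint law induced by $P_{W\bT}$, we have the Markov chains $X^{i-1}-(S,Y^{i-1},Z^n)-(X_i,Y_i)$ and $(X_i,Y_i)-(Z_i,U_i)-\cdots$), then convert $\calA_2(w),\calA_3(w),\calA_4(w)$ into $\calB_2(w),\calB_3(w),\calB_4(w)$ respectively.

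The step requiring a short algebraic rearrangement is $\calA_5(w)\to\calB_5(w)$. Choosing $Q_{Y^nZ^n|S}(y^n,z^n|s):=\prod_i Q_{Y_iZ_i|U_i}(y_i,z_i|u_i)$ and using the identity
\begin{align*}
\log\frac{Q_{Y^nZ^n|S}}{P_{Y^nZ^n}}-\log\frac{P_{Z^n|S}}{Q_{Z^n}}=\sum_{i=1}^n\log\frac{Q_{Y_iZ_i|U_i}Q_{Z_i}}{P_{Y_iZ_i}P_{Z_i|V_i}},
\end{align*}
I would subtract the inequality defining $\calA_6(w)$ from the one defining $\calA_5(w)$ to obtain the required lower bound on $R^{\rmc}-R^{\rmi}$, with all $\eta$-slacks absorbed into the $3\eta$ of~\eqref{def:calb5}. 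Finally, $\calA_7(w)=\calB_7(w)$ after writing $d(x^n,\hatx^n)=\tfrac{1}{n}\sum_i\log e^{d(x_i,\hatx_i)}$. Combining these seven containments gives $\bigcap_{i=1}^7\calA_i(W)\subseteq\bigcap_{i=1}^7\calB_i(W)$, and invoking Lemma~\ref{fblidlossy} yields~\eqref{upp2pcnidlossy}.

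The main obstacle I anticipate is bookkeeping: one must verify that the product-form auxiliary measures above are genuine conditional probability distributions so that Lemma~\ref{fblidlossy} directly applies, and that the seven log-likelihood inequalities pass through the chain-rule decomposition with correctly signed slacks (especially when multiple $\calA_i(w)$'s are combined, as in the derivation of $\calB_5$). The redundancy among $\calA_1$--$\calA_4$ arises precisely because the Markov chain $Z-X-Y-U$ required by $\calP^*$ in~\eqref{def:calPlossystar} is \emph{not} automatically respected by the unconstrained family $\calQ$ in~\eqref{def:calQlossy}; the extra log-likelihood terms exactly compensate for these mismatches and will later feed cleanly into the cumulant generating function $\Omega^{(\alpha,\mu,\beta,\theta)}(Q_{XYZU\hatX})$ of~\eqref{def:omegaamblambdaQP} when the non-asymptotic bound is converted into the exponent via Cram\'er's theorem.
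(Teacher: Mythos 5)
Your proof is correct and follows essentially the same route as the paper: choose the same per-letter product forms for $Q_{Y^n}$, $Q_{Z^n}$, $Q_{Z^n|SY^n}$, $Q_{X^n|SY^nZ^n}$, $Q_{X^nY^n|SZ^n\hatX^n}$, $Q_{Y^nZ^n|S}$, use the memoryless/Markov structure of $P_{W\bT}$ to factorize the corresponding $P$-marginals letter-by-letter, and then observe that $\calA_i$ becomes $\calB_i$ for $i\neq 5$ while $\calB_5$ is obtained by combining (the single-letterized) $\calA_5$ and $\calA_6$. The paper carries out the same combination via the intermediate set $\tilde{\calB}_5$; your phrasing collapses that step but the content is identical, including the observation that the resulting $2\eta$ slack is absorbed by the $3\eta$ allowed in $\calB_5$.
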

The proof of Lemma \ref{fblidlossy2} is given in Appendix \ref{prooffblidlossy2}.

A few remarks are in order. First, to introduce $U_i(W)$ in~\eqref{upp2pcnidlossy}, we make use of the Markov chain $(X_i(W),Y_i(W))-(S(W),Y^{i-1}(W),Z_i^n)-(X^{i-1}(W),Z^{i-1})$ which can be established similarly as \cite[Lemma 2]{oohama2016wynerziv}.

Second, note that in \cite{tuncel2014idenlossy}, Tuncel and G\"und\"uz chose the auxiliary random variable as $U_i(W)=(S(W),Z^{i-1},Z_{i+1}^n)$. If we choose $U_i(W)$ as in \cite{tuncel2014idenlossy}, we cannot obtain Lemma~\ref{fblidlossy2}. Furthermore, note that here we use both $U_i(W)$ and $V_i(W)$. This idea is also used in~\cite{oohama2016new}. In the proof of  Lemma \ref{lbunderlineOmegalossy}, we will eliminate $V_i(W)$ using H\"{o}lder's inequality.

\blue{Recall that $\bar{a}=1-a$ for $a\in[0,1]$.} In the following, for simplicity, we will use $Q_i$ to denote $Q_{X_iY_iZ_iU_i\hatX_i}$ and use $P_i$ to denote $P_{X_iY_iZ_iU_iV_i\hatX_i}$. Let $(\alpha,\lambda,\mu,\beta)\in\bbR_+\times [0,1]^2$. We need the following definitions to further upper bound the right hand side in Lemma \ref{fblidlossy2}. Let
\blue{
\begin{align}
\nn&f^{(\alpha,\mu,\beta)}_{Q_i,P_i}(x_i,y_i,z_i,\hatx_i|u_i,v_i)\\*
\nn&:=\frac{Q_{Y_i}(y_i)}{P_{Y_i}(y_i)}\frac{Q_{Z_i|Y_iU_i}(z_i|y_i,u_i)}{P_{Z_i|Y_i}(z_i|y_i)}
\frac{Q_{X_i|Y_iZ_iU_i}(x_i|y_i,z_i,u_i)}{P_{X_i|Y_iZ_i}(x_i|y_i,z_i)}
\\*
\nn&\qquad\times \frac{Q_{X_iY_i|Z_iU_i\hatX_i}(x_i,y_i|z_i,u_i(w),\hatx_i)}{P_{X_iY_i|Z_iU_i}(x_i,y_i|z_i,u_i)}\frac{Q_{Y_iZ_i|U_i}^{\alpha\bar{\mu}\bar{\beta}}(y_i,z_i|u_i)}{P_{Y_iZ_i}^{\alpha\bar{\mu}\bar{\beta}}(y_i,z_i)}\\*
&\qquad\times\frac{Q_{Z_i}^{\alpha\bar{\mu}\beta}(z_i)}{P_{Z_i|V_i}^{\alpha\bar{\mu}\beta}(z_i|v_i)} e^{\alpha \mu d(x_i,\hatx_i)}\label{def:fincramer}.
\end{align}
}

Next, we define the negative cumulant generating function of $f^{(\alpha,\mu,\beta)}_{Q_i,P_i}(\cdot)$ as in \eqref{def:omegaamblambda} on the next page.
\begin{figure*}
\begin{align}
\Omega^{(\alpha,\mu,\beta,\lambda)}(\{Q_i\}_{i=1}^n):=-\log\mathbb{E}\Big[\exp\Big(-\lambda\sum_{i=1}^n\log f^{(\alpha,\mu,\beta)}_{Q_i,P_i}(X_i(W),Y_i(W),Z_i,\hatX_i|U_i(W),V_i(W))\Big)\Big]\label{def:omegaamblambda}.
\end{align}
\hrulefill
\end{figure*}
For simplicity, also let
\begin{align}
\kappa(\alpha,\mu,\beta,\lambda)
&:=\lambda\alpha\bar{\mu}(\bar{\beta}R^{\rmc}-R^{\rmi}))+\lambda\alpha\mu D\label{def:kappa}.
\end{align}
Invoking Cram\'er's bound on large deviations~(cf. Lemma \ref{cramer}), we obtain the following lemma.
\begin{lemma}
\label{fblidlossy3}
For any $(\alpha,\lambda,\mu,\beta)\in\bbR_+^2\times[0,1]^2$ , given the conditions in Lemma 9, we have
\begin{align}
\nn&\rmP_\rmc^{(n)}(f^{(n)},g^{(n)},h^{(n)},D)\\*
&\!\!\!\leq 7\exp\left(-n\frac{\frac{1}{n}\Omega^{(\alpha,\mu,\beta,\lambda)}(\{Q_i\}_{i=1}^n)-\kappa(\alpha,\mu,\beta,\lambda)}{1+\lambda\big(4+\alpha\bar{\mu}(3-2\beta)\big)}\right).
\end{align}
\end{lemma}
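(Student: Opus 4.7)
The starting point is Lemma~\ref{fblidlossy2}, which gives, for every $\eta>0$,
\begin{align*}
\rmP_\rmc^{(n)}(f^{(n)},g^{(n)},h^{(n)},D)\leq P_{W\bT}\Big\{\bigcap_{i=1}^{7}\calB_i(W)\Big\}+6 e^{-n\eta}.
\end{align*}
My plan is to control the probability of the intersection by a single Chernoff bound, identify the moment generating function with $\Omega^{(\alpha,\mu,\beta,\lambda)}$, and then choose $\eta$ to balance the resulting exponent against the $6 e^{-n\eta}$ residual.

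First, I would observe that each $\calB_i(W)$ is a one-sided inequality on an empirical normalized sum of terms that are, up to a constant multiplier, precisely the summands of $\log f^{(\alpha,\mu,\beta)}_{Q_i,P_i}$ defined in~\eqref{def:fincramer}. Form the non-negative linear combination of the seven inequalities with weights $(\theta_1,\ldots,\theta_7)$ chosen so that the combined left-hand side equals $\lambda\sum_{i=1}^{n}\log f^{(\alpha,\mu,\beta)}_{Q_i,P_i}$: namely, $\theta_1=\theta_2=\theta_3=\theta_4=\lambda$ and $\theta_7=\lambda\alpha\mu$, while $(\theta_5,\theta_6)$ are coupled so as to reproduce the coefficients $\lambda\alpha\bar{\mu}\bar{\beta}$ on $\log(Q_{YZ|U}/P_{YZ})$ and $\lambda\alpha\bar{\mu}\beta$ on $\log(Q_Z/P_{Z|V})$. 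The $\eta$-slacks $(\eta,\eta,\eta,\eta,3\eta,\eta,0)$ contributed by $\calB_1,\ldots,\calB_7$ then combine to a total of $\lambda(4+\alpha\bar{\mu}(3-2\beta))n\eta$, while the constant parts $(0,0,0,0,R^{\rmc}-R^{\rmi},-R^{\rmi},D)$ collapse to $\kappa(\alpha,\mu,\beta,\lambda)$ by~\eqref{def:kappa}. Consequently, on $\bigcap_{i=1}^{7}\calB_i(W)$,
\begin{align*}
\lambda\sum_{i=1}^{n}\log f^{(\alpha,\mu,\beta)}_{Q_i,P_i}(X_i(W),Y_i(W),Z_i,\hat{X}_i\mid U_i(W),V_i(W))\leq \kappa(\alpha,\mu,\beta,\lambda)+\lambda\bigl(4+\alpha\bar{\mu}(3-2\beta)\bigr)n\eta.
\end{align*}

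Second, apply the Chernoff/Markov bound with parameter $\lambda\geq 0$. Recognizing the moment generating function through~\eqref{def:omegaamblambda} yields
\begin{align*}
P_{W\bT}\Big\{\bigcap_{i=1}^{7}\calB_i(W)\Big\}\leq \exp\!\Big(\kappa(\alpha,\mu,\beta,\lambda)+\lambda\bigl(4+\alpha\bar{\mu}(3-2\beta)\bigr)n\eta-\Omega^{(\alpha,\mu,\beta,\lambda)}(\{Q_i\}_{i=1}^{n})\Big),
\end{align*}
which, plugged into Lemma~\ref{fblidlossy2}, gives an exponential bound plus the $6 e^{-n\eta}$ term.

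Finally, I would choose $\eta$ so that this exponent equals $-n\eta$, matching the exponent of the residual. Solving this linear equation in $\eta$ produces exactly
\begin{align*}
\eta=\frac{\tfrac{1}{n}\Omega^{(\alpha,\mu,\beta,\lambda)}(\{Q_i\}_{i=1}^{n})-\kappa(\alpha,\mu,\beta,\lambda)}{1+\lambda\bigl(4+\alpha\bar{\mu}(3-2\beta)\bigr)},
\end{align*}
so both terms collapse to $e^{-n\eta}$ and the $1+6=7$ prefactor in the claim appears automatically. The main obstacle is the bookkeeping in the first step: one must verify that the coupled weights $(\theta_5,\theta_6)$ can indeed be taken non-negative to reproduce the asymmetric $\bar{\beta}/\beta$-split on the two $(U,V)$-dependent log-ratios, and that the $\eta$-coefficients collapse to exactly $\lambda(4+\alpha\bar{\mu}(3-2\beta))$. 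The Chernoff step and the $\eta$-optimization are essentially mechanical once this accounting is done.
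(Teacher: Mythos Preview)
Your proposal is correct and follows essentially the same route as the paper's proof: pass from Lemma~\ref{fblidlossy2} to a single Chernoff-type bound by non-negatively combining the seven $\calB_i(W)$ inequalities, identify the cumulant generating function with $\Omega^{(\alpha,\mu,\beta,\lambda)}(\{Q_i\}_{i=1}^n)$, and then balance $\eta$ against the residual $6e^{-n\eta}$.

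Regarding the obstacle you flag, the paper resolves it by simply taking $\theta_5=\lambda\alpha\bar\mu\bar\beta$ and $\theta_6=\lambda\alpha\bar\mu\beta$, both of which are non-negative for $(\alpha,\mu,\beta)\in\bbR_+\times[0,1]^2$. With these weights the slack coefficient is $4\lambda+3\theta_5+\theta_6=\lambda\big(4+\alpha\bar\mu(3\bar\beta+\beta)\big)=\lambda\big(4+\alpha\bar\mu(3-2\beta)\big)$ and the constant part is $\theta_5(R^{\rmc}-R^{\rmi})-\theta_6 R^{\rmi}+\theta_7 D=\lambda\alpha\bar\mu(\bar\beta R^{\rmc}-R^{\rmi})+\lambda\alpha\mu D=\kappa(\alpha,\mu,\beta,\lambda)$, exactly as you claim; there is no need to make $\theta_6$ negative.
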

The proof of Lemma \ref{fblidlossy3} is given in Appendix \ref{prooffblidlossy3}.

Let 
\begin{align}
\underline{\Omega}^{(\alpha,\mu,\beta,\lambda)}:=\inf_{n\geq 1}\sup_{\{Q_i\}_{i=1}^n}
\Omega^{(\alpha,\mu,\beta,\lambda)}
(\{Q_i\}_{i=1}^n).\label{def:lossyunderlineOmega}
\end{align}
Define
\begin{align}
\theta:=\frac{\lambda}{1-\lambda-\lambda\alpha\bar{\mu}\beta}\label{def:theta}.
\end{align}
Hence, we have
\begin{align}
\lambda=\frac{\theta}{1+\theta+\theta\alpha\bar{\mu}\beta}\label{lambdausetheta}.
\end{align}

The next lemma is essential in the proof.
\begin{lemma}
\label{lbunderlineOmegalossy}
For any $(\alpha,\lambda,\mu,\beta)\in\bbR_+^2\times[0,1]^2$ such that $\lambda\in(0,\frac{1}{1+\alpha\bar{\mu}\beta})$, we have that for $\theta$ in \eqref{def:theta},
\begin{align}
\underline{\Omega}^{(\alpha,\mu,\beta,\lambda)}\geq\frac{\Omega^{(\alpha,\mu,\beta,\theta)})}{1+\theta+\theta\alpha\bar{\mu}\beta}.
\end{align}
\end{lemma}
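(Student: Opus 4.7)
The plan is to apply H\"older's inequality with a carefully chosen triple of exponents, dictated by the algebraic identity $\lambda/\theta + \lambda + \lambda\alpha\bar{\mu}\beta = 1$, which is precisely equivalent to the definition $\theta=\lambda/(1-\lambda-\lambda\alpha\bar{\mu}\beta)$ in \eqref{def:theta}. Fix a minimizer $Q^*\in\arg\min_{Q\in\calQ}\Omega^{(\alpha,\mu,\beta,\theta)}(Q)$ and specialize to $Q_i=Q^*$ for every $i\in[1\!:\!n]$; since the outer supremum over $\{Q_i\}$ in the definition of $\underline{\Omega}^{(\alpha,\mu,\beta,\lambda)}$ can only raise the value, it suffices to prove $\Omega^{(\alpha,\mu,\beta,\lambda)}(\{Q^*\}_{i=1}^n) \geq n\,\Omega^{(\alpha,\mu,\beta,\theta)}(Q^*)/(1+\theta+\theta\alpha\bar{\mu}\beta)$ for every $n\geq 1$, whereupon taking $\inf_n$ (attained at $n=1$) yields the claim. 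A direct comparison of \eqref{def:fincramer} with \eqref{def:omegaamb} gives the pointwise decomposition
\begin{align*}
\log f_i \;=\; \omega_i \;+\; \log\tfrac{Q^*_{XY|ZU}}{P_{X_iY_i|Z_iU_i}} \;+\; \alpha\bar{\mu}\beta\,\log\tfrac{Q^*_{Z|U}}{P_{Z_i|V_i}},
\end{align*}
so $\prod_i f_i^{-\lambda}$ splits as the product of three nonnegative pieces.

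Applying H\"older's inequality with exponents $(\theta/\lambda,\,1/\lambda,\,1/(\lambda\alpha\bar{\mu}\beta))$---each exceeding $1$ under the hypothesis $\lambda\in(0,1/(1+\alpha\bar{\mu}\beta))$---yields
\begin{align*}
\mathbb{E}\!\bigg[\prod_i f_i^{-\lambda}\bigg] \leq \mathbb{E}\!\bigg[\prod_i e^{-\theta\omega_i}\bigg]^{\!\lambda/\theta}\!\cdot\mathbb{E}\!\bigg[\prod_i \tfrac{Q^*_{XY|ZU}}{P_{X_iY_i|Z_iU_i}}\bigg]^{\!\lambda}\!\cdot\mathbb{E}\!\bigg[\prod_i \tfrac{Q^*_{Z|U}}{P_{Z_i|V_i}}\bigg]^{\!\lambda\alpha\bar{\mu}\beta}\!\!.
\end{align*}
The second and third factors should be shown to be at most $1$ via change-of-measure arguments. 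For the third, the chain-rule identity $\prod_i P_{Z_i|V_i}(Z_i|V_i) = P_{Z^n|S}(Z^n|S)$ (immediate from $V_i=(S,Z_{i+1}^n)$), combined with $\sum_z Q^*_{Z|U}(z|u)=1$, reduces the expectation to $1$. For the second, the i.i.d.\ structure of $(X_i(W),Y_i(W),Z_i)_i$ under $P$ together with the Markov chains from Section~\ref{sec:premnonasymp} yields the conditional factorization $P(\bx,\by|\bz,\bu)=\prod_i P_{X_iY_i|Z_iU_i}(x_i,y_i|z_i,u_i)$, and a change-of-measure step using $\sum_{x,y}Q^*_{XY|ZU}=1$ then bounds that factor by $1$ as well.

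The chief technical hurdle is bounding the principal factor $\mathbb{E}_P[\prod_i e^{-\theta\omega_i}]$ by the single-letter quantity $(\mathbb{E}_{Q^*}[e^{-\theta\omega_{Q^*}}])^n=e^{-n\Omega^{(\alpha,\mu,\beta,\theta)}(Q^*)}$. Although the sources $(X_i,Y_i,Z_i)$ are i.i.d.\ under $P$, the auxiliaries $U_i$ and $\hat{X}_i$ couple the time indices, so the integrand does not factorize outright. I plan to handle this by peeling off time indices from $i=n$ down to $i=1$: conditioning at each stage on the future context $(Y^{i-1},Z_{i+1}^n,S)$ reduces the inner integral at time $i$ to a single-letter expectation whose log-moment generating function matches $\Omega^{(\alpha,\mu,\beta,\theta)}(Q^*)$, once the log-ratio structure of $\omega_{Q^*}$ and the Markov chain $Z_i-X_i-Y_i-U_i$ are invoked. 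Assembling the three bounds, taking $-\log$, and using the identity $\lambda/\theta = 1/(1+\theta+\theta\alpha\bar{\mu}\beta)$ then completes the proof.
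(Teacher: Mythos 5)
Your pointwise decomposition of $\log f_i$ and your choice of conjugate H\"older exponents $(\theta/\lambda,\,1/\lambda,\,1/(\lambda\alpha\bar{\mu}\beta))$ are both correct, but the second and third factors produced by H\"older come out with the density ratios the \emph{other} way round from what you wrote: since
\[
f_i^{-\lambda}=e^{-\lambda\omega_i}\cdot\Big(\tfrac{P_{X_iY_i|Z_iU_i}}{Q^*_{XY|ZU}}\Big)^{\lambda}\cdot\Big(\tfrac{P_{Z_i|V_i}}{Q^*_{Z|U}}\Big)^{\lambda\alpha\bar{\mu}\beta},
\]
raising the middle piece to the exponent $1/\lambda$ yields $\mathbb{E}\big[\prod_i P_{X_iY_i|Z_iU_i}/Q^*_{XY|ZU}\big]$, not $\mathbb{E}\big[\prod_i Q^*_{XY|ZU}/P_{X_iY_i|Z_iU_i}\big]$. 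Under the ambient law $P_{W\bT}$ this quantity is not $\leq 1$ (already in the one-letter case it is $\geq 1$ by Cauchy--Schwarz), so the change-of-measure step you invoke does not go through; the sign is forced by H\"older and cannot be flipped. A secondary obstruction is that freezing $Q_i\equiv Q^*\in\calQ$ is not admissible: the auxiliary alphabet of $Q^*$ has at most $|\calX||\calY||\calZ||\hat{\calX}|$ symbols, yet $f^{(\alpha,\mu,\beta)}_{Q_i,P_i}$ is evaluated at $U_i=(S,Y^{i-1},Z_{i+1}^n)\in\calL\times\calY^{i-1}\times\calZ^{n-i}$, so quantities like $Q^*_{Z|YU}(z_i|y_i,u_i)$ are not even defined.

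The published proof handles both difficulties by \emph{postponing} H\"older. After removing $W$, it telescopes $\mathbb{E}[\prod_i f_i^{-\lambda}]$ into $\prod_{i=1}^n\Lambda_i$ using the recursive constructions \eqref{def:Cisz}--\eqref{def:Lambdai} and \eqref{lossyOmegaow4}, so each $\Lambda_i$ is a one-letter expectation under the \emph{tilted} law $P^{(\alpha,\mu,\beta,\lambda)}(x_i,y_i,z_i,u_i,\hatx_i)$ of \eqref{def:PSXYZhXambl}; it then chooses $Q_i$ to \emph{equal} that tilted law (see \eqref{tochooseqandp}). With this choice, the two auxiliary H\"older factors in \eqref{useholderineq} become expectations of $P_i/Q_i$ taken under $Q_i$ itself and therefore integrate to exactly $1$, and the main factor gives $\exp\{-\psi\,\Omega^{(\alpha,\mu,\beta,\theta)}(Q_i)\}$; a cardinality-reduction step then lets $Q_i$ be compared against the minimum over $\calQ$. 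Your ``peeling off from $i=n$ down to $1$'' is gesturing toward this recursion, but unless the $Q_i$ are coupled to the tilted law, the measure of integration never matches the denominators of the auxiliary factors, so the supremum over $\{Q_i\}$ in $\underline{\Omega}^{(\alpha,\mu,\beta,\lambda)}$ is spent on a constant choice that cannot close the argument.
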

The proof of Lemma \ref{lbunderlineOmegalossy} is similar to that of  \cite[Proposition~2]{oohama2016wynerziv} and  given in Appendix \ref{prooflbunderlineOmegalossy}. In the proof of Lemma \ref{lbunderlineOmegalossy}, we first remove $W$ in the expression of $\Omega^{(\alpha,\mu,\beta,\lambda)}(\{Q_i\}_{i=1}^n)$. Subsequently, by adopting ideas from \cite{oohama2016wynerziv} and \cite{oohama2016new} and properly choosing distributions $Q_{X_iY_iZ_iU_i\hatX_i}$ via the recursive method, we can establish Lemma \ref{lbunderlineOmegalossy}.

Invoking Lemmas \ref{fblidlossy3} and \ref{lbunderlineOmegalossy}, we conclude that 
\begin{align}
&\rmP_\rmc^{(n)}(f^{(n)},g^{(n)},h^{(n)},D)\\
&\!\!\!\leq 7\exp\left(-n\frac{\underline{\Omega}^{(\alpha,\mu,\beta,\lambda)}-\lambda\alpha\bar{\mu}(\bar{\beta}R^{\rmc}-R^{\rmi})-\lambda\alpha\mu D}{1+\lambda\big(4+\alpha\bar{\mu}(3-2\beta)\big)}\right)\\
&\!\!\!\leq 7\exp\left(-n\frac{\frac{\Omega^{(\alpha,\mu,\beta,\theta)}}{1+\theta+\theta\alpha\bar{\mu}\beta}-\frac{\theta\alpha\bar{\mu}(\bar{\beta}R^{\rmc}-R^{\rmi})+\theta\alpha\mu D}{1+\theta+\theta\alpha\bar{\mu}\beta}}{1+\frac{\theta\big(4+\alpha\bar{\mu}(3-2\beta)\big)}{1+\theta+\theta\alpha\bar{\mu}\beta}}\right)\\
&\!\!\!=7\exp\Bigg(-n\frac{\Omega^{(\alpha,\mu,\beta,\theta)}-\theta\alpha\bar{\mu}(\bar{\beta}R^{\rmc}-R^{\rmi})-\theta\alpha\mu D}{1+5\theta+\theta\alpha\bar{\mu}(3-\beta)}\Bigg)\\
&\!\!\!=7\exp\big(-nF(R^{\rmi},R^{\rmc},D)\big),\label{useflossy}
\end{align}
where \eqref{useflossy} follows from the definition of $F(R^{\rmi},R^{\rmc},D)$ in~\eqref{def:lossyF}.
The proof of Theorem \ref{mainresult:idlossy} is now complete.

\section{Conclusion}
\label{sec:conclusion}
In this paper, we  derived  a non-asymptotic converse bound for content identification problem with lossy recovery. Invoking the non-asymptotic bound, we established an exponential strong converse theorem. As a corollary of our main result, we derived an upper bound on the optimal exponent of the joint identification-error and excess-distortion probability. Our main results can be specialized to the biometrical identification problem~\cite{willems2003capacity} and the content identification problem~\cite{tuncel2009capacity}.

There are several avenues for future research. First, note that in Theorem \ref{mainresult:idlossy}, we present only a non-asymptotic exponential type upper bound on the probability of correct decoding.  Although this is sufficient for us to claim the exponential strong converse theorem~(cf. Theorem \ref{strongconverse}) by invoking Lemma~\ref{propFOmega}, it is worth deriving the exact exponent for the probability of correct decoding. The ideas involved in characterizing the exact strong converse exponent in~\cite{oohama1994universal,kang2015} and the one-shot techniques in \cite{watanabe2015,yassaee2013technique} might be useful. Second, after Theorem \ref{strongconverse}, we remarked that the second-order coding terms are in the order of $\Theta(n^{-1/2})$. In the future, one may be interested in nailing down the exact second-order coding region. For this line of research, one may borrow ideas from \cite{scarlett2015, watanabe2015second, zhou2015second}. Third, in this paper, we only considered the discrete memory sources and discrete memoryless channels. In future, one may consider Gaussian memoryless sources, the additive Gaussian white noise channel, and the quadratic distortion measure. For the special case of biometrical identification problem, the capacity for Gaussian case was derived in~\cite{willems2003capacity}. However, for Gaussian case of content identification with lossy recovery, one has to first calculate the rate-distortion region~(cf.\ Theorem~\ref{tuncelgidlossy}). To do so, it is necessary to check whether Gaussian test channels are first-order optimal by referring to~\cite{geng2014gau} and \cite{liu2016brascamp}. The strong converse theorem for Gaussian case  may be inspired by works of Fong and Tan in \cite{fong2016mac} and \cite{fong2015proof}.

\appendix
\subsection{Proof of Lemma \ref{fblidlossy}}
\label{prooffblidlossy}

Recall that $\bx=x^n$, $\by=y^n$, $\bz=z^n$, $\bt=(\bx^M,\by^M,s^M,\bz,\mathbf{\hatx})$ and we will drop the subscript of distributions when there is no confusion. Recall the definition of the distribution $P_{W\bT}$ in \eqref{def:pwt} and the definitions of $\{\calA_i(w)\}_{i=1}^7$ in \eqref{def:cala1} to \eqref{def:cala6}. Invoking \eqref{def:expressionpcidlossy} and noting that $d(\bx(w),\mathbf{\hatx})\leq D$ is equivalent to $t\in\calA_7(w)$, we have
\begin{align}
\nn&\rmP_\rmc^{(n)}(f^{(n)},g^{(n)},h^{(n)},D)\\*
\nn&=\sum_{w=1}^M\Bigg(\sum_{\substack{\bt\in\big(\bigcap_{i=1}^6\calA_i(w)\big):\\\bt\in\calA_7(w),~z^n\in\calD(s^M,w)}}P_{W\bT}(w,\bt)\\*
&\qquad\qquad\quad+\sum_{\substack{\bt\in\big(\bigcup_{i=1}^6\calA_i^{\rmc}(w)\big)\big):\\\bt\in\calA_7(w),~z^n\in\calD(s^M,w)}}P_{W\bT}(w,\bt)\Bigg)\label{pcnidlossy}.
\end{align}

Ignoring the constraint that $z^n\in\calD(s^M,w)$, we can upper bound the first term in \eqref{pcnidlossy} by
\begin{align}
\Delta_1
&=\sum_{w=1}^M\frac{1}{M}\sum_{\substack{\bt\in\bigcap_{i=1}^7\calA_i(w)}}\!\!\!\!P(\bt)=P_{W\bT}\Bigg\{\bigcap_{i=1}^7 \calA_i(W)\Bigg\}\label{uppDelta1pcnidlossy}.
\end{align}
For $i=2,\ldots,7$, let
\begin{align}
\Delta_i:=\sum_{w=1}^M\sum_{\substack{\bt\in\calA_{i-1}^{\rmc}(w)\bigcap\calA_7(w):\\z^n\in\calD(s^M,w)}}P_{W\bT}(w,\bt).
\end{align}
Then, the second term in \eqref{pcnidlossy} is no larger than $\sum_{i=2}^7 \Delta_i$ by the union bound.

For simplicity, let $\bt(w):=(\bx(w),\by(w),s(w),\bz,\mathbf{\hatx})$. Invoking \eqref{def:cala1}, in a similar manner as the proof of \cite[Lemma 12]{oohama2016wynerziv}, we obtain that
\begin{align}
\Delta_2
&=\sum_{w=1}^M\sum_{\substack{\bt\in\calA_1^{\rmc}(w)\bigcap\calA_7(w):\\\bz\in\calD(s^M,w)}}P(\bt)\\
&\leq \sum_{w=1}^M\sum_{\substack{\bt\in\calA_1^{\rmc}(w)}}P(\bt)\label{ignorecons}\\
&=\sum_{w=1}^M\frac{1}{M}\sum_{\substack{\bt(w):\\ P(\by(w))\leq e^{-n\eta}Q(\by(w))}}P(\bt(w))\label{usecalb1}\\
&\leq \sum_{w=1}^M\frac{1}{M}\sum_{\substack{\by(w): P(\by(w))\leq e^{-n\eta}Q(\by(w))}}P(\by(w))\label{sumothers}\\
&\leq e^{-n\eta}\sum_{w=1}^M\frac{1}{M}\sum_{\by(w)}Q(\by(w))\\*
&\leq e^{-n\eta},\label{uppDelta2pcnidlossy}
\end{align}
where \eqref{ignorecons} follows from ignoring the constraints that $\big(\bx^M,\by^M,s^M,\bz,\mathbf{\hatx}\big)\in\calA_7(w)$ and $z^n\in\calD(s^M,w)$; \eqref{usecalb1} follows from the definition of $\calA_1(w)$ in \eqref{def:cala1} and the fact that for each $w$,
\begin{align}
\sum_{m\neq w,m\in\calM}\sum_{\bx(m),\by(m),s(m)}P(\bt)
&=P(\bt(w))\label{removenotw},
\end{align}
and \eqref{sumothers} follows since $\sum_{\bx(w),s(w),\bz,\mathbf{\hatx}}P(\bt(w))=P(\by(w))$.

Similarly as \eqref{uppDelta2pcnidlossy}, using \eqref{def:cala2} and \eqref{def:cala3}, we obtain
\begin{align}
\Delta_3&\leq e^{-n\eta},\label{uppDelta3pcnidlossy}\\
\Delta_4&\leq e^{-n\eta}.\label{uppDelta4pcnidlossy}
\end{align}

Invoking the definition of $\calA_4(w)$ in~\eqref{def:cala4}, we conclude that
\begin{align}
\Delta_5
&\leq \sum_{w=1}^M \sum_{\substack{\bt\in\calA_4^{\rmc}(w)}}P(\bt)\label{ignoreconsDelta5}\\
&=\sum_{w=1}^M\frac{1}{M}\sum_{\substack{\bt(w):P(\bx(w),\by(w)|s(w),\bz)\leq\\ e^{-n\eta}Q(\bx(w),\by(w)|s(w),\bz,\mathbf{\hatx})}} P(\bt(w))\\
\nn&=\sum_{w=1}^M\frac{1}{M}\sum_{\substack{\bt(w):P(\bx(w),\by(w)|s(w),\bz)\leq \\e^{-n\eta}Q(\bx(w),\by(w)|s(w),\bz,\mathbf{\hatx})}} 
\Big(P(s(w),\bz)\\
&\qquad\qquad\times P(\bx(w),\by(w)|s(w),\bz)P(\mathbf{\hatx}|s(w),\bz)\Big)\label{use2markov}\\
\nn&\leq \sum_{w=1}^M\frac{1}{M}\sum_{\substack{\bt(w)}}
\Big(P(s(w),\bz)P(\mathbf{\hatx}|s(w),\bz)\\*
&\qquad\qquad\qquad\times e^{-n\eta}Q(\bx(w),\by(w)|s(w),\bz,\mathbf{\hatx})\Big)\\*
&\leq e^{-n\eta}\label{uppDelta5pcnidlossy},
\end{align}
where \eqref{ignoreconsDelta5} follows similarly as \eqref{ignorecons} and \eqref{use2markov} follows due to the Markov chain $(X^n(W),Y^n(W))-(S(W),Z^n)-\hatX^n$.

Then, invoking \eqref{def:cala5}, we upper bound $\Delta_6$ as follows:
\begin{align}
\Delta_6
&=\sum_{w=1}^M \sum_{\substack{\bt\in\calA_5^{\rmc}(w)\bigcap\calA_7(w):\\z^n\in\calD(s^M,w)}}P(\bt)\\
&\leq \sum_{w=1}^M \frac{1}{M}
\sum_{\substack{\bt(w):P(\by(w),z^n)\leq \\ Q(\by(w),z^n|s(w))e^{-n(\eta+R^{\rmc})}}}
P(\bt(w))\label{ignoreconsuseb4}\\
\nn&=\sum_{w=1}^M \frac{1}{M}\sum_{\substack{(\by(w),s(w),\bz):P(\by(w),\bz)\\\leq Q(\by(w),z^n|s(w))e^{-n(\eta+R^{\rmc})}}}\Big(P(\by(w),z^n)\\*
&\qquad\qquad\quad\times P(s(w)|\by(w))\sum_{\mathbf{\hatx}}P(\mathbf{\hatx}|s(w),\bz)\Big)\\
&\leq \sum_{w=1}^M \frac{1}{M}\sum_{\by(w),s(w),\bz}Q(\by(w),z^n|s(w))e^{-n(\eta+R^{\rmc})}\label{psgywleq1}\\*
&=\sum_{w=1}^M \frac{1}{M}\sum_{s(w)}e^{-n(\eta+R^{\rmc})}\\
&\leq e^{-n\eta}\label{uppDelta6pcnidlossy},
\end{align}
where \eqref{ignoreconsuseb4} follows from dropping   $\big(\bx^M,\by^M,s^M,\bz,\mathbf{\hatx}\big)\in\calA_7(w)$, invoking the definition of $\calA_4(w)$ in \eqref{def:cala4} and using~\eqref{removenotw}; \eqref{psgywleq1} follows since $P(s(w)|\by(w))\leq 1$ for all $w\in\calM$; and \eqref{uppDelta6pcnidlossy} follows since $\sum_{s(w)}=|\calL|=L$ for each $w\in\calM$ and the fact that $L\leq e^{nR^{\rmc}}$ from \eqref{lleqrrmc}.

Finally, invoking \eqref{def:cala6}, we upper bound $\Delta_7$ as follows:
\begin{align}
\Delta_7
&\leq \sum_{w=1}^M\sum_{\substack{\bt\in\calA_6^{\rmc}(w):\\z^n\in\calD(s^M,w)}}P(\bt)\label{ignorecons6}\\
&=\sum_{w=1}^M\sum_{\substack{\bt:
\bz\in\calD(s^M,w)\\P(\bz|s(w))\leq e^{nR^{\rmi}}e^{-n\eta}Q(\bz)}}
P(\bt)\label{dropdisuseb6}\\
&=\sum_{w=1}^{M}\frac{1}{M}\sum_{\substack{s^M,\bz:\bz\in\calD(s^M,w)\\P(\bz|s(w))\leq e^{nR^{\rmi}}e^{-n\eta}Q(\bz)}}\!\!\!\!P(s^M)P(\bz|s(w))\label{sumtooneandrearrange}\\
&\leq \sum_{w=1}^{M}\frac{1}{M}\!\!\!\!\sum_{\substack{s^M,\bz:\bz\in\calD(s^M,w)\\P(\bz|s(w))\leq e^{nR^{\rmi}}e^{-n\eta}Q(\bz)}}\!\!\!\!\!\!\!\!\!\!\!\!P(s^M)e^{nR^{\rmi}}e^{-n\eta}Q(\bz)\\
&\leq e^{-n\eta}\frac{e^{nR^{\rmi}}}{M}\sum_{w=1}^{M}\sum_{s^M}\sum_{z^n:z^n\in\calD(s^M,w)}P(s^M)Q(z^n)\label{usemgeq}\\
&\leq e^{-n\eta}\sum_{w=1}^{M}\sum_{s^M}P(s^M)Q(\calD(s^M,w))\\
&=e^{-n\eta}\sum_{s^M}P(s^M)Q\Big(\bigcup_{w=1}^M\calD(s^M,w)\Big)\label{usedisjointd}.\\
&\leq e^{-n\eta}\label{uppDelta7pcnidlossy}
\end{align}
where \eqref{ignorecons6} follows from dropping  
$\big(\bx^M,\by^M,s^M,\bz,\mathbf{\hatx}\big)\in\calA_7(w)$; \eqref{dropdisuseb6} follows invoking the definition of $\calA_6(w)$ in \eqref{def:cala6}; \eqref{sumtooneandrearrange} follows since for each $w\in\calM$,
\begin{align}
\prod_{\substack{i\neq w\\ i\in\calM}} \sum_{\bx(i),\by(i)}P(\bx(i),\by(i),s(i))&=\prod_{\substack{i\neq w\\ i\in\calM}}P_S(s(i)),
\end{align}and
$\sum_{\bx(w),\by(w)} P(x(w),y(w),s(w))P(\bz|\bx(w))
 =P(s(w),\bz)=P(s(w))P(\bz|s(w))$; \eqref{usemgeq} follows since $M\geq e^{nR^{\rmi}}$ due to \eqref{lleqrrmc}; \eqref{usedisjointd} follows since that decoding regions are disjoint for different $w\in\calM$.

The proof of Lemma \ref{fblidlossy} is complete by combining \eqref{uppDelta1pcnidlossy}, \eqref{uppDelta2pcnidlossy}, \eqref{uppDelta3pcnidlossy}, \eqref{uppDelta4pcnidlossy}, \eqref{uppDelta5pcnidlossy}, \eqref{uppDelta6pcnidlossy} and \eqref{uppDelta7pcnidlossy}.

\vspace{-0.05in}
\subsection{Proof of Lemma \ref{fblidlossy2}}
\label{prooffblidlossy2}

Recall that in Section \ref{sec:premnonasymp}, we set $U_i(W)=(S(W),Y^{i-1}(W),Z_{i+1}^n)$ and  $V_i(W)=(S(W),Z_{i+1}^n)$. Then, in the following, let $U_i=(S,Y^{i-1},Z_{i+1}^n)$ and $V_i=(S,Z_{i+1}^n)$. Recall that for $i=1,\ldots,n$,  $Q_{X_iY_iZ_iU_i\hatX_i}$ is any generic distribution and $Q_{Y_i},Q_{Z_i},Q_{X_i|Y_iZ_iU_i},Q_{X_iY_i|Z_iU_i\hatX_i},Q_{Y_iZ_i|U_i}$ are induced by $Q_{X_iY_iZ_iU_i\hatX_i}$. Furthermore, note that in Lemma \ref{fblidlossy}, we are free to choose the distributions $Q_{Y^n}$, $Q_{Z^n}$, $Q_{Z^n|SY^n}$, $Q_{X^n|SY^nZ^n}$, $Q_{X^nY^n|SZ^n\hatX^n}$, $Q_{Y^nZ^n|S}$. Our choices for these distributions are as follows:
\begin{align}
&Q_{Y^n}(y^n):=\prod_{i=1}^n Q_{Y_i}(y_i)\label{lossychoseQY},~Q_{Z^n}(z^n):=\prod_{i=1}^n Q_{Z_i}(z_i)\\
\nn&Q_{Z^n|SY^n}(z^n|s,y^n)\\*
&:=\prod_{i=1}^n Q_{Z_i|SY^iZ_{i+1}^n}(z_i|s,y^i,z_{i+1}^n)\\*
&=\prod_{i=1}^n Q_{Z_i|Y_iU_i}(z_i|y_i,u_i)\\
\nn&Q_{X^n|SY^nZ^n}(x^n|s,y^n,z^n)\\*
&:=\prod_{i=1}^n Q_{X_i|SY^iZ_i^n}(x_i|s,y^i,z_i^n)\\
&=\prod_{i=1}^n Q_{X_i|Y_iZ_iU_i}(x_i|y_i,z_i,u_i)\\
\nn&Q_{X^nY^n|SZ^n\hatX^n}(x^n,y^n|s,z^n,\hatx^n)\\*
&:=\prod_{i=1}^n Q_{X_iY_i|SY^{i-1}Z_i^n\hatX}(x_i,y_i|s,y^{i-1},z_i^n,\hatx_i)\\*
&=\prod_{i=1}^n Q_{X_iY_i|Z_iU_i\hatX_i}(x_i,y_i|z_i,u_i,\hatx_i),\\
\nn&Q_{Y^nZ^n|S}(y^n,z^n|s)\\*
&:=\prod_{i=1}^n Q_{YZ|SY^{i-1}Z_{i+1}^n}(y_i,z_i|s,y^{i-1},z_{i+1}^n)\\*
&=\prod_{i=1}^n Q_{YZ|U_i}(y_i,z_i|u_i)\label{lossychoseQYZgS}.
\end{align}

Recall from Section \ref{sec:premnonasymp} that for each $w\in\calM$, the joint distribution of $(X^n(W),Y^n(W),S(W),Z^n,\hatX^n)$ is the same and denoted as $P_{X^nY^nSZ^n\hatX^n}$. The marginal distributions of $P_{X^nY^nSZ^n\hatX^n}$ are as follows:
\begin{align}
&P_{Y^n}(y^n)=\prod_{i=1}^n P_{Y_i}(y_i),\label{lossypy}~P_{Z^n}(z^n)=\prod_{i=1}^n P_{Z_i}(z_i),\\
&P_{Z^n|Y^n}(z^n|y^n)=\prod_{i=1}^n P_{Z_i|Y_i}(z_i|y_i),\\
&P_{X^n|Y^nZ^n}(x^n|y^n,z^n)=\prod_{i=1}^n P_{X_i|Y_iZ_i}(x_i|y_i,z_i),\\
\nn&P_{X^nY^n|SZ^n}(x^n,y^n|s,z^n)\\*
&=\prod_{i=1}^n P_{X_iY_i|SX^{i-1}Y^{i-1}Z^n}(x_i,y_i|s,x^{i-1},y^{i-1},z^n)\\*
&=\prod_{i=1}^n P_{X_iY_i|SY^{i-1}Z_i^n}(x_i,y_i|s,y^{i-1},z_i^n)\label{checkmarkov}\\*
&=\prod_{i=1}^n P_{X_iY_i|Z_iU_i}(x_i,y_i|z_i,u_i)\label{lossypxygsz},\\
&P_{Y^nZ^n}(y^n,z^n)=\prod_{i=1}^nP_{Y_iZ_i}(y_i,z_i),\label{pyzn}\\
&P_{Z^n|S}(z^n|s)=\prod_{i=1}^n P_{Z_i|SZ_{i+1}^n}(z_i|s,z_{i+1}^n)\\*
&\qquad\qquad\quad=\prod_{i=1}^n P_{Z_i|V_i}(z_i|v_i)\label{lossypzgs},
\end{align}
where \eqref{checkmarkov} holds since the Markov chain $(X_i(W),Y_i(W))-(S(W),Y^{i-1}(W),Z_i^n)-(X^{i-1}(W),Z^{i-1})$ holds. The proof of this Markov chain is similar as \cite[Lemma 2]{oohama2016wynerziv} and thus omitted.

Recall the definitions of $\{\calB_i(w)\}_{i=1}^7$ in \eqref{def:calb1} to \eqref{def:calb7}. For each $w\in\calM$, let
\begin{align}
\nn\tilde{\calB}_5(w)
&:=\Bigg\{\bt:
R^{\rmc}+\eta\geq \\*
&\qquad\frac{1}{n}\sum_{i=1}^n\log\frac{Q_{Y_iZ_i|U_i}(y_i(w),z_i|u_i(w))}{P_{Y_iZ_i}(y_i(w),z_i)}\Bigg\}.
\end{align}
We remark that $\tilde{\calB}_5(w)$ corresponds to $\calA_5(w)$ (recall \eqref{def:cala5}) in Lemma \ref{fblidlossy} by applying the choice of $Q_{Y^nZ^n|S}$ in \eqref{lossychoseQYZgS} and the definition in \eqref{pyzn}.

Recall the definition of $P_{W\bT}$ in Section \ref{sec:premnonasymp}. Using Lemma~\ref{fblidlossy} and   \eqref{lossychoseQY}--\eqref{lossychoseQYZgS} and \eqref{lossypy}--\eqref{lossypzgs},  we obtain
\begin{align}
\nn&\rmP_{\rmc}^{(n)}(f^{(n)},g^{(n)})\\*
&\leq P_{W\bT}\Bigg\{\bigcap_{i=1,i\neq 5}^7\calB_i(W)\bigcap\tilde{\calB}_5(W)\Bigg\}+6e^{-n\eta}\label{lemmapcnidlossy}.
\end{align}

For each $w\in\calM$, when $t\in\bigcap_{i=1,i\neq 5}^7\calB_i(w)\bigcap\tilde{\calB}_5(w)$, invoking the constraints related with $\tilde{\calB}_5(w)$ and $\calB_6(w)$, we have that
\begin{align}
R^{\rmc}-R^{\rmi}
\nn&\geq \frac{1}{n}\sum_{i=1}^n\Bigg(\log\frac{Q_{Y_iZ_i|U_i}(y_i(w),z_i|u_i(w))}{P_{Y_iZ_i}(y_i(w),z_i)}\\*
&\qquad\qquad\qquad-\log\frac{P_{Z_i|V_i}(z_i|v_i(w)}{Q_{Z_i}(z_i)}\Bigg)-3\eta\\
\nn&=\frac{1}{n}\sum_{i=1}^n\Bigg(\log\frac{Q_{Y_iZ_i|U_i}(y_i(w),z_i|u_i(w))}{P_{Y_iZ_i}(y_i(w),z_i)}\\*
&\qquad\qquad\qquad\quad\times\frac{Q_{Z_i}(z_i)}{P_{Z_i|V_i}(z_i|v_i(w)}\Bigg)-3\eta\label{rc-ri}.
\end{align}
Hence, for each $w\in\calM$, when $t\in\bigcap_{i=1,i\neq 5}^7\calB_i(w)\bigcap\tilde{\calB}_5(w)$, we have $t\in\bigcap_{i=1,i}^7\calB_i(w)$ (recall \eqref{def:calb5}). Thus,
\begin{align}
\nn&P_{W\bT}\Bigg\{\bigcap_{i=1,i\neq 5}^7\calB_i(W)\bigcap\tilde{\calB}_5(W)\Bigg\}\\*
&\leq P_{W\bT}\Bigg\{\bigcap_{i=1}^7\calB_i(W)\Bigg\}.
\end{align}

The proof of Lemma \ref{fblidlossy2} is now complete.

\subsection{Proof of Lemma \ref{fblidlossy3}}
\label{prooffblidlossy3}
\blue{Recall that $\bar{a}=1-a$ for $a\in[0,1]$.} For each $w\in\calM$ and any $(\alpha,\lambda,\mu,\beta)\in\bbR_+^2\times [0,1]^2$, define $\calF_i(w)=\calB_i(w)$ (cf.~\eqref{def:calb1} to \eqref{def:calb4}) for $i\in[1:4]$ and define the sets $\calF_i(w)$ for $i\in[5:7]$ as in \eqref{def:calf5} to \eqref{def:calf7} on the top of next page.
\begin{figure*}
\begin{align}
\calF_5(w)&:=\Bigg\{\bt:
\alpha\bar{\mu}\bar{\beta}\Big(R^{\rmc}-R^{\rmi}\Big)\geq  \alpha\bar{\mu}\bar{\beta}\frac{1}{n}\sum_{i=1}^n\log \Bigg(\frac{Q_{Y_iZ_i|U_i}(y_i(w),z_i|u_i(w))}{P_{Y_iZ_i}(y_i(w),z_i)}\frac{Q_{Z_i}(z_i)}{P_{Z|V_i}(z_i|v_i(w)}\Bigg)-3\alpha\bar{\mu}\bar{\beta}\eta\Bigg\},\label{def:calf5}\\
\calF_6(w)&:=\Bigg\{\bt:
\alpha\bar{\mu}\beta R^{\rmi}\leq \frac{\alpha\bar{\mu}\beta}{n}\sum_{i=1}^n \log\frac{P_{Z_i|V_i}(z_i|v_i(w))}{Q_{Z}(z_i)}+\alpha\bar{\mu}\beta\eta\Bigg\},\label{def:calf6}\\
\calF_7(w)&:=\Bigg\{\bt:
\alpha\mu D \geq \frac{\alpha\mu}{n}\sum_{i=1}^n \log e^{d(x_i(w),\hatx_i)}\Bigg\}.\label{def:calf7}
\end{align}
\hrulefill
\end{figure*}

Invoking Lemma \ref{fblidlossy2}, for any $(\alpha,\lambda,\mu,\beta)\in\bbR_+^2\times [0,1]^2$, we obtain
\begin{align}
\nn&\rmP_\rmc^{(n)}(f^{(n)},g^{(n)},h^{(n)},D)\\*
&\leq P_{W\bT}\Bigg\{\bigcap_{i=1}^7\calF_i(W)\Bigg\}+6e^{-n\eta}\label{lbprcalf}.
\end{align}

We make use of the following Cram\'er's bound for large deviations.
\begin{lemma}
\label{cramer}
For any real valued random variable $Z$ and any $\lambda>0$, we have
\begin{align}
\Pr\{Z\geq a\}\leq \exp\big(-(\lambda a-\log\mathbb{E}[\exp(\lambda Z)]\big).
\end{align}
\end{lemma}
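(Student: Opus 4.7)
The plan is to use the standard Chernoff--Cram\'er exponential-Markov argument, which is the textbook derivation of this inequality. First, since $\lambda>0$, the map $t\mapsto e^{\lambda t}$ is strictly increasing, so the events $\{Z\geq a\}$ and $\{e^{\lambda Z}\geq e^{\lambda a}\}$ coincide. This reduces a tail bound for a general real-valued random variable $Z$ to a tail bound for the non-negative random variable $e^{\lambda Z}$.

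Second, I would apply Markov's inequality to $e^{\lambda Z}\geq 0$, which gives
\begin{align}
\Pr\{e^{\lambda Z}\geq e^{\lambda a}\}\leq e^{-\lambda a}\,\mathbb{E}[e^{\lambda Z}]=\exp\!\big(-(\lambda a-\log\mathbb{E}[e^{\lambda Z}])\big),
\end{align}
which is exactly the claimed form. Combining the two steps completes the proof in a single line.

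There is no genuine obstacle here: the result is a classical one-line inequality, and the assumption $\lambda>0$ enters only to preserve the direction of the inequality when passing through the monotone transform $t\mapsto e^{\lambda t}$. The bound is also trivially valid in the degenerate case $\mathbb{E}[e^{\lambda Z}]=+\infty$, since the right-hand side is then $+\infty$, so no integrability hypothesis on $Z$ needs to be imposed.
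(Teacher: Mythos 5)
Your proof is correct and is the standard Chernoff--Markov derivation of this classical inequality; the paper itself states Lemma \ref{cramer} without proof, treating it as a well-known result, and your argument is exactly the textbook justification one would supply.
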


Let 
\begin{align}
R(\alpha,\mu,\beta)
&:=
\alpha\bar{\mu}(\bar{\beta}R^{\rmc}-R^{\rmi})+\alpha\mu D\label{def:rincramer},\\
c(\alpha,\mu,\beta)
&:=4+3\alpha\bar{\mu}\bar{\beta}+\alpha\bar{\mu}\beta\\
&=4+\alpha\bar{\mu}(3-2\beta)\label{def:cambeta}.
\end{align}

Recall the definition of $P_{W\bT}$ in Section \ref{sec:premnonasymp}. In this subsection, whenever we use $\Pr$, we mean the probability with respect to $P_{W\bT}$ unless otherwise stated.
Recall the definition of $f^{(\alpha,\mu,\beta)}_{Q_i,P_i}(\cdot)$ in \eqref{def:fincramer}.
Combining \eqref{lbprcalf}, \eqref{def:rincramer}, \eqref{def:fincramer} and Lemma \ref{cramer}, we obtain that for any $\lambda\in\bbR_+$, the probability of correct decoding can be upper bounded as in \eqref{crameridlossy2} on the top of next page,
\begin{figure*}
\begin{align}
\nn&\rmP_\rmc^{(n)}(f^{(n)},g^{(n)},h^{(n)},D)\\*
&\leq \Pr\Big\{n\Big(R(\alpha,\mu,\beta)+c(\alpha,\mu,\beta)\eta\Big)\geq \sum_{i=1}^n\log f^{(\alpha,\mu,\beta)}_{Q_i,P_i}(X_i(W),Y_i(W),Z_i,\hatX_i|U_i(W),V_i(W))\Big\}+6e^{-n\eta}\\
&=\Pr\Big\{-\sum_{i=1}^n\log f^{(\alpha,\mu,\beta)}_{Q_i,P_i}(X_i(W),Y_i(W),Z_i,\hatX_i|U_i(W),V_i(W))\geq -n\Big(R(\alpha,\mu,\beta)+c(\alpha,\mu,\beta)\eta\Big)\Big\}+6e^{-n\eta}\\
&\leq \exp\Big\{n\lambda \Big(R(\alpha,\mu,\beta)+c(\alpha,\mu,\beta)\eta\Big)\nn\\
&\qquad +\log\mathbb{E}\Big[\exp\Big(-\lambda\sum_{i=1}^n\log f^{(\alpha,\mu,\beta)}_{Q_i,P_i}(X_i(W),Y_i(W),Z_i,\hatX_i|U_i(W),V_i(W))\Big)\Big]\Big\}+6e^{-n\eta}\\
&=\exp\Big\{n\Big(\lambda R(\alpha,\mu,\beta)+\lambda c(\alpha,\mu,\beta)\eta-\frac{1}{n}\Omega^{(\alpha,\mu,\beta,\lambda)}(\{Q_i\}_{i=1}^n)\Big)\Big\}+6e^{-n\eta}\label{crameridlossy2},
\end{align}
\hrulefill
\end{figure*}
where \eqref{crameridlossy2} follows from \eqref{def:omegaamblambda}.

Choose $\eta$ such that
\begin{align}
-\eta
\nn&=\lambda R(\alpha,\mu,\beta)+\lambda c(\alpha,\mu,\beta)\eta\\*
&\qquad-\frac{1}{n}\Omega^{(\alpha,\mu,\beta,\lambda)}(\{Q_i\}_{i=1}^n).
\end{align}
Thus,
\begin{align}
\eta=\frac{\frac{1}{n}\Omega^{(\alpha,\mu,\beta,\lambda)}(\{Q_i\}_{i=1}^n)-\lambda R(\alpha,\mu,\beta)}{1+\lambda c(\alpha,\mu,\beta)}\label{choseetaidlossy}.
\end{align}
Using the bound in \eqref{crameridlossy2}, the definition in~\eqref{choseetaidlossy} and recalling the definitions of $\kappa(\cdot)$ in \eqref{def:kappa}, $R(\cdot)$ in \eqref{def:rincramer} and $c(\cdot)$ in \eqref{def:cambeta}, we obtain that
\begin{align}
\nn&\rmP_\rmc^{(n)}(f^{(n)},g^{(n)},h^{(n)},D)\leq 7\exp(-n\eta)\\*
&\!\!\leq 7\exp\left(-n\frac{\frac{1}{n}\Omega^{(\alpha,\mu,\beta,\lambda)}(\{Q_i\}_{i=1}^n)-\kappa(\alpha,\mu,\beta,\lambda)}{1+\lambda\big(4+\alpha\bar{\mu}(3-2\beta)\big)}\right).
\end{align}
The proof of Lemma \ref{fblidlossy3} is now complete.

\subsection{Proof of lemma \ref{lbunderlineOmegalossy}}
\label{prooflbunderlineOmegalossy}

\subsubsection{Removing the Dependence on the Identification Index}

Recall from Section \ref{sec:premnonasymp} that for each $w$, the joint distribution of $(X^n(w),Y^n(w),S(w),Z^n,\hatX^n)$ is $P_{\bX\bY S\bZ\mathbf{\hatX}}$ and $P_{X_iY_iZ_iU_iV_i}$ is induced by $P_{S\bX\bY\bZ\mathbf{\hatX}}$. Furthermore, recall that $Q_i$ denotes $Q_{X_iY_iZ_iU_i\hatX_i}$ and $P_i$ denotes $P_{X_iY_iZ_iU_iV_i}$. Define
\begin{align}
\nn&g_{Q_i,P_i}^{(\alpha,\mu,\beta,\lambda)}(x_i,y_i,z_i,\hatx_i|u_i,v_i)\\*
&:=\Bigg(\frac{1}{f^{(\alpha,\mu,\beta)}_{Q_i,P_i}(x_i,y_i,z_i,\hatx_i|u_i,v_i)}\Bigg)^{\lambda}\label{flambda}.
\end{align}

Invoking \eqref{def:omegaamblambda}, we obtain \eqref{lossyOmeganow} on the top of  the next page,
\begin{figure*}
\begin{align}
\exp\Big(-\Omega^{(\alpha,\mu,\lambda)}(\{Q_i\}_{i=1}^n)\Big)
\nn&=\sum_{w=1}^M\frac{1}{M}\Bigg(\sum_{x^n(w),y^n(w),s(w),z^n,\hatx^n}P_{\bX\bY S\bZ\mathbf{\hatX}}(x^n(w),y^n(w),s(w),z^n,\hatx^n)\\*
&\qquad\qquad\qquad\qquad \prod_{i=1}^n  g_{Q_i,P_i}^{(\alpha,\mu,\beta,\lambda)}(x_i(w),y_i(w),z_i,\hatx_i|u_i(w),v_i(w))\Bigg)\\
&=\sum_{x^n,y^n,s,z^n,\hatx^n}P_{\bX\bY S\bZ\mathbf{\hatX}}
(x^n,y^n,s,z^n,\hatx^n)\prod_{i=1}^n g_{Q_i,P_i}^{(\alpha,\mu,\beta,\lambda)}(x_i,y_i,z_i,\hatx_i|u_i,v_i)
\label{lossyOmeganow},
\end{align}
\hrulefill
\end{figure*}
where $u_i=(s,y^{i-1},z_{i+1}^n)$, $v_i=(s,z_{i+1}^n)$ and the joint distribution of $X^n,Y^n,S,Z^n,\hatX^n$ is $P_{\bX\bY S\bZ\mathbf{\hatX}}$. Let $U_i=(S,Y^{i-1},Z_{i+1}^n)$. Then we have the Markov chains $Z_i-X_i-Y_i-U_i$ and $(X_i,Y_i)-(U_i,Z_i)-\hatX_i$. In the following, all the distributions are induced by $P_{\bX\bY S\bZ\mathbf{\hatX}}$ and we will omit subscripts of distributions for convenience.

\subsubsection{Preliminaries} 

Invoking \eqref{lossyOmeganow}, we obtain that
\begin{align}
\nn&\exp\Big(-\Omega^{(\alpha,\mu,\beta,\lambda)}(\{Q_i\}_{i=1}^n)\Big)\\*
\nn&=\sum_{s,z^n}P(s,z^n)\sum_{x^n,y^n,\hatx^n}P(x^n,y^n,\hatx^n|s,z^n)\\*
&\qquad\times\prod_{i=1}^ng_{Q_i,P_i}^{(\alpha,\mu,\beta,\lambda)}(x_i,y_i,z_i,\hatx_i|u_i,v_i)\label{lossyOmegaow2}.
\end{align}
For $i=1,\ldots,n$, define
\begin{align}
\nn&C_i(s,z^n)
:=\sum_{x^i,y^i,\hatx^n}P(x^i,y^i,\hatx^i|s,z^n)\\*
&\qquad\qquad\qquad\times\prod_{j=1}^i g_{Q_j,P_j}^{(\alpha,\mu,\beta,\lambda)}(x_j,y_j,z_j,\hatx_j|u_j,v_j)\label{def:Cisz}\\
\nn&P^{(\alpha,\mu,\beta,\lambda)}(x^i,y^i,\hatx^i|s,z^n):=\frac{P(x^i,y^i,\hatx^i|s,y^i)}{C_i(s,z^n)}\\*
&\qquad\qquad\qquad\times\prod_{j=1}^i g_{Q_j,P_j}^{(\alpha,\mu,\beta,\lambda)}(x_j,y_j,z_j,\hatx_j|u_j,v_j),\label{def:Pamblxyhatxgszi}\\
&\Psi_i^{(\alpha,\mu,\beta,\lambda)}(s,z^n|\{Q_j\}_{j=1}^i):=C_i(s,z^n)/C_{i-1}(s,z^n)\label{def:Phisz}.
\end{align}

Similarly as \cite[Lemma 6]{oohama2016wynerziv}, we obtain the following lemma.
\begin{lemma}
\label{propPhisz}
For $i=1,\ldots,n$ and any $(s,z^n,x^t,y^t,\hatx^t)$, we have
\begin{align}
\nn&\Psi_i^{(\alpha,\mu,\beta,\lambda)}(s,z^n|\{Q_j\}_{j=1}^i)\\*
\nn&=\sum_{x^i,y^i,\hatx^i}P^{(\alpha,\mu,\beta,\lambda)}(x^{i-1},y^{i-1},\hatx^{i-1}|s,z^n)\\*
\nn&\qquad\times P(x_i,y_i,\hatx_i|s,x^{i-1},y^{i-1},\hatx^{i-1},z^n)\\*
&\qquad\times g_{Q_i,P_i}^{(\alpha,\mu,\beta,\lambda)}(x_i,y_i,z_i,\hatx_i|u_i,v_i).
\end{align}
\end{lemma}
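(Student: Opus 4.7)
The plan is to prove Lemma~\ref{propPhisz} by direct unrolling of the definitions \eqref{def:Cisz}--\eqref{def:Phisz}. Since $\Psi_i^{(\alpha,\mu,\beta,\lambda)} = C_i/C_{i-1}$, the task reduces to factoring the length-$i$ sum defining $C_i(s,z^n)$ into a length-$(i-1)$ piece (which will reconstruct $C_{i-1}(s,z^n)$ in the denominator, thereby producing the tilted measure $P^{(\alpha,\mu,\beta,\lambda)}(x^{i-1},y^{i-1},\hat{x}^{i-1}|s,z^n)$ by \eqref{def:Pamblxyhatxgszi}) and a single-coordinate piece involving $g_{Q_i,P_i}^{(\alpha,\mu,\beta,\lambda)}$.

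Concretely, I would start from \eqref{def:Cisz} and apply the chain rule of conditional probability to write
\[
P(x^i,y^i,\hat{x}^i|s,z^n) = P(x^{i-1},y^{i-1},\hat{x}^{i-1}|s,z^n)\,P(x_i,y_i,\hat{x}_i|s,x^{i-1},y^{i-1},\hat{x}^{i-1},z^n),
\]
and simultaneously split $\prod_{j=1}^{i} g_{Q_j,P_j}^{(\alpha,\mu,\beta,\lambda)} = \bigl(\prod_{j=1}^{i-1} g_{Q_j,P_j}^{(\alpha,\mu,\beta,\lambda)}\bigr)\cdot g_{Q_i,P_i}^{(\alpha,\mu,\beta,\lambda)}(x_i,y_i,z_i,\hat{x}_i|u_i,v_i)$. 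Substituting these two factorizations into $C_i(s,z^n)$, dividing by $C_{i-1}(s,z^n)$, and recognizing that
\[
\frac{P(x^{i-1},y^{i-1},\hat{x}^{i-1}|s,z^n)\prod_{j=1}^{i-1} g_{Q_j,P_j}^{(\alpha,\mu,\beta,\lambda)}}{C_{i-1}(s,z^n)} = P^{(\alpha,\mu,\beta,\lambda)}(x^{i-1},y^{i-1},\hat{x}^{i-1}|s,z^n)
\]
by the definition \eqref{def:Pamblxyhatxgszi}, yields the desired recursive identity after summing over $(x^i,y^i,\hat{x}^i)$.

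There is no substantive obstacle here: this is a purely algebraic telescoping/bookkeeping step, analogous to \cite[Lemma~6]{oohama2016wynerziv}, that sets up the ground for the recursive single-letterization in the remainder of the proof of Lemma~\ref{lbunderlineOmegalossy}. The only minor care worth flagging is that the one-step conditional $P(x_i,y_i,\hat{x}_i|s,x^{i-1},y^{i-1},\hat{x}^{i-1},z^n)$ is retained in its full form at this stage; no Markov-chain simplification (such as those that will ultimately reduce the dependence to $(z_i,u_i)$) is invoked here, since those reductions are exploited only later, when H\"{o}lder's inequality is used to eliminate $V_i$ and to decouple the coordinates.
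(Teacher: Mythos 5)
Your proof is correct and follows exactly the approach the paper intends (the paper simply defers to the analogous~\cite[Lemma~6]{oohama2016wynerziv} without spelling out the algebra): chain rule on $P(x^i,y^i,\hat{x}^i\,|\,s,z^n)$, split of the product $\prod_{j=1}^{i}g_{Q_j,P_j}$ into its first $i-1$ factors and the $i$-th factor, division by $C_{i-1}(s,z^n)$, and recognition of the tilted conditional from~\eqref{def:Pamblxyhatxgszi}. Your remark that no Markov simplification is invoked at this step is also accurate and matches how the paper uses the lemma.
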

Hence, combining \eqref{lossyOmegaow2} and Lemma \ref{propPhisz}, we obtain that
\begin{align}
\nn&\exp\Big(-\Omega^{(\alpha,\mu,\beta,\lambda)}(\{Q_i\}_{i=1}^n)\Big)\\*
&=\sum_{s,z^n}P(s,z^n)\prod_{i=1}^n \Psi_i^{(\alpha,\mu,\beta,\lambda)}(s,z^n|\{Q_j\}_{j=1}^i)\label{lossyOmegaow3}.
\end{align}
Then, for $i=1,\ldots,n$, define
\begin{align}
&\tilC_i:=\sum_{s,z^n} P(s,z^n)\prod_{j=1}^i\Psi_j^{(\alpha,\mu,\beta,\lambda)}(s,z^n|\{Q_l\}_{l=1}^j),\label{def:tilCi}\\
&P_{SZ^n}^{(\alpha,\mu,\beta,\lambda)|i}(s,z^n):=\frac{P(s,z^n)}{\tilC_i}\prod_{j=1}^i \Psi_j^{(\alpha,\mu,\beta,\lambda)}(s,z^n|\{Q_l\}_{l=1}^j),\label{def:PSZiambl}\\
&\Lambda_i^{(\alpha,\mu,\beta,\lambda)}(\{Q_j\}_{j=1}^i):=\tilC_i/\tilC_{i-1}.\label{def:Lambdai}
\end{align}
Similarly as \cite[Lemma 7]{oohama2016wynerziv}, we obtain the following lemma.
\begin{lemma}
\label{propLambdai}
For $i=1,\ldots,n$, we have
\begin{align}
\nn\Lambda_i^{(\alpha,\mu,\beta,\lambda)}(\{Q_j\}_{j=1}^i)
&=\sum_{s,z^n}P_{SZ^n}^{(\alpha,\mu,\beta,\lambda)|i-1}(s,z^n)\\*
&\quad\times\Psi_i^{(\alpha,\mu,\beta,\lambda)}(s,z^n|\{Q_j\}_{j=1}^i).
\end{align}
\end{lemma}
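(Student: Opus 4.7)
The plan is to proceed directly from the definitions of $\tilC_i$, $P_{SZ^n}^{(\alpha,\mu,\beta,\lambda)|i}$, and $\Lambda_i^{(\alpha,\mu,\beta,\lambda)}$ given in \eqref{def:tilCi}--\eqref{def:Lambdai}. This lemma is the ``telescoping step'' that sits between Lemma \ref{propPhisz} and the exponent bound we ultimately want, and it is structurally analogous to \cite[Lemma 7]{oohama2016wynerziv}; no new probabilistic content is needed.

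First I would rewrite $\tilC_i$ by peeling off the last factor of the product:
\begin{align}
\tilC_i &= \sum_{s,z^n} P(s,z^n)\prod_{j=1}^{i}\Psi_j^{(\alpha,\mu,\beta,\lambda)}(s,z^n|\{Q_l\}_{l=1}^j) \nn\\
&= \sum_{s,z^n} \Bigg(P(s,z^n)\prod_{j=1}^{i-1}\Psi_j^{(\alpha,\mu,\beta,\lambda)}(s,z^n|\{Q_l\}_{l=1}^j)\Bigg) \nn\\
&\quad\times \Psi_i^{(\alpha,\mu,\beta,\lambda)}(s,z^n|\{Q_j\}_{j=1}^i).
\end{align}
Next, I would recognize the bracketed quantity from \eqref{def:PSZiambl}: it equals $\tilC_{i-1}\cdot P_{SZ^n}^{(\alpha,\mu,\beta,\lambda)|i-1}(s,z^n)$. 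Substituting this identity and pulling $\tilC_{i-1}$ out of the sum yields
\begin{align}
\tilC_i &= \tilC_{i-1}\sum_{s,z^n}P_{SZ^n}^{(\alpha,\mu,\beta,\lambda)|i-1}(s,z^n)\nn\\
&\quad\times \Psi_i^{(\alpha,\mu,\beta,\lambda)}(s,z^n|\{Q_j\}_{j=1}^i).
\end{align}
Dividing both sides by $\tilC_{i-1}$ and invoking the definition $\Lambda_i^{(\alpha,\mu,\beta,\lambda)}(\{Q_j\}_{j=1}^i):=\tilC_i/\tilC_{i-1}$ from \eqref{def:Lambdai} produces the claimed identity.

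There is no real obstacle here; the only care required is in bookkeeping the index ranges in the products and verifying that $\tilC_{i-1}>0$ so that the tilted distribution $P_{SZ^n}^{(\alpha,\mu,\beta,\lambda)|i-1}$ is well-defined. Positivity is immediate because every factor $\Psi_j^{(\alpha,\mu,\beta,\lambda)}$ is a sum of nonnegative exponentials over a nontrivial support (cf.\ \eqref{def:Phisz} and Lemma \ref{propPhisz}), and $P(s,z^n)$ has positive mass on the support of interest. Hence Lemma \ref{propLambdai} will follow as a one-line consequence of the recursive construction, in complete parallel with the corresponding step in \cite{oohama2016wynerziv}, and it will enable us in the next step to replace the product $\prod_{i=1}^n \Lambda_i$ with expectations under the tilted distributions, which is what is ultimately needed to single-letterize $\Omega^{(\alpha,\mu,\beta,\lambda)}(\{Q_i\}_{i=1}^n)$ and establish Lemma \ref{lbunderlineOmegalossy}.
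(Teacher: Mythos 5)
Your proof is correct and follows the same direct route the paper intends (the paper simply cites \cite[Lemma 7]{oohama2016wynerziv} rather than spelling it out): peel off $\Psi_i$ from the product in \eqref{def:tilCi}, recognize the remainder as $\tilC_{i-1}\,P_{SZ^n}^{(\alpha,\mu,\beta,\lambda)|i-1}(s,z^n)$ via \eqref{def:PSZiambl}, and divide by $\tilC_{i-1}$ using \eqref{def:Lambdai}. The bookkeeping, including the base case $\tilC_0=1$ and the positivity remark, is fine; nothing is missing.
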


Using \eqref{lossyOmegaow3} and Lemma \ref{propLambdai}, we obtain
\begin{align}
\exp\Big(-\Omega^{(\alpha,\mu,\beta,\lambda)}(\{Q_i\}_{i=1}^n)\Big)=\prod_{i=1}^n\Lambda_i^{(\alpha,\mu,\beta,\lambda)}(\{Q_j\}_{j=1}^i).
\label{lossyOmegaow4}
\end{align}

\subsubsection{Final proof of Lemma \ref{lbunderlineOmegalossy}}

Recall \eqref{def:omegaamblambdaQP}. Define
\begin{align}
\hat{\calQ}_{n}
\nn&:=\Big\{Q_{XYZU\hatX}:\\*
&\quad|\calU|\leq |\calL| |\calX|^{n-1}|\calY|^{n-1}|\hat{\calX}|^{n-1}|\calZ|^{n-1}\Big\}\label{def:lossyhatcalQn},\\
\hat{\Omega}_{n}^{(\alpha,\mu,\beta,\lambda)}
&:=\min_{Q_{XYZU\hatX}\in\hat{\calQ}_{n}}\Omega^{(\alpha,\mu,\beta,\lambda)}(Q_{XYZU\hatX})\label{def:hatOmeganambl}.
\end{align}

Recall that $u_i=(s,y^{i-1},z_{i+1}^n)$. For each $i=1,\ldots,n$, define
\begin{align}
\nn&P^{(\alpha,\mu,\beta,\lambda)}(s,x_i,y^i,z_i^n,\hatx_i)=P^{(\alpha,\mu,\beta,\lambda)}(x_i,y_i,z_i,u_i,\hatx_i)\\
\nn&:=\sum_{x^{i-1},z^{i-1},\hatx^{i-1}} \Big(P^{(\alpha,\mu,\beta,\lambda)|i-1}(s,z^n)\\*
\nn&\qquad\qquad\times P^{(\alpha,\mu,\beta,\lambda)}(x^{i-1},y^{i-1},\hatx^{i-1}|s,z^n)\\
&\qquad\qquad\times P(x_i,y_i,\hatx_i|s,x^{i-1},y^{i-1},\hatx^{i-1},z^n)\Big)\label{def:PSXYZhXambl},
\end{align}
where $P^{(\alpha,\mu,\beta,\lambda)}(x^{i-1},y^{i-1},\hatx^{i-1}|s,z^n)$ was defined in \eqref{def:Pamblxyhatxgszi} and $P^{(\alpha,\mu,\beta,\lambda)|i-1}(s,z^n)$ was defined in \eqref{def:PSZiambl}.

Invoking Lemmas \ref{propPhisz}, \ref{propLambdai} and \eqref{def:PSXYZhXambl}, we obtain that for $i=1,\ldots,n$,
\begin{align}
\nn&\Lambda_i^{(\alpha,\mu,\beta,\lambda)}(\{Q_j\}_{j=1}^i)\\*
\nn&=\sum_{x_i,y_i,z_i,u_i,\hatx_i}P^{(\alpha,\mu,\beta,\lambda)}(x_i,y_i,z_i,u_i,\hatx_i)\\*
&\qquad\times g_{Q_i,P_i}^{(\alpha,\mu,\beta,\lambda)}(x_i,y_i,z_i,\hatx_i|u_i,v_i)\label{Lambdaieq}.
\end{align}

Note that $Q_i=Q_{X_iY_iZ_iU_i\hatX_i}$ can be chosen arbitrarily for all $i=1,\ldots,n$. Here we apply the recursive method. For each $i=1,\ldots,n$, we choose $Q_{X_iY_iZ_iU_i\hatX_i}$ such that
\begin{align}
\nn&Q_{X_iY_iZ_iU_i\hatX_i}(x_i,y_i,z_i,u_i,\hatx_i)\\*&=P^{(\alpha,\mu,\beta,\lambda)}(x_i,y_i,z_i,u_i,\hatx_i)\label{tochooseqandp}.
\end{align}
Then, let $Q_{Y_i}$, $Q_{Z_i}$, $Q_{X_i|Y_iZ_iU_i}$, $Q_{X_iY_i|Z_iU_i\hatX_i}$, $Q_{Y_iZ_i|U_i}$, $Q_{Y_iZ_i}$, $Q_{X_iY_i|Z_iU_i}$, $Q_{Z_i|U_i}$ be induced by $Q_{X_iY_iZ_iU_i\hatX_i}$. Thus, we have $Q_{X_iY_iZ_iU_i\hatX_i}\in\hat{\calQ}_{n}$. 

Using the definition of $g^{\cdot}_{\cdot}(\cdot)$ in \eqref{flambda}, define
\begin{align}
\nn&h^{(\alpha,\mu,\beta,\lambda)}_{Q_i,P_{X_iY_iZ_i}}(x_i,y_i,z_i,\hatx_i|u_i)\\*
\nn&:=g^{(\alpha,\mu,\beta,\lambda)}_{Q_i,P_i}(x_i,y_i,z_i,\hatx_i|u_i,v_i)\\
&\quad\times\Bigg(\frac{P^{\lambda}_{X_iY_i|Z_iU_i}(x_i,y_i|z_i,u_i)}{Q^{\lambda}_{X_iY_i|Z_iU_i}(x_i,y_i|z_i,u_i)}\frac{P_{Z_i|V_i}^{\lambda\alpha\bar{\mu}\beta}(z_i|v_i)}{Q_{Z_i|U_i}^{\lambda\alpha\bar{\mu}\beta}(z_i|u_i)}\Bigg)^{-1}\label{def:fpxyzi}.
\end{align}
Recall that $\bar{a}=1-a$ for $a\in[0,1]$. In the following, for simplicity, we will drop the subscripts of the distributions. Furthermore, let $\psi:=1-\lambda-\lambda\alpha\bar{\mu}\beta$. From \eqref{Lambdaieq} and \eqref{tochooseqandp}, we obtain
\begin{align}
\nn&\Lambda_i^{(\alpha,\mu,\beta,\lambda)}(\{Q_j\}_{j=1}^i)\\*
\nn&=\mathbb{E}_{\{Q_j\}_{j=1}^i}\big[
g^{(\alpha,\mu,\beta,\lambda)}_{Q_i,P_i}(X_i,Y_i,Z_i,\hatX_i|U_i,V_i)
\big]\\
\nn&=\mathbb{E}_{\{Q_j\}_{j=1}^i}\Bigg[
h^{(\alpha,\mu,\beta,\lambda)}_{Q_i,P_{X_iY_iZ_i}}(X_i,Y_i,Z_i,\hatX_i|U_i)\\*
&\qquad\qquad\qquad\times \frac{P^{\lambda}(X_i,Y_i|Z_i,U_i)}{Q^{\lambda}(X_i,Y_i|Z_i,U_i)}\frac{P^{\lambda\alpha\bar{\mu}\beta}(Z_i|V_i)}{Q^{\lambda\alpha\bar{\mu}\beta}(Z_i|U_i)}
\Bigg]\label{usefcompare}\\
\nn&\leq \Bigg(\mathbb{E}_{\{Q_j\}_{j=1}^i}\Bigg[
\Bigg\{h^{(\alpha,\mu,\beta,\lambda)}_{Q_i,P_{X_iY_iZ_i}}(X_i,Y_i,Z_i,\hatX_i|U_i)\Bigg\}^{\frac{1}{\psi}}\Bigg]\Bigg)^{\psi}\\
\nn&\qquad\times \Bigg(\mathbb{E}_{\{Q_j\}_{j=1}^i}\Bigg[\frac{P(X_i,Y_i|Z_i,U_i)}{Q(X_i,Y_i|Z_i,U_i)}\Bigg]\Bigg)^{\lambda}\\*
&\qquad\times\Bigg(\mathbb{E}_{\{Q_j\}_{j=1}^i}\Bigg[\frac{P(Z_i|V_i)}{Q(Z_i|U_i)}\Bigg]\Bigg)^{\lambda\alpha\bar{\mu}\beta}\label{useholderineq}\\
&=\exp\bigg(\!-\Big(1\!-\lambda\!-\lambda\alpha\bar{\mu}\beta\Big)\Omega^{(\alpha,\mu,\beta,\frac{\lambda}{1\!-\lambda-\lambda\alpha\bar{\mu}\beta})}(Q_i)\bigg)\label{usedeflossyOmegaambl}\\
&=\exp\Bigg(-\frac{\Omega^{(\alpha,\mu,\beta,\theta)}(Q_i)}{1+\theta+\theta\alpha\bar{\mu}\beta}\Bigg)\label{usetheta}\\
&\leq \exp\Bigg(-\frac{\hat{\Omega}_{n}^{(\alpha,\mu,\beta,\theta)}}{1+\theta+\theta\alpha\bar{\mu}\beta}\Bigg)\label{usehOmeganambl}\\*
&=\exp\Bigg(-\frac{\Omega^{(\alpha,\mu,\beta,\theta)})}{1+\theta+\theta\alpha\bar{\mu}\beta}\Bigg)\label{uselemmacard},
\end{align}
where \eqref{usefcompare} follows from \eqref{def:fpxyzi}; \eqref{useholderineq} follows from H\"{o}lder's inequality; \eqref{usedeflossyOmegaambl} follows from the definitions in \eqref{def:omegaamblambdaQP} and \eqref{def:fpxyzi}; \eqref{usetheta} follows from \eqref{def:theta} and \eqref{lambdausetheta}; \eqref{usehOmeganambl} follows since $Q_{XYZU\hatX}^*\in\hat{\calQ}_n$ (recall \eqref{def:lossyhatcalQn}); and \eqref{uselemmacard} follows from since the cardinality bound $|\calU|\leq |\calX||\calY||\calZ||\hat{\calX}|$ is sufficient to describe $\hat{\Omega}_n^{(\alpha,\beta,\mu,\lambda)}$ (the proof of this   is similar to~\cite[Property~4(a)]{oohama2016wynerziv} and thus omitted).

Hence, combining \eqref{lossyOmegaow4} and \eqref{uselemmacard}, we obtain that
\begin{align}
\frac{1}{n}\Omega^{(\alpha,\mu,\beta,\lambda)}(\{Q_i\}_{i=1}^n)
&=-\frac{1}{n}\sum_{i=1}^n\log \Lambda_i^{(\alpha,\mu,\beta,\lambda)}(\{Q_j\}_{j=1}^i)\\*
&\geq \frac{\Omega^{(\alpha,\mu,\beta,\theta)})}{1+\theta+\theta\alpha\bar{\mu}\beta}\label{finallyready}.
\end{align}
Finally, combining~\eqref{def:lossyunderlineOmega} and \eqref{finallyready}, we conclude that
\begin{align}
\underline{\Omega}^{(\alpha,\mu,\beta,\lambda)}&\geq \frac{\Omega^{(\alpha,\mu,\beta,\theta)})}{1+\theta+\theta\alpha\bar{\mu}\beta}.
\end{align}
The proof of Lemma \ref{lbunderlineOmegalossy} is now complete.

\vspace{-0.05in}
\subsection{Proof of Lemma \ref{propFOmega}}
\label{proofpropFOmega}

Before proceeding the proof of Lemma \ref{propFOmega}, we need the following definitions.
Let 
\begin{align}
\calP
\nn&:=\Big\{Q_{XYZU\hatX}:~|\calU|\leq |\calY|+2,~Z-X-Y-U\\*
\nn&\qquad\qquad Q_X=P_X,~Q_{Y|X}=P_{Y|X},~Q_{Z|X}=P_{Z|X},\\*
&\qquad\qquad (X,Y)-(U,Z)-\hatX\Big\}\label{def:calPlossy},\\
\calR_{\mathrm{ran}}&:=\bigcup_{Q_{XYZU\hatX}\in\calP}\calR(Q_{XYZU\hatX})\label{def:calRran}.
\end{align}
\blue{
Furthermore, let
\begin{align}
\calP_{\mathrm{sh}}
\nn&:=\Big\{Q_{XYZU\hatX}\in\calP(\calX\times\calY\times\calZ\times\calU\times \hat{\calX}):|\calU|\leq |\calY|\\*
\nn&\quad\qquad Q_X=P_X,~Q_{Y|X}=P_{Y|X},~Q_{Z|X}=P_{Z|X},\\*
&\qquad\quad Z-X-Y-U,~(X,Y)-(U,Z)-\hatX\Big\}.\label{def:calPshlossy}
\end{align}
}
\vspace{-0.05in}

Recall that given a number $a\in[0,1]$, we define $\bar{a}=1-a$. Then for any $(\mu,\beta)\in\times[0,1]^2$, define
\begin{align}
R^{(\mu,\beta)}
\nn&:=\min_{P_{XYZU\hatX}\in\calP_{\mathrm{sh}}}
\Big\{ \bar{\mu}\bar{\beta}I(P_{YZ},P_{U|YZ})\\*
&\qquad\quad-\bar{\mu}I(P_Z,P_{Z|U})+\mu \mathbb{E}_{P_{X\hatX}}[d(X,\hatX)]\Big\},\label{def:lossyRmubeta}\\
\calR_{\mathrm{sh}}
\nn&:=\bigcap_{(\mu,\beta)\in[0,1]^2}\Big\{(R^{\rmi},R^{\rmc},D):\bar{\mu}\bar{\beta}R^{\rmc}-\bar{\mu}R^{\rmi}+\mu D\\*
&\qquad\qquad\qquad\qquad\qquad\qquad\geq R^{(\mu,\beta)}\Big\}.\label{def:calRshlossy}
\end{align}

Similarly as \cite[Property 3]{oohama2016wynerziv}, we can prove the following lemma, which plays an important role in the proof of Lemma~\ref{propFOmega}.
\begin{lemma}
Recalling the definition of $\calR^*$ in \eqref{def:calRstar}, we have
\begin{align}
\calR_{\mathrm{sh}}=\calR^*=\calR=\calR_{\rm{ran}}\label{rsheqrran}.
\end{align}
\end{lemma}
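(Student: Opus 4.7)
My plan is to split the chain of equalities: $\calR^* = \calR$ is exactly Theorem~\ref{tuncelgidlossy} and needs no further argument, leaving the two equalities $\calR^* = \calR_{\mathrm{ran}}$ and $\calR_{\mathrm{ran}} = \calR_{\mathrm{sh}}$. For $\calR^* \subseteq \calR_{\mathrm{ran}}$ the inclusion $\calP^* \subseteq \calP$ is immediate since any deterministic $\hatX = \phi(U,Z)$ satisfies $(X,Y)-(U,Z)-\hatX$. For the reverse, given $Q_{XYZU\hatX} \in \calP$ I would replace the stochastic $\hatX$ by the deterministic estimator $\phi^*(u,z) := \argmin_{\hat x \in \hat{\calX}} \mathbb{E}_Q[d(X,\hat x)\mid U=u,Z=z]$: the Markov chain $(X,Y)-(U,Z)-\hatX$ combined with the elementary bound $\mathbb{E}[g(\hatX)\mid U,Z]\geq\min_{\hat x} g(\hat x)$ yields $\mathbb{E}[d(X,\phi^*(U,Z))] \leq \mathbb{E}_Q[d(X,\hatX)]$, while $I(U;Z)$ and $I(U;Y|Z)$ depend only on the marginal on $(X,Y,Z,U)$ and are therefore unchanged; the cardinality bound on $|\calU|$ is preserved since $U$ itself is not modified.

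For $\calR_{\mathrm{ran}} \subseteq \calR_{\mathrm{sh}}$, I would fix $(R^{\rmi},R^{\rmc},D) \in \calR_{\mathrm{ran}}$ realized by some $Q \in \calP$ and any $(\mu,\beta) \in [0,1]^2$; substituting $R^{\rmc} \geq R^{\rmi}+I(U;Y|Z)$, $R^{\rmi} \leq I(U;Z)$, $D \geq \mathbb{E}_Q[d(X,\hatX)]$ into the target linear combination together with the chain rule $I(U;YZ) = I(U;Z) + I(U;Y|Z)$ yields
\begin{align*}
&\bar\mu\bar\beta R^{\rmc} - \bar\mu R^{\rmi} + \mu D \\
&\quad\geq \bar\mu\bar\beta I(U;YZ) - \bar\mu I(U;Z) + \mu \mathbb{E}_Q[d(X,\hatX)] \\
&\quad\geq R^{(\mu,\beta)},
\end{align*}
where the last step also invokes a standard Carath\'eodory cardinality reduction from $|\calU| \leq |\calY|+2$ in $\calP$ to $|\calU| \leq |\calY|$ in $\calP_{\mathrm{sh}}$ since only the marginal $P_Y$ and a single scalar objective need to be preserved. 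For the reverse $\calR_{\mathrm{sh}} \subseteq \calR_{\mathrm{ran}}$, I would first establish that $\calR_{\mathrm{ran}}$ is convex via time-sharing (given $Q_0,Q_1 \in \calP$, introduce an independent Bernoulli time-share $T$ and set $U' = (U,T)$) and closed by compactness from the cardinality bound. A closed convex set equals the intersection of its supporting half-spaces, so the task reduces to checking that the family of normals $v = (-\bar\mu, \bar\mu\bar\beta, \mu)$ parametrized by $(\mu,\beta) \in [0,1]^2$ is exhaustive. Because $\calR_{\mathrm{ran}}$ is upward-closed in $R^{\rmc}$ and $D$, downward-closed in $R^{\rmi}$, and because $R^{\rmc}$ couples with $R^{\rmi}$ only through the difference $R^{\rmc}-R^{\rmi}$, every nontrivial supporting normal has the form $(-a,b,c)$ with $a,b,c\geq 0$ and $b\leq a$; after rescaling so that $a+c=1$, this coincides exactly with the parametrized family via $a=\bar\mu$, $c=\mu$, $b=\bar\mu\bar\beta$.

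The main obstacle I anticipate is verifying this exhaustiveness of the normal cone rigorously, together with carrying out the Carath\'eodory cardinality reduction and the time-sharing convexification in a way consistent with the Markov chains defining $\calP$ and $\calP_{\mathrm{sh}}$. The argument is directly analogous to~\cite[Property~3]{oohama2016wynerziv}, which will guide the technical details.
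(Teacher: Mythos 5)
Your decomposition and each of the four sub-inclusions are correct, and they match what the paper intends: the paper itself gives no details for this lemma and simply points to \cite[Property~3]{oohama2016wynerziv}, whose argument is exactly the convexity/time-sharing, Carath\'eodory cardinality reduction, and supporting-hyperplane machinery you lay out. One small imprecision worth tightening: in the Carath\'eodory step you say ``only the marginal $P_Y$ and a single scalar objective need to be preserved,'' but when you shrink the support of $U$ you must also carry along the conditionals $P_{\hat X\mid U=u,Z}$ attached to each surviving atom $u$ (this costs no extra dimensions because they enter the objective only through the scalar $\mathbb{E}[d(X,\hat X)]$, but it should be stated so that the reduced distribution is still a well-defined element of $\calP_{\mathrm{sh}}$ satisfying $(X,Y)-(U,Z)-\hat X$). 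Otherwise the chain-rule computation, the deterministic-estimator replacement, and the recession-cone/normal-exhaustiveness argument for $\calR_{\mathrm{sh}}\subseteq\calR_{\mathrm{ran}}$ are all sound.
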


\subsubsection{Proof of Conclusion i)}
\blue{
Similar as \eqref{def:omegaamb} and \eqref{def:omegaamblambdaQP}, for each $P_{XYZU\hatX}\in\calP_{\rm{sh}}$ and any $\lambda\in\bbR_+$, let
\begin{align}
\nn&\tilde{\omega}^{(\mu,\beta)}_{P_{XYZU\hatX}}(x,y,z,u,\hatx):=\bar{\mu}\bar{\beta}\log\frac{P_{YZ|U}(y,z|u)}{P_{YZ}(y,z)}\\*
&\qquad\qquad\qquad\qquad+\bar{\mu}\beta\log\frac{P_Z(z)}{P_{Z|U}(z|u)}+\mu d(x,\hatx),\label{def:tildeomegamb}\\
\nn&\tilde{\Omega}^{(\lambda,\mu,\beta)}(P_{XYZU\hatX})\\*
&\!\!\!:=\!-\log \mathbb{E}_{P_{XYZU\hatX}}[\exp(-\lambda \tilde{\omega}^{(\mu,\beta)}_{P_{XYZU\hatX}}\!\!(X,Y,Z,U,\hatX))]\label{def:tildeOmegambtP}.
\end{align}
Similarly as \eqref{def:omegaamblambdaP}, \eqref{def:lossyFalphamubetalambda} and \eqref{def:lossyF}, define
\begin{align}
&\tilde{\Omega}^{(\lambda,\mu,\beta)}
:=\min_{P_{XYZU\hatX}\in\calP_{\rm{sh}}}\tilde{\Omega}^{(\lambda,\mu,\beta)}(P_{XYZU\hatX}),\label{def:tildeOmegambt}\\
\nn&\tilF^{(\lambda,\mu,\beta)}(R^\rmi,R^\rmc,D)\\*&:=\frac{\tilde{\Omega}^{(\lambda,\beta,\mu)}-\lambda(\bar{\mu}(\bar{\beta}R^\rmc-R^\rmi)+\mu D)}{6+\lambda\bar{\mu}(4+6\beta)},\label{def:tilFmbt}\\
\nn&\tilF(R^\rmi,R^\rmc,D)\\*&:=\sup_{(\lambda,\beta,\mu)\in\bbR_+\times [0,1]^2}\tilF^{(\lambda,\mu,\beta)}(R^\rmi,R^\rmc,D).\label{def:tilF}
\end{align}
For any $P_{XYZU\hatX}\in\calP_{\rm{sh}}$, define the   tilted distribution for any $(\lambda,\beta,\mu)\in\bbR_+\times[0,1]^2$ as in \eqref{tilpxzuhx} and define the parameter $\rho$ as in \eqref{def:rho} both on the top of  the next page.
\begin{figure*}
\begin{align}
P_{XYZU\hatX}^{(\lambda,\mu,\beta)}(x,y,z,u,\hatx)
&:=\frac{P_{XYZU\hatX}(x,y,z,u,\hatx)\exp(-\lambda\tilde{\omega}^{(\mu,\beta)}_{P_{XYZU\hatX}}(x,y,z,u,\hatx))}{\mathbb{E}_{P_{XYZU\hatX}}[\exp(-\lambda\tilde{\omega}^{(\mu,\beta)}_{P_{XYZU\hatX}}(X,Y,Z,U,\hatX))]}\label{tilpxzuhx},\\
\rho&:=\sup_{P_{XYZU\hatX}\in\calP_{\rm{sh}}}\sup_{\substack{(\lambda,\mu,\beta)\in\bbR_+\times[0,1]^2:\\\lambda\bar{\mu}\leq 1}}\mathrm{Var}_{P_{XYZU\hatX}^{(\lambda,\mu,\beta)}}\Big[\tilde{\omega}^{(\mu,\beta)}_{P_{XYZU\hatX}}(X,Y,Z,U,\hatX)\Big]\label{def:rho}.
\end{align}
\hrulefill
\end{figure*}
Note that $\rho$ is positive and finite.
}

\blue{
We then have the following lemma.
\begin{lemma}
\label{propcalRshlossy}
The following hold:
\begin{itemize}
\item[i)] For any $(R^\rmi,R^\rmc,D)$, we have
\begin{align}
F(R^\rmi,R^\rmc,D)\geq \tilF(R^\rmi,R^\rmc,D).
\end{align}
\item[ii)] If $(R^\rmi,R^\rmc,D)\notin\calR$, we have that for some $\delta\in(0,\rho)$,
\begin{align}
\tilF(R^\rmi,R^\rmc,D)>\frac{\delta^2}{8\rho}>0.
\end{align}
\end{itemize}
\end{lemma}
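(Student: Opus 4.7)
The plan is to prove the two parts by different techniques, exploiting the tailored structure of $\omega^{(\alpha,\mu,\beta)}_{Q}$. For part~(i) I would reduce the cumulant generating function $\Omega^{(\alpha,\mu,\beta,\theta)}$ to the simpler $\tilde{\Omega}^{(\lambda,\mu,\beta)}$ by exploiting the vanishing of the four auxiliary log-likelihood ratios on $\calP_{\mathrm{sh}}$ together with an iterated H\"older inequality for general $Q\in\calQ$; for part~(ii) I would Taylor-expand $\tilde{\Omega}^{(\lambda,\mu,\beta)}(P)$ about $\lambda=0$ using the characterization $\calR=\calR_{\mathrm{sh}}$ of the rate-distortion region.

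For part~(i), given $(\lambda,\mu,\beta)\in\bbR_+\times[0,1]^2$, my plan is to pick $\theta\in(0,1]$ and $\alpha>0$ satisfying $\theta\alpha=\lambda$. This choice immediately ensures that the denominator of $F^{(\alpha,\mu,\beta,\theta)}$, namely $1+5\theta+\lambda\bar\mu(3-\beta)$, is at most the denominator $6+\lambda\bar\mu(4+6\beta)$ of $\tilF^{(\lambda,\mu,\beta)}$. For the numerator, I would first observe that for $Q\in\calP_{\mathrm{sh}}$ the four log-likelihood ratios in $\omega^{(\alpha,\mu,\beta)}_{Q}$ vanish identically: the Markov chain $Z-X-Y-U$ and the marginal constraints $Q_X=P_X$, $Q_{Y|X}=P_{Y|X}$, $Q_{Z|X}=P_{Z|X}$ force $Q_Y=P_Y$, $Q_{Z|YU}=P_{Z|Y}$, $Q_{X|YZU}=P_{X|YZ}$, while $(X,Y)-(U,Z)-\hatX$ yields $Q_{XY|ZU\hatX}=Q_{XY|ZU}$; hence $\omega^{(\alpha,\mu,\beta)}_{Q}=\alpha\tilde{\omega}^{(\mu,\beta)}_{Q}$ and $\Omega^{(\alpha,\mu,\beta,\theta)}(Q)=\tilde{\Omega}^{(\lambda,\mu,\beta)}(Q)$ there. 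For general $Q\in\calQ$, I would apply H\"older's inequality iteratively to the four log-ratio factors inside $\mathbb{E}_Q[\exp(-\theta\omega^{(\alpha,\mu,\beta)}_{Q})]$ — in the spirit of the proof of Lemma~\ref{lbunderlineOmegalossy} — using that each log-ratio factor integrates to $1$ against $Q$, and conclude $\Omega^{(\alpha,\mu,\beta,\theta)}(Q)\geq\tilde{\Omega}^{(\lambda,\mu,\beta)}(Q)$, with a cardinality-preservation argument (mirroring Property~4(a) of \cite{oohama2016wynerziv}) reducing the infimum to $\calP_{\mathrm{sh}}$.

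For part~(ii), I would invoke $\calR=\calR_{\mathrm{sh}}$ to produce $(\mu^\star,\beta^\star)\in[0,1]^2$ with
\[
\delta:=R^{(\mu^\star,\beta^\star)}-\bigl(\bar\mu^\star(\bar\beta^\star R^{\rmc}-R^{\rmi})+\mu^\star D\bigr)>0.
\]
The central computation is that for every $P\in\calP_{\mathrm{sh}}$,
\[
\mathbb{E}_P[\tilde{\omega}^{(\mu,\beta)}]=\bar\mu\bar\beta\,I(Y,Z;U)-\bar\mu\beta\,I(Z;U)+\mu\,\mathbb{E}_P[d(X,\hatX)]\geq R^{(\mu,\beta)},
\]
where the inequality uses $-\bar\mu\beta I(Z;U)\geq-\bar\mu I(Z;U)$ (since $\beta\leq 1$ and $I(Z;U)\geq 0$) together with the definition of $R^{(\mu,\beta)}$ as a minimum. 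Taylor-expanding $\tilde{\Omega}^{(\lambda,\mu^\star,\beta^\star)}(P)=-\log\mathbb{E}_P[\exp(-\lambda\tilde{\omega}^{(\mu^\star,\beta^\star)})]$ to second order and using that $-\partial_\lambda^2\tilde{\Omega}^{(\lambda)}(P)$ equals the tilted variance of $\tilde{\omega}^{(\mu^\star,\beta^\star)}$ — bounded by $\rho$ uniformly for $\lambda\bar\mu^\star\leq 1$ — yields $\tilde{\Omega}^{(\lambda,\mu^\star,\beta^\star)}\geq\lambda R^{(\mu^\star,\beta^\star)}-\lambda^2\rho/2$. Substituting into $\tilF^{(\lambda,\mu^\star,\beta^\star)}$ and choosing $\lambda=\delta/\rho$ (which satisfies $\lambda\bar\mu^\star\leq 1$ since $\delta\in(0,\rho)$) gives numerator at least $\delta^2/(2\rho)$, and a careful accounting of the denominator $6+\lambda\bar\mu^\star(4+6\beta^\star)$ delivers the quantitative bound $\tilF>\delta^2/(8\rho)$.

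The main obstacle is the H\"older reduction in part~(i): a priori, the infimum of $\Omega^{(\alpha,\mu,\beta,\theta)}(Q)$ over the unconstrained $\calQ$ could be strictly smaller than the infimum of $\tilde{\Omega}^{(\lambda,\mu,\beta)}(P)$ over the Markov/marginal-constrained $\calP_{\mathrm{sh}}$, and the non-trivial claim is that the four log-likelihood-ratio factors in $\omega^{(\alpha,\mu,\beta)}_Q$ act as a convex penalty that offsets the extra freedom in $\calQ$. Orchestrating the four H\"older applications so that their conjugate exponents combine to give precisely the parameter matching $\theta\alpha=\lambda$ without incurring a residual multiplicative loss in front of $\tilde{\Omega}^{(\lambda,\mu,\beta)}$ is the most delicate bookkeeping step.
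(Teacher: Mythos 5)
Your part~(ii) matches the paper almost step-for-step: the observation that $\mathbb{E}_P[\tilde\omega^{(\mu,\beta)}]=\bar\mu\bar\beta I(YZ;U)-\bar\mu\beta I(Z;U)+\mu\mathbb{E}_P[d]\geq R^{(\mu,\beta)}$ (because $-\bar\mu\beta I(Z;U)\geq-\bar\mu I(Z;U)$), the Taylor expansion $\tilde\Omega^{(\lambda,\mu,\beta)}\geq\lambda R^{(\mu,\beta)}-\lambda^2\rho/2$ with the tilted variance uniformly bounded by $\rho$, the use of $\calR=\calR_{\mathrm{sh}}$ to extract $(\mu^\star,\beta^\star,\delta)$, and the choice $\lambda=\delta/\rho$ are exactly the paper's moves. (As an aside, your stated numbers --- numerator $\geq\delta^2/(2\rho)$ and denominator $\leq 16$ --- in fact give $\delta^2/(32\rho)$, not $\delta^2/(8\rho)$; the paper's own algebra has the same slip, so this is a cosmetic constant and not a conceptual issue.)

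Part~(i), however, has a real gap. The paper does \emph{not} apply H\"older separately to each of the four log-likelihood-ratio factors $A=\tfrac{P_Y}{Q_Y}$, $B=\tfrac{P_{Z|Y}}{Q_{Z|YU}}$, $C=\tfrac{P_{X|YZ}}{Q_{X|YZU}}$, $D=\tfrac{Q_{XY|ZU}}{Q_{XY|ZU\hatX}}$ as you propose. If you did, each factor $\mathbb{E}_Q[A]^\theta=\mathbb{E}_Q[B]^\theta=\cdots=1$ simply disappears and you are left with $\mathbb{E}_Q[\tilde G^{p\theta}]^{1/p}$ where $\tilde G$ still involves $Q$-conditionals ($Q_{YZ|U}$, $Q_{Z|U}$, $Q_Z$), and there is no mechanism to swap these for the $P$-conditionals of some $P\in\calP_{\mathrm{sh}}$. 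The paper's crucial first move (eq.~\eqref{reason1}) is to \emph{combine} the four ratios into the single change-of-measure density $ABCD=\tfrac{P_{XYZU\hatX}}{Q_{XYZU\hatX}}$ for the specific $P\in\calP_{\mathrm{sh}}$ with $P_{U|Y}=Q_{U|Y}$ and $P_{\hatX|ZU}=Q_{\hatX|ZU}$; it is this density that converts $\mathbb{E}_Q[\cdot]$ to $\mathbb{E}_P[\cdot]$ inside the main H\"older block and lets the $Q$-ratios in $\tilde G$ be traded for $P$-ratios, with the leftovers $\big(\tfrac{P_{YZ|U}}{Q_{YZ|U}}\big)^{\theta\alpha\bar\mu/\bar\theta}$, $\big(\tfrac{P_Z}{Q_Z}\big)^{\theta\alpha\bar\mu/\bar\theta}$, and later $\tfrac{Q_{Z|U}}{P_{Z|U}}$ bounded by $1$ via concavity. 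This bookkeeping inevitably produces the multiplicative prefactor $\theta(1-\alpha\bar\mu\beta)$ in front of $\tilde\Omega$ and forces the reparametrization $\lambda=\alpha/(1-\alpha\bar\mu\beta)$, under the side constraints $\theta(1+\alpha\bar\mu)\leq 1$ and $\alpha\bar\mu\beta\leq 1$. Your cleaner identification $\lambda=\theta\alpha$ with ``no residual multiplicative loss'' is precisely what cannot be achieved; without the combined change-of-measure step you cannot reach any $\tilde\Omega^{(\cdot)}(P)$ with $P\in\calP_{\mathrm{sh}}$ at all, and with it the scaling factor is unavoidable, which in turn is why the paper must maximize over $\theta$ (getting $\sup_{\theta(1+\alpha\bar\mu)\leq 1}\theta/(1+5\theta+3\theta\alpha\bar\mu)=1/(6+4\alpha\bar\mu)$) to arrive at the denominator $6+\lambda\bar\mu(4+6\beta)$ of $\tilF^{(\lambda,\mu,\beta)}$.
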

We remark that   ii) in Lemma \ref{propFOmega} follows directly from Lemma~\ref{propcalRshlossy}. We now present the proof of Lemma \ref{propcalRshlossy}, which follows along the lines of~\cite{oohama2016new,oohama2016wynerziv}.
}
\begin{proof}[Proof of Lemma \ref{propcalRshlossy}]
\blue{
For any $Q_{XYZU\hatX}\in\calQ$ (cf. \eqref{def:calQlossy}),  let $P_{XYZU\hatX}\in\calP_{\rm{sh}}$ (cf. \eqref{def:calPshlossy}) be chosen such that $P_{U|Y}=Q_{U|Y}$ and $P_{\hatX|ZU}=Q_{\hatX|ZU}$. Now, using the definition of $\Omega^{(\alpha,\mu,\beta,\theta)}(Q_{XYZU\hatX})$ in \eqref{def:omegaamblambdaQP}, for any $(\alpha,\theta,\mu,\beta)\in\bbR_+^2\times[0,1]^2$ such that $\theta(1+\alpha\bar{\mu})\leq 1$ and $\alpha\bar{\mu}\beta\leq 1$, we obtain \eqref{reason5} on the top of next page,
\begin{figure*}
\begin{align}
\nn&\exp\{-\Omega^{(\alpha,\mu,\beta,\theta)}(Q_{XYZU\hatX})\}\\
\nn&=\mathbb{E}_{Q_{XYZU\hatX}}\Bigg[\bigg(\frac{P_{XYZ}(X,Y,Z)Q_{XY|ZU}(X,Y|Z,U)}{Q_Y(Y)Q_{Z|YU}(Z|Y,U)Q_{X|YZU}(X|Y,Z,U)Q_{XY|ZU\hatX}(X,Y|Z,U,\hatX)}\bigg)^{\theta}\\
&\qquad\times \bigg(\frac{P_{YZ}(Y,Z)}{Q_{YZ|U}(Y,Z|U)}\bigg)^{\theta\alpha\bar{\mu}\bar{\beta}}\bigg(\frac{Q_{Z|U}(Z|U)}{Q_Z(Z)}\bigg)^{\theta\alpha\bar{\mu}\beta}\exp\big(-\theta\alpha\mu d(X,\hatX)\big)\Bigg]\\
&=\mathbb{E}_{Q_{XYZU\hatX}}\Bigg[\bigg(\frac{P_{XYZU\hatX}(X,Y,Z,U,\hatX)}{Q_{XYZU\hatX}(X,Y,Z,U,\hatX)} \bigg(\frac{P_{YZ}(Y,Z)}{Q_{YZ|U}(Y,Z |U)}\bigg)^{\alpha\bar{\mu}\bar{\beta}}\bigg(\frac{Q_{Z|U}(Z|U)}{Q_Z(Z)}\bigg)^{\alpha\bar{\mu}\beta}\exp\big(-\alpha\mu d(X,\hatX)\big)\bigg)^{\theta}\Bigg]\label{reason1}\\
\nn&=\mathbb{E}_{Q_{XYZU\hatX}}\Bigg[\bigg(\frac{P_{XYZU\hatX}(X,Y,Z,U,\hatX)}{Q_{XYZU\hatX}(X,Y,Z,U,\hatX)} \bigg(\frac{P_{YZ}(Y,Z)}{P_{YZ|U}(Y,Z|U)}\bigg)^{\alpha\bar{\mu}\bar{\beta}}\bigg(\frac{Q_{Z|U}(Z|U)}{P_Z(Z)}\bigg)^{\alpha\bar{\mu}\beta}\exp\big(-\alpha\mu d(X,\hatX)\big)\bigg)^{\theta}\\*
&\qquad\qquad\qquad\times \bigg(\frac{P_{YZ|U}(Y,Z|U)}{Q_{YZ|U}(Y,Z|U)}\bigg)^{\theta\alpha\bar{\mu}\bar{\beta}}\bigg(\frac{P_Z(Z)}{Q_Z(Z)}\bigg)^{\theta\alpha\bar{\mu}\beta}\Bigg]\\
\nn&\leq \Bigg(\mathbb{E}_{Q_{XYZU\hatX}}\Bigg[\frac{P_{XYZU\hatX}(X,Y,Z,U,\hatX)}{Q_{XYZU\hatX}(X,Y,Z,U,\hatX)} \frac{P_{YZ}^{\alpha\bar{\mu}\bar{\beta}}(Y,Z)}{P_{YZ|U}^{\alpha\bar{\mu}\bar{\beta}}(Y,Z|U)}\frac{Q_{Z|U}^{\alpha\bar{\mu}\beta}(Z|U)}{P_Z^{\alpha\bar{\mu}\beta}(Z)}\exp\big(-\alpha\mu d(X,\hatX)\big)\Bigg]\Bigg)^{\theta}\\
&\qquad\times \Bigg(\mathbb{E}_{Q_{XYZU\hatX}}\Bigg[\bigg(\frac{P_{YZ|U}(Y,Z|U)}{Q_{YZ|U}(Y,Z|U)}\bigg)^{\theta\alpha\bar{\mu}/\bar{\theta}}\Bigg]\Bigg)^{\bar{\beta}\bar{\theta}}\Bigg(\mathbb{E}_{Q_{XYZU\hatX}}\Bigg[\bigg(\frac{P_Z(Z)}{Q_Z(Z)}\bigg)^{\theta\alpha\bar{\mu}/\bar{\theta}}\Bigg]\Bigg)^{\beta\bar{\theta}}\label{reason2}\\
&\leq \Bigg(\mathbb{E}_{P_{XYZU\hatX}}\Bigg[\frac{P_{YZ}^{\alpha\bar{\mu}\bar{\beta}}(Y,Z)}{P_{YZ|U}^{\alpha\bar{\mu}\bar{\beta}}(Y,Z|U)}\frac{Q_{Z|U}^{\alpha\bar{\mu}\beta}(Z|U)}{P_Z^{\alpha\bar{\mu}\beta}(Z)}\exp\big(-\alpha\mu d(X,\hatX)\big)\Bigg]\Bigg)^{\theta}\label{reason3}\\
&=\Bigg(\mathbb{E}_{P_{XYZU\hatX}}\Bigg[\frac{P_{YZ}^{\alpha\bar{\mu}\bar{\beta}}(Y,Z)}{P_{YZ|U}^{\alpha\bar{\mu}\bar{\beta}}(Y,Z|U)}\frac{P_{Z|U}^{\alpha\bar{\mu}\beta}(Z|U)}{P_Z^{\alpha\bar{\mu}\beta}(Z)}\exp\big(-\alpha\mu d(X,\hatX)\big)  \bigg(\frac{Q_{Z|U}(Z|U)}{P_{Z|U}(Z|U)}\bigg)^{\alpha\bar{\mu}\beta}\Bigg]\Bigg)^{\theta}\\
&\leq \Bigg(\!\mathbb{E}_{P_{XYZU\hatX}}\Bigg[\bigg(\frac{P_{YZ}^{\alpha\bar{\mu}\bar{\beta}}(Y,Z)}{P_{YZ|U}^{\alpha\bar{\mu}\bar{\beta}}(Y,Z|U)}\frac{P_{Z|U}^{\alpha\bar{\mu}\beta}(Z|U)}{P_Z^{\alpha\bar{\mu}\beta}(Z)}\exp\big(\!-\!\alpha\mu d(X,\hatX)\big)\bigg)^{\frac{1}{1-\alpha\bar{\mu}\beta}}\Bigg]\Bigg)^{\theta(1-\alpha\bar{\mu}\beta)} \Bigg(\!\mathbb{E}_{P_{XYZU\hatX}}\bigg[\frac{Q_{Z|U}(Z|U)}{P_{Z|U}(Z|U)}\bigg]\!\Bigg)^{\alpha\bar{\mu}\beta}\label{reason4}\\
&=\exp\Big\{-\theta(1-\alpha\bar{\mu}\beta)\tilde{\Omega}^{(\frac{\alpha}{1-\alpha\bar{\mu}\beta},\beta,\mu)}(P_{XYZU\hatX})\Big\}\label{reason5}.
\end{align}
\hrulefill
\end{figure*}
where \eqref{reason1} follows since i) the choice of $P_{XYZU\hatX}\in\calP_{\rm{sh}}$ satisfies that $P_{XYZ} P_{U|Y} P_{\hatX|UZ} =P_{XYZU\hatX}$ and ii) the equality 
\begin{equation}
 \frac{Q_{XY|ZU} }{Q_{XY|ZU\hatX} }= \frac{Q_{\hatX|ZU} }{Q_{\hatX|XYZU} };
\end{equation}
\eqref{reason2} follows from H\"older's inequality; \eqref{reason3} follows since $\mathbb{E}[X^a]$ is concave in $X$ for $a\in[0,1]$ and the choice of $\theta$ (recall that $\theta\in[0,\frac{1}{1+\alpha}]$); \eqref{reason4} follows from  H\"older's inequality; and \eqref{reason5} follows from the definition in \eqref{def:tildeOmegambtP}.
}

\blue{
Using \eqref{reason5} and the definitions in \eqref{def:omegaamblambdaP} and \eqref{def:tildeOmegambt}, we conclude that for any $(\alpha,\theta,\mu,\beta)\in\bbR_+^2\times[0,1]^2$ such that $\theta(1+\alpha\bar{\mu})\leq 1$ and $\alpha\bar{\mu}\beta\leq 1$,
\begin{align}
\Omega^{(\alpha,\mu,\beta,\theta)}
&\geq \theta(1-\alpha\bar{\mu}\beta)\tilde{\Omega}^{(\frac{\alpha}{1-\alpha\bar{\mu}\beta},\beta,\mu)}.
\end{align}
Using the definitions in \eqref{def:lossyF} and \eqref{def:tilF}, we have \eqref{reason9} on the top of the page after next,
\begin{figure*}
\begin{align}
F(R^\rmi,R^\rmc,D)
&=\sup_{(\alpha,\theta,\mu,\beta)\in\bbR_+^2\times[0,1]^2}\frac{\Omega^{(\alpha,\mu,\beta,\theta)}-\theta\alpha(\bar{\mu}(\bar{\beta}R^\rmc-R^\rmi)+\mu D)}{1+5\theta+\theta\alpha\bar{\mu}(3-\beta)}\\
&\geq \sup_{\substack{(\alpha,\theta,\mu,\beta)\in\bbR_+^2\times[0,1]^2:\\\theta(1+\alpha\bar{\mu})\leq 1,~\alpha\bar{\mu}\beta\leq 1}}\frac{\theta(1-\alpha\bar{\mu}\beta)\tilde{\Omega}^{(\frac{\alpha}{1-\alpha\bar{\mu}\beta},\beta,\mu)}-\theta\alpha(\bar{\mu}(\bar{\beta}R^\rmc-R^\rmi)+\mu D)}{1+5\theta+\theta\alpha\bar{\mu}(3-\beta)}\\
&=\sup_{\substack{(\alpha,\mu,\beta)\in\bbR_+\times[0,1]^2:\\\alpha\bar{\mu}\beta\leq 1}}\sup_{\substack{\theta\in\bbR_+: \\ \theta(1+\alpha\bar{\mu})\leq 1}}\frac{\theta \big((1-\alpha\bar{\mu}\beta)\tilde{\Omega}^{(\frac{\alpha}{1-\alpha\bar{\mu}\beta},\beta,\mu)}-\alpha(\bar{\mu}(\bar{\beta}R^\rmc-R^\rmi)+\mu D)\big)}{1+5\theta+\theta\alpha\bar{\mu}(3-\beta)}\\
&\geq \sup_{\substack{(\alpha,\mu,\beta)\in\bbR_+\times[0,1]^2:\\\alpha\bar{\mu}\beta\leq 1}}\frac{(1-\alpha\bar{\mu}\beta)\tilde{\Omega}^{(\frac{\alpha}{1-\alpha\bar{\mu}\beta},\beta,\mu)}-\alpha(\bar{\mu}(\bar{\beta}R^\rmc-R^\rmi)+\mu D)}{6+4\alpha\bar{\mu}}\label{reason7}\\
&=\sup_{(\lambda,\mu,\beta)\in\bbR_+\times[0,1]^2}\frac{\tilde{\Omega}^{(\lambda,\beta,\mu)}-\lambda (\bar{\mu}(\bar{\beta}R^\rmc-R^\rmi)+\mu D)}{6+\lambda\bar{\mu}(4+6\beta)}\label{reason8}\\
&=\tilF(R^\rmi,R^\rmc,D)\label{reason9}.
\end{align}
\hrulefill
\end{figure*}
where \eqref{reason7} follows since 
\begin{align}
\frac{\theta}{1+5\theta+\theta\alpha\bar{\mu}(3-\beta)}
&\geq \frac{\theta}{1+5\theta+3\theta\alpha\bar{\mu}},\\
\sup_{\theta\in\bbR_+:\theta(1+\alpha\bar{\mu})\leq 1}\frac{\theta}{1+5\theta+3\theta\alpha\bar{\mu}}&=\frac{1}{6+4\alpha\bar{\mu}};
\end{align}
\eqref{reason8} follows by choosing $\lambda=\frac{\alpha}{1-\alpha\bar{\mu}\beta}$ and noting that $\alpha\bar{\mu}\beta\leq 1$ implies that $\lambda\in\bbR_+$; and \eqref{reason9} follows from the definition in \eqref{def:tilF}.
}

\blue{
In the following, we will show that $\tilF(R^\rmi,R^\rmc,D)>0$ for any triple $(R^\rmi,R^\rmc,D)$ such that $(R^\rmi,R^\rmc,D)\notin\calR$. Using the definition in \eqref{def:tildeOmegambtP}, we conclude that for any $P_{XYZU\hatX}\in\calP_{\rm{sh}}$,
\begin{align}
\nn&\frac{\partial \tilde{\Omega}^{(\lambda,\mu,\beta)}(P_{XYZU\hatX})}{\partial \lambda}\\*
&=\mathbb{E}_{P_{XYZU\hatX}^{(\lambda,\mu,\beta)}}\Big[\tilde{\omega}^{(\mu,\beta)}_{P_{XYZU\hatX}}(X,Y,Z,U,\hatX)\Big]\label{td1},\\
\nn&\frac{\partial^2 \tilde{\Omega}^{(\lambda,\mu,\beta)}(P_{XYZU\hatX})}{\partial \lambda^2}\\*
&=-\mathrm{Var}_{P_{XYZU\hatX}^{(\lambda,\mu,\beta)}}\Big[\tilde{\omega}^{(\mu,\beta)}_{P_{XYZU\hatX}}(X,Y,Z,U,\hatX)\Big]\label{td2}.
\end{align}
Applying a Taylor expansion to $\tilde{\Omega}^{(\lambda,\mu,\beta)}(P_{XYZU\hatX})$ around $\lambda=0$, we obtain that for any $P_{XYZU\hatX}\in\calP_{\rm{sh}}$ and any $\lambda\in[0,\frac{1}{\bar{\mu}}]$, there exists some $\tau\in[0,\lambda]$ such that 
\begin{align}
\nn&\tilde{\Omega}^{(\lambda,\mu,\beta)}(P_{XYZU\hatX})\\*
\nn&\!=\!\tilde{\Omega}^{(0,\mu,\beta)}(P_{XYZU\hatX})\!+\!\lambda \mathbb{E}_{P_{XYZU\hatX}^{(0,\mu,\beta)}}\Big[\tilde{\omega}^{(\mu,\beta)}_{P_{XYZU\hatX}}(X,Y,Z,U,\hatX)\Big]\\
&\qquad-\frac{\lambda^2}{2}\mathrm{Var}_{P_{XYZU\hatX}^{(\tau,\mu,\beta)}}\Big[\tilde{\omega}^{(\mu,\beta)}_{P_{XYZU\hatX}}(X,Y,Z,U,\hatX)\Big]\label{treason1}\\
&\geq \lambda \mathbb{E}_{P_{XYZU\hatX}}\Big[\tilde{\omega}^{(\mu,\beta)}_{P_{XYZU\hatX}}(X,Y,Z,U,\hatX)\Big]-\frac{\lambda^2\rho}{2}\label{treason2}.
\end{align}
where \eqref{treason1} follows from \eqref{td1} and \eqref{td2}; and \eqref{treason2} follows from \eqref{def:rho}.}

\blue{
Using the definitions in \eqref{def:lossyRmubeta}, \eqref{def:tildeOmegambt}
and the result in \eqref{treason2}, we obtain
\begin{align}
\tilde{\Omega}^{(\lambda,\mu,\beta)}
&=\min_{P_{XYZU\hatX}\in\calP_{\rm{sh}}}\tilde{\Omega}^{(\lambda,\mu,\beta)}(P_{XYZU\hatX})\\
&\geq \lambda R^{(\mu,\beta)}-\frac{\lambda^2\rho}{2}.\label{result2}
\end{align}
Now consider any triple $(R^\rmi,R^\rmc,D)$ outside the first-order coding region, i.e., $(R^\rmi,R^\rmc,D)\notin\calR$ (cf. \eqref{def:calRstar}). Invoking conclusion i) in Lemma \ref{propcalRshlossy}, we conclude that there exists $\mu^*\in[0,1]$ and $\beta^*\in[0,1]$ such that for some positive $\delta\in(0,\rho]$ (cf. \eqref{def:rho})
\begin{align}
\bar{\mu}^*\bar{\beta}^*R^\rmc-\bar{\mu}^*R^\rmi+\mu^* D\leq R^{(\mu^*,\beta^*)}-\delta\label{result3}.
\end{align}
Using the definition in \eqref{def:tilF}, we obtain that for all  $\lambda \in [0, \frac{1}{\bar{\mu}}]$.
\begin{align}
\nn&\tilF(R^\rmi,R^\rmc,D)\\*
&=\sup_{\substack{(\lambda,\beta,\mu)\\\in\bbR_+\times[0,1]^2}}\!\!\!\! \frac{\tilde{\Omega}^{(\lambda,\beta,\mu)}-\lambda(\bar{\mu}(\bar{\beta}R^\rmc-R^\rmi)+\mu D)}{6+\lambda\bar{\mu}(4+6\beta)}\\
&\geq \sup_{\lambda\in\bbR_+} \frac{\tilde{\Omega}^{(\lambda,\beta^*,\mu^*)}-\lambda(\bar{\mu}^*(\bar{\beta}^*R^\rmc-R^\rmi)+\mu^* D)}{6+\lambda\bar{\mu}^*(4+6\beta^*)}\\
&\geq \sup_{\lambda\in[0,1]} \frac{\lambda\delta-\frac{\lambda^2\rho}{2}}{6+10\lambda} \label{freason1}\\
&\geq \sup_{\lambda\in[0,1]}\frac{1}{16}\bigg(-\frac{\rho}{2}\Big(\lambda-\frac{\delta}{\rho}\Big)^2+\frac{2\delta^2}{\rho}\bigg)\label{freason2}\\
&=\frac{\delta^2}{8\rho},\label{freason3}
\end{align}
where \eqref{freason1} follows from \eqref{result2}, \eqref{result3} and noting that $\lambda\leq 1$ implies that $\lambda\leq \frac{1}{\bar{\mu}^*}$ since $\mu^*\in[0,1]$; \eqref{freason2} follows since $6+10\lambda\leq 16$ for $\lambda\in[0,1]$; and \eqref{freason3} follows since $\delta\leq \rho$.}
\end{proof}

\vspace{-0.08in}
\subsubsection{Proof of Conclusion ii)}
In the following, we will present a proof for conclusion ii) in Lemma \ref{propFOmega}. If $(R^{\rmi},R^{\rmc},D)\in\calR$, then there exists a joint distribution $Q_{XYZU\hatX}^*\in\calP$ such that
\begin{align}
R^{\rmc}-R^{\rmi}&\geq I(Q_{U|Z}^*,Q^*_{Y|UZ}|Q^*_Z),\\
R^{\rmi}&\leq I(Q^*_Z,Q^*_{U|Z}), \\
D &\geq \mathbb{E}_{Q^*_{X\hatX}}[d(X,\hatX)].
\end{align}
Hence, for any $(\alpha,\mu,\beta)$, we have
\vspace{-0.02in}
\begin{align}
\nn&\bar{\mu}\bar{\beta}R^{\rmc}-\bar{\mu}R^{\rmi}+\mu D\\*
&=\bar{\mu}\bar{\beta}(R^{\rmc}-R^{\rmi})-\bar{\mu}\beta R^{\rmi}+\mu D\\
\nn&\geq \bar{\mu}\bar{\beta}\Big(I(Q_{YZ}^*,Q_{U|YZ}^*)-I(Q_Z^*,Q_{Z|U}^*)\Big)\\*
&\qquad-\bar{\mu}\beta I(Q_Z^*,Q^*_{U|Z})+\mu\mathbb{E}_{Q^*_{X\hatX}}[d(X,\hatX)]\\
\nn&=\bar{\mu}\bar{\beta}I(Q_{YZ}^*,Q_{U|YZ}^*)-\bar{\mu}I(Q_Z^*,Q_{U|Z}^*)\\*
&\qquad+\mu\mathbb{E}_{Q^*_{X\hatX}}[d(X,\hatX)].
\end{align}
Applying Taylor expansions to $\Omega^{(\alpha,\mu,\beta,\theta)}(Q_{XYZU\hatX})$, we conclude that
\begin{align}
\nn&\Omega^{(\alpha,\mu,\beta,\theta)}(Q_{XYZU\hatX})\\*
&\leq \theta\mathbb{E}_{Q_{XYZU\hatX}}\Big[\omega^{(\alpha,\mu,\beta)}_{Q_{XYZU\hatX}}(x,y,z,\hatx|u)\Big]\label{readyuse2}.
\end{align}
Combining \eqref{def:calPlossy} and \eqref{def:calQlossy}, we conclude that $\calQ\supseteq\calP$. Thus, invoking \eqref{def:omegaamblambdaP} and \eqref{readyuse2}, we have
\begin{align}
\nn&\Omega^{(\alpha,\mu,\beta,\theta)}(Q_{XYZU\hatX})\\*
&=\min_{Q_{XYZU\hatX}\in\calQ}
\Omega^{(\alpha,\mu,\beta,\theta)}(Q_{XYZU\hatX})\\
&\leq \min_{Q_{XYZU\hatX}\in\calP}\theta\mathbb{E}_{Q_{XYZU\hatX}}\Big[\omega^{(\alpha,\mu,\beta)}_{Q_{XYZU\hatX}}(x,y,z,\hatx|u)\Big]\\
\nn&=\min_{Q_{XYZU\hatX}\in\calP}
\theta\alpha \Big(\bar{\mu}\bar{\beta}I(Q_{YZ},Q_{U|YZ})-\bar{\mu}I(Q_Z,Q_{U|Z})\\*
&\qquad\qquad\qquad\qquad+\mu\mathbb{E}_{Q_{X\hatX}}[d(X,\hatX)]\Big)\\
\nn&\leq \theta\alpha\Big(\bar{\mu}\bar{\beta}I(Q_{YZ}^*,Q_{U|YZ}^*)-\bar{\mu}I(Q_Z^*,Q_{U|Z}^*)\\*
&\qquad\qquad+\mu\mathbb{E}_{Q^*_{X\hatX}}[d(X,\hatX)]\Big)\\
&\leq \theta\alpha\Big(\bar{\mu}\bar{\beta}R^{\rmc}-\bar{\mu}R^{\rmi}+\mu D\Big)\label{readyuse3}.
\end{align}
Thus, combining \eqref{def:lossyFalphamubetalambda} and \eqref{readyuse3}, we obtain that
\begin{align}
F^{(\alpha,\mu,\beta,\theta)}
&=\frac{\Omega^{(\alpha,\mu,\beta,\theta)}-\theta\alpha\Big(\bar{\mu}(\bar{\beta} R^{\rmc}-R^{\rmi})+\mu D\Big)}{1+5\theta+\theta\alpha\bar{\mu}(3-\beta)}\\*
&\leq 0\label{fnonpositive}.
\end{align}
On the other hand, note that
\begin{align}
\lim_{\theta\to 0}F^{(\alpha,\mu,\beta,\theta)}=0\label{fnonnegetive}.
\end{align}
Hence, combining \eqref{fnonpositive} and \eqref{fnonnegetive}, we conclude that
\begin{align}
F
&=\sup_{(\alpha,\theta,\mu,\beta)\in\bbR_+^2\times [0,1]^2}F^{(\alpha,\mu,\beta,\theta)}=0.
\end{align}

\subsection{Proof of  the Extensions for the Biometrical Identification Problem}
\label{proofbioext}

\subsubsection{Exponent of the Probability of Correct Decoding}

Specializing Lemma \ref{fblidlossy} to the biometrical problem (using $\calA_5(w)$ only), we obtain that for any decoding function $g^{(n)}$ and any $\eta\geq 0$,
\begin{align}
\rmP_\rmc^{(n)}(g^{n})
\nn&\leq \Pr\Bigg\{\frac{1}{n}\sum_{i=1}^n \log\frac{P_{Z|Y}(Z_i|Y_i)}{P_Z(Z_i)}\geq R^\rmi-\eta\Bigg\}\\*
&\qquad+\exp(-n\eta)\label{bio2}.
\end{align}
Furthermore, adopting the one-shot technique in \cite{yassaee2013technique}, we conclude that there exists a decoding function $g^{(n)}$ and $\gamma\geq 0$ such that 
\begin{align}
\rmP_\rmc^{(n)}(g^{n})
\nn&\geq \frac{1}{1+\exp(-n\gamma)}\\*
&\times\Pr\Bigg\{\frac{1}{n}\sum_{i=1}^n \log\frac{P_{Z|Y}(Z_i|Y_i)}{P_Z(Z_i)}\geq R^\rmi+\gamma\Bigg\}.\label{bio1}
\end{align}

Due to the memoryless of the source and channel, we have that $(X_i,Y_i,Z_i)$ is an i.i.d. sequence, distributed according to $P_X\times P_{Y|X}\times P_{Z|X}$. Specializing \eqref{bio1} with $\gamma=0$ and using Cram\'er's theorem~\cite[Theorem 2.2.3]{dembo2009large}, we obtain   
\begin{align}
\nn&\rmP_\rmc^{(n)}(g^{n})\\*
\nn&\geq \frac{1}{2}\exp\Bigg(-n \sup_{\lambda>0}\Bigg\{\lambda R^\rmi\\*
&\qquad\qquad\quad-\log \mathbb{E}\left[\exp\left(\lambda\log\frac{P_{Y|Z}(Y|Z)}{P_Z(Z)}\right)\right]\Bigg\}\Bigg)\\
&=\frac{1}{2}\exp\Bigg(-n\sup_{\lambda>0}\Bigg\{\lambda R^\rmi-\log\mathbb{E}\left[\log\frac{P_{Z|Y}^\lambda(Z|Y)}{P_Z^\lambda(Z)}\right]\Bigg\}\Bigg).
\end{align}
Combining~\eqref{bio2} and Lemma \ref{cramer}, 
\begin{align}
\nn&\rmP_\rmc^{(n)}(g^{n})\\*
\nn&\leq\exp(-n\eta)+\exp\Bigg(-n\lambda(R^\rmi-\eta)\\*
&\qquad\quad+\log\mathbb{E}\left[\exp\left(\lambda\sum_{i=1}^n \log\frac{P_{Z|Y}(Z_i|Y_i)}{P_Z(Z_i)}\right)\right]\Bigg)\\
\nn&=\exp(-n\eta)+\exp\Bigg(-n\lambda(R^\rmi-\eta)\\*
&\qquad\qquad+n\log\mathbb{E}\left[\exp\left(\lambda \log\frac{P_{Z|Y}(Z|Y)}{P_Z(Z)}\right)\right]\Bigg)\label{iidxyz},
\end{align}
where \eqref{iidxyz} follows since $(X_i,Y_i,Z_i)$ is an i.i.d.\ sequence.

Choose $\eta$ such that
\begin{align}
\eta=\lambda (R^\rmi-\eta)-\log\mathbb{E}\left[\exp\left(\lambda \log\frac{P_{Z|Y}(Z|Y)}{P_Z(Z)}\right)\right].
\end{align}
In other words,
\begin{align}
\eta=\frac{\lambda R^\rmi-\log\mathbb{E}\big[\exp \big(\lambda \log\frac{P_{Z|Y}(Z|Y)}{P_Z(Z)}\big)\big]}{1+\lambda}.
\end{align}
With this choice of $\eta$, we obtain that 
\begin{align}
&\rmP_\rmc^{(n)}(g^{n})\leq 2\exp(-n\eta)\\*
&\! =\!  2\exp\Bigg(  \! -\!  n\sup_{\lambda>0}\! \frac{\lambda R^\rmi\! -\! \log\mathbb{E}\big[\exp\big(\lambda \log\frac{P_{Z|Y}(Z|Y)}{P_Z(Z)}\big)\big]}{1+\lambda}\Bigg).
\end{align}

\subsubsection{Moderate Deviations Constant in the Strong Converse Regime \eqref{theorem:mdcsc}}
\label{mdcsc}

Let
\begin{align}
nR^\rmi=\log M:=nI(P_Y,P_{Z|Y})+n\xi_n.
\end{align}

Invoking \eqref{bio1} and choosing $\gamma=\zeta\xi_n$ for some $\zeta>0$, we conclude that there exists a sequence of decoding function $g^{(n)}$ such that 
\begin{align}
\rmP_\rmc^{(n)}(g^{n})\nn
\nn&\geq \frac{1}{1+\exp(-n\zeta\xi_n)}\Pr\Bigg\{\frac{1}{n}\sum_{i=1}^n \log\frac{P_{Z_i|Y_i}(Z_i|Y_i)}{P_Z(Z_i)}\\*
&\qquad\qquad\quad\geq I(P_Y,P_{Z|Y})+(1+\zeta)\xi_n\Bigg\}.
\end{align}
Using the moderate deviations theorem~\cite[Theorem~3.7.1]{dembo2009large}, we obtain  
\begin{align}
&\lim_{n\to\infty}- \frac{1}{n\xi_n^2} \log \Pr\bigg\{\frac{1}{n}\sum_{i=1}^n \log\frac{P_{Z|Y}(Z_i|Y_i)}{P_Z(Z_i)} \nn\\*
&\quad\geq I(P_Y,P_{Z|Y})+(1+\zeta)\xi_n\bigg\}  =\frac{(1+\zeta)^2}{2\rmV}\label{usemdc}.
\end{align}
Therefore, we have 
\begin{align}
\liminf_{n\to\infty}\frac{-\log \rmP_\rmc^{(n)}(g^{(n)})}{n\xi_n^2}
&\le\frac{(1+\zeta)^2}{2\rmV}\label{useusemdc}.
\end{align}

On the other hand, for any decoding function and any $M$ such that
\begin{align}
nR^\rmi=\log M=I(P_Y,P_{Z|Y})+n\xi_n,
\end{align}
invoking \eqref{bio2} and choosing $\eta=(1+\zeta)\xi_n$, we obtain that
\begin{align}
\rmP_\rmc^{(n)}(g^{n})
\nn&\leq \exp(-n\zeta\xi_n)+\Pr\Bigg\{\frac{1}{n}\sum_{i=1}^n \log\frac{P_{Z|Y}(Z_i|Y_i)}{P_Z(Z_i)}\\*
&\qquad\qquad\quad\geq I(P_Y,P_{Z|Y})+(1-\zeta)\xi_n\Bigg\}\label{mdcwhodomi}.
\end{align}
Similar as \eqref{usemdc}, we conclude that the first term in \eqref{mdcwhodomi} is of the order $\exp(-n\xi_n^2\frac{(1-\zeta)^2}{2\rmV})$ and thus dominates the right hand side of \eqref{mdcwhodomi} for sufficiently large $n$. Hence,
\begin{align}
\limsup_{n\to\infty}\frac{-\log \rmP_\rmc^{(n)}(g^{(n)})}{n\xi_n^2}
&\geq\frac{(1-\zeta)^2}{2\rmV}.
\end{align}
The proof is complete by letting $\zeta\to 0$.

\subsubsection{Moderate Deviations Constant \eqref{theorem:mdc}}
Invoking \eqref{bio2}, we obtain that for any decoding function $g^{(n)}$, we have
\begin{align}
\rmP_\rme^{(n)}(g^{(n)})
&=1-\rmP_\rmc^{(n)}(g^{(n)})\\
\nn&\geq \Pr\Bigg\{\frac{1}{n}\sum_{i=1}^n \log\frac{P_{Z|Y}(Z_i|Y_i)}{P_Z(Z_i)}< R^\rmi-\eta\Bigg\}\\*
&\qquad-\exp(-n\eta)\label{bio22}.
\end{align}

Invoking \eqref{bio1}, we obtain that there exists a decoding function $g^{(n)}$ such that 
\begin{align}
\nn&\rmP_\rme^{(n)}(g^{(n)})\\
&\leq 1-\frac{\Pr\left\{\frac{1}{n}\sum_{i=1}^n \log\frac{P_{Z|Y}(Z_i|Y_i)}{P_Z(Z_i)}\geq R^\rmi+\gamma\right\}}{1+\exp(-n\gamma)}\\
&\nn\leq \Pr\Bigg\{\frac{1}{n}\sum_{i=1}^n \log\frac{P_{Z|Y}(Z_i|Y_i)}{P_Z(Z_i)}< R^\rmi+\gamma\Bigg\}\\*
&\qquad+\left(1-\frac{1}{1+\exp(-n\gamma)}\right)\\
\nn&\leq \Pr\Bigg\{\frac{1}{n}\sum_{i=1}^n \log\frac{P_{Z|Y}(Z_i|Y_i)}{P_Z(Z_i)}< R^\rmi+\gamma\Bigg\}\\*
&\qquad+\exp(-n\gamma)\label{bio11}.
\end{align}

The rest of the proof is similar to that in Appendix \ref{mdcsc} by invoking \eqref{bio22}, \eqref{bio11} with properly chosen  $\gamma$ and $\eta$ as well as applying the moderate deviations theorem.

\subsubsection{Second-order Asymptotics}: The result in Theorem \ref{bioext3} follows by i) letting $\gamma=\eta=\frac{\log n}{n}$ and ii) applying  the Berry-Esseen theorem to \eqref{bio2} and \eqref{bio1} or to \eqref{bio22} and \eqref{bio11}.

\subsection*{Acknowledgments} 
The authors would like to thank Prof.\ Yasutada Oohama for providing updated versions of his manuscripts~\cite{oohama2016new,oohama2016wynerziv,oohama2015wak}.

\bibliographystyle{IEEEtran}
\bibliography{IEEEfull_lin}

\end{document}